\documentclass[11pt]{article}
    
    \usepackage[utf8]{inputenc}
\usepackage[english]{babel}
\usepackage{amsfonts}
\usepackage{amssymb}
\usepackage{enumitem}
\usepackage{bbm}
\usepackage{mathtools}
\usepackage{subcaption}
\usepackage{booktabs}
\usepackage{setspace}
\usepackage{graphicx}
\onehalfspacing
\usepackage{fullpage}

\usepackage{soul}
\usepackage[hidelinks]{hyperref} 
\usepackage{natbib}
\bibliographystyle{apalike}
\usepackage{amsthm}
\usepackage{longtable}
\usepackage{rotating}
\usepackage{caption}
\usepackage{xcolor}
\usepackage{lscape}

    \renewcommand{\P}{\mathop{}\!\textnormal{P}}
    \newcommand{\sP}{\mathop{}\!\mathbb{P}_n}
    \newcommand{\E}{\mathop{}\!\textnormal{E}}
    \newcommand{\sE}{\mathop{}\!\mathbb{E}_n}
    \newcommand{\Cov}{\mathop{}\!\textnormal{Cov}}
    \newcommand{\sCov}{\mathop{}\!\mathbb{C}\textnormal{ov}_n}
    \newcommand{\Var}{\mathop{}\!\textnormal{Var}}
    \newcommand{\sVar}{\mathop{}\!\mathbb{V}\textnormal{ar}_n}

    \newcommand{\N}{\mathcal{N}}

    \newcommand{\Oh}{\mathop{}\!\mathcal{O}_{\P}}
    
    \newcommand{\cd}{\stackrel{d}{\longrightarrow}}
    \newcommand{\cp}{\stackrel{p}{\longrightarrow}}

    \newcommand{\0}{\mathbf{0}}


    \newcommand{\Ind}{\mathbbm{1}}  
    \newcommand{\peq}{\mathrel{\phantom{=}}}


    \newcommand{\orth}{\perp}

    \newcommand{\Z}{\mathcal{Z}}  
    \newcommand{\err}{\varepsilon}  


    \newtheorem{proposition}{Proposition}
    
    \newtheorem{corollary}{Corollary}

    \newtheorem{lemma}{Lemma}

    \addto\extrasenglish{%

    }


    \makeatletter
    \def\blfootnote{\xdef\@thefnmark{}\@footnotetext}
    \makeatother
    \usepackage[hang,flushmargin]{footmisc}

    \usepackage[colorinlistoftodos,color=yellow!20!white]{todonotes}
	\presetkeys{todonotes}{inline}{}

	\usepackage{marginnote}
	
	\usepackage[capposition=top, margins=centering]{floatrow}

    \usepackage{versions}
    \includeversion{graphstables}
    \includeversion{vappendix}

	\author{Stephen Coussens \\ Columbia University \and Jann Spiess \\ Stanford University}
    \title{
		Improving Inference from Simple Instruments\\
        through Compliance Estimation
    }

    \date{ August 6, 2021 \\
            \vspace{25pt}
            \textit{Comments welcome!}
            }

\begin{document}

    \maketitle
    \begin{abstract}
    Instrumental variables (IV) regression is widely used to estimate causal treatment effects in settings where receipt of treatment is not fully random, but there exists an instrument that generates exogenous variation in treatment exposure. 
	While IV can recover consistent treatment effect estimates, they are often noisy.
    Building upon earlier work in biostatistics \citep{Joffe2003-uq} and relating to an evolving literature in econometrics \citep[including][]{abadie2019instrumental,Huntington-Klein,Borusyak2020-xu}, we study how to improve the efficiency of IV estimates by exploiting the predictable variation in the strength of the instrument.
    In the case where both the treatment and instrument are binary and the instrument is independent of baseline covariates,
    we study weighting each observation according to its estimated compliance (that is, its conditional probability of being affected by the instrument), which we motivate from a (constrained) solution of the first-stage prediction problem implicit to IV.
    The resulting estimator can leverage machine learning to estimate compliance as a function of baseline covariates.
    We derive the large-sample properties of a specific implementation of a weighted IV estimator in the potential outcomes and local average treatment effect (LATE) frameworks,
    and provide tools for inference that remain valid even when the weights are estimated nonparametrically.
    With both theoretical results and a simulation study, we demonstrate that compliance weighting meaningfully reduces the variance of IV estimates when first-stage heterogeneity is present, and that this improvement often outweighs any difference between the compliance-weighted and unweighted IV estimands.
    These results suggest that in a variety of applied settings, the precision of IV estimates can be substantially improved by incorporating compliance estimation.
\end{abstract}
    \blfootnote{Stephen Coussens, Mailman School of Public Health, Columbia University, \href{mailto:sc4501@cumc.columbia.edu}{sc4501@cumc.columbia.edu}. Jann Spiess, Graduate School of Business, Stanford University, \href{mailto:jspiess@stanford.edu}{jspiess@stanford.edu}. We thank Kirill Borusyak, Amanda Kowalski, Greg Lewis, and Ashesh Rambachan for helpful comments.}

    \newpage

    \section{Introduction}

Instrumental variables (IV) regression can recover consistent estimates of treatment effects in quasi-experimental studies and in randomized controlled trials (RCTs) where treatment randomization is infeasible, but there is at least some exogenous variation in treatment exposure generated by an instrument. Yet in practice, IV estimates are often noisy. With this in mind, in this paper we discuss how the precision of IV estimates can be improved when there is predictable variation in the strength of the instrument, provide valid inference for a specific implementation that can leverage machine learning algorithms, and document potential precision improvements in a simulation study.

IV estimation frequently involves the use of a two-stage estimator: the first stage predicts variation in treatment exposure using the instrumental variable(s), while the second stage regresses the outcome of interest on the first-stage predicted values.
Better prediction of exogenous variation in treatment exposure can therefore improve the precision of treatment effect estimates.
The proliferation of powerful machine learning algorithms designed to make predictions from high-dimensional data has thus spawned efforts to improve IV precision by combining a large number of relatively weak instruments into a single, strong instrument \citep[e.g.][]{Belloni2012-nu,Hansen2014-wy}. Yet in practice, valid instruments are typically hard to find, so applied research is frequently conducted with only a single simple instrument that takes on a limited number of values -- such as when randomized assignment to either a treatment group or a control group is used as a binary instrument for treatment.%
\footnote{
	As we explore in this paper, even when only a single simple instrument is available, we can often create many more by interacting this instrument with covariates. Methods for high-dimensional instruments therefore remain applicable even in the simple-instrument case.
	In contrast to the literature on high-dimensional instruments, we explicitly consider the specific structure of instruments created by the interaction of a simple instrument with a function of covariates.
}

We consider an IV estimator for simple instruments that improves efficiency by weighting each observation according to its estimated conditional probability of being a complier (that is, its probability of being affected by the instrument) in the case where the instrument is independent of baseline covariates.
The underlying idea of weighting observations in instrumental variables estimation according to their estimated compliance is not new -- in the biostatistics literature, it can be traced back to at least \cite{Joffe2003-uq}. Moreover, the implementation that we consider is close to one considered by \cite{Huntington-Klein} as a tool to reduce finite-sample bias, and is similar to proposals by \cite{abadie2019instrumental} and \cite{Borusyak2020-xu} in linear IV models.
However, compliance-weighted estimators have seen very limited adoption in applied research.%
\footnote{
	Perhaps the most closely related applied research can be found in \citet{einav2018long}, which reduces the variance of IV estimates by allowing the first stage coefficient to vary across quintiles of estimated compliance.} 

In this paper, we aim to fill gaps between the idea of compliance weighting and its implementation by
showing how compliance weighting arises naturally from the (constrained) solution of the first-stage prediction problem implicit to IV;
by deriving the large-sample properties of a weighted IV estimator in the familiar potential outcomes \citep{rubin1977assignment} and local average treatment effect (LATE) \citep{imbens1994identification} frameworks; 
and by providing tools for valid inference that remain applicable when the first stage is estimated nonparametrically, allowing the researcher to leverage machine learning techniques to flexibly estimate compliance as a function of potentially high-dimensional covariates.

We begin by pointing out that when the instrument is simple (specifically, binary), 
then constructing efficient instruments by interacting them with covariates represents a prediction problem under constraints.
Namely, we want to predict treatment from instruments and covariates, while also ensuring that the instrument remains valid.
In recent work, \citet{Borusyak2020-xu} tackle this challenge by  recentering an unconstrained prediction of treatment from instruments and covariates so that only the variation in treatment that is correlated with the instrument remains.
We instead characterize interacted instruments that are valid by construction and do not require recentering, provided that the covariates and the instrument are independent.
We show that choosing instruments from this constrained class is equivalent to choosing weight functions that depend on covariates only.
This approach is particularly applicable in high-dimensional data or when using nonparametric estimation techniques, in which case recovering valid instruments from more complex unconstrained predictions may be inefficient.

Next, we show that finding the efficient weights implicit in the IV estimator with first-stage interactions requires solving a heterogeneous treatment effect estimation problem.
In a homogeneous and homoscedastic baseline case, the efficient weight for IV estimation under first-stage monotonicity is known to be the probability of compliance \citep{Joffe2003-uq}, which is the causal effect of the instrument on treatment.
Solving for the efficient instrument in this case is therefore equivalent to estimating the conditional average treatment effect of the instrument on treatment (i.e., the conditional probability of compliance).

In order to briefly illustrate the important role that compliance estimation plays in reducing the variance of IV estimators,
consider a simple regression on a sample of size $N$ with binary instrument and treatment, where treatment effects are homogeneous, variance is homoscedastic, and the average probability of compliance (i.e. the first-stage coefficient) is $\alpha < 1$.
In this case, the variance of the standard two-stage least-squares (2SLS) estimator is $\frac{\sigma^2}{N p(1-p) \alpha^2}$, where $\sigma^2$ is the variance of the regression residual in the reduced-form equation that relates the outcome to the instrument, and $p$ is the instrument mean.
Now suppose that it is possible to perfectly identify all $\alpha N$ compliers in this sample and estimate the treatment effect using this smaller yet perfectly compliant sub-sample. Under this hypothetical scenario, the variance of the 2SLS estimator would fall to $\frac{\sigma^2}{N p(1-p) \alpha}$, a sizable reduction that would be equivalent to increasing the imperfectly compliant sample size from $N$ to $\frac{N}{\alpha}$.
While in practice we will not generally be able to perfectly identify compliance -- an attribute that cannot be directly observed -- we can nonetheless estimate the probability of compliance conditional on covariates \citep{follmann2000effect}.

Having argued in favor of compliance-weighting in a simple baseline case, we relax the assumptions of homoscedasticity and treatment effect homogeneity, and characterize the asymptotic distribution of a weighted IV estimator in this more general setting. We show that weighting the IV estimator changes the estimand when treatment effects are heterogeneous, and alters the asymptotic variance under heteroscedasticity.  We point out that the weighted estimand can be interpreted as a convex combination of conditional LATEs, and then characterize the conditions under which this weighted IV estimator improves precision relative to the unweighted IV estimator, even when the unweighted unconditional LATE is the parameter of interest. We also consider an extension of this compliance-weighting approach that can further improve precision by also accounting for second-stage heteroscedasticity.

We then discuss an implementation of compliance-weighted IV estimation using machine learning. 
If the first stage can be nonparametric, then the causal problem of estimating compliance is different from an unconstrained prediction problem, in that naive predictions from covariates and the instrument would yield inefficient estimates of treatment effects.
Like \cite{Huntington-Klein}, we therefore study the use of recent tools from the intersection of causal inference and machine learning that are specifically designed to estimate conditional average treatment effects.
Our implementation uses a cross-fitting scheme that repeatedly splits the sample to avoid biases from overfitting and enable valid inference even for high-dimensional estimation tools.

We go on to address the issue of inference when weights are estimated through such nonparametric procedures, which is a non-standard problem since the estimated weights affect the asymptotic distribution.
In order to keep the assumptions placed on nonparametric first-stage estimation to a minimum, our framework considers estimation relative to an empirically-weighted estimand that varies with each sample.
In this framework, we demonstrate that the estimator is asymptotically Normal, and that commonly-used approaches to estimating robust standard errors will provide for valid inference.
These results stem from a general result on the convergence of empirically weighted sums that relies on sample splitting and may be of independent interest.

In simulations, we demonstrate that augmenting simple instruments through compliance estimation meaningfully reduces the variance of IV estimates when first-stage heterogeneity is present, and that these improvements typically outweigh biases relative to the unweighted IV estimand or ineffciencies from heteroscedastic errors.
We present a simulation study that systematically varies treatment effects, residual errors, first-stage heterogeneity, and weight estimation methods, comparing the performance of compliance-weighted IV to that of a baseline 2SLS estimator. Our results suggest that in a variety of scenarios, applied researchers stand to substantially improve the precision of their IV estimates through the use of compliance weights.

\paragraph{Related literature.}
This work is most closely related to research in biostatistics and econometrics that proposes weighting or selection by compliance \citep{Joffe2003-uq,coussens2018,abadie2019instrumental,Huntington-Klein}, interacts instruments with covariates \citep[e.g.][]{Bond2007-fd,Borusyak2020-xu}, employs complier classification to analyze and improve the external validity of IV estimates \citep{Kennedy2018-fe}, or explicitly discusses the role of exogenous covariates in linear IV models \citep{Chen2020-xg}.

\citet{Joffe2003-uq} propose weighting observations in an experiment according to their probability of compliance. \citet{abadie2019instrumental} discard groups of observations with low estimated compliance, and employ an interacted IV estimator among the remaining groups that implicitly weights each group according to compliance. Their approach considers a grouped linear simultaneous equations model with an increasing number of groups and a higher-order expansion of second-stage mean-squared error.
\citet{Huntington-Klein} independently considers a variety of compliance grouping and weighting schemes in an effort to tackle a different problem: the small-sample bias of IV estimators in a linear model with heterogeneous first- and second-stage effects.
Like us, \citet{Huntington-Klein} characterizes estimands in terms of weighted causal effects, and uses the causal forest \citep{wager2018estimation} to estimate compliance weights.
The estimator that we consider is similar to one of \citeauthor{Huntington-Klein}'s~(\citeyear{Huntington-Klein}) proposed implementations, but differs primarily in terms of sample splitting, which allows us to derive its large-sample properties and provide formal justifications for valid inference.

Our approach also relates to recent, independent work by \citet{Borusyak2020-xu}, who consider non-random exposure to exogenous variation.
Their work tackles the more complex task of parametrically constructing valid recentered instruments when exposure is itself non-random.
Despite these different approaches, their construction based on \citet{chamberlain1987asymptotic} effectively yields a similar efficient, compliance-weighted IV estimator when considered under our stronger exogeneity assumptions, and the authors also offer a supplemental interpretation in a potential outcomes framework similar to our setup in \autoref{sec:generalbinary}.

Finally, \citet{Chen2020-xg} independently consider the role of exogenous covariates in the construction of instruments using machine learning, and provide tools for inference that remain applicable in that case. Also invoking the characterization of efficient instruments in \citet{chamberlain1987asymptotic}, \citet{Chen2020-xg} use a cross-fitting construction to implement the optimal instrument, and characterize estimands in terms of weighted averages of marginal treatment effects.
Unlike us, the authors do not assume that covariates and instrumental variables are independent, and specifically tackle complications that arise from the presence of covariates in that case.

\paragraph{Contribution.}
Relative to existing work on leveraging compliance information, we formally motivate compliance weighting as the solution to a constrained, fully nonparametric first-stage prediction problem.
We then interpret the estimand and analyze the properties of the estimator in a potential outcomes framework that does not impose any linearity assumptions.
As our main technical contribution,
we derive large-sample limits and propose inference on weighted local average treatment effects for a specific cross-fitting implementation that remain valid with arbitrary treatment effect heterogeneity and when the first-stage estimation of compliance is fully nonparametric.

Despite methodological advances in leveraging first-stage compliance information, we are not aware of widespread adoption of compliance-weighting approaches in applied research. 
Instead, attempts to harness first-stage heterogeneity have typically been limited to ad-hoc analysis of subgroups with higher estimated compliance, a practice that tends to result in inflated IV estimates and $t$-statistics \citep{abadie2019instrumental}.
We hope that we can contribute to the adoption of such methods by providing valid inference when using estimated compliance in IV models.

\section{IV Estimation with Oracle Augmented Instruments}
\label{sec:oracle}

How can we exploit covariates to improve instrumental variable estimation using only a single, simple instrument? To fix ideas, consider the simple homogeneous linear instrumental variables model
\begin{align}
    \label{eqn:baselinesimple}
    Y &= \psi + D \tau + \varepsilon
    &
    \E[\varepsilon] &= 0
    &
    Z \orth \varepsilon, X
\end{align}
where both the outcome $Y$ and the treatment variable $D$ are real, univariate random variables, and the univariate, real-valued instrument $Z$ is independent of the error term $\varepsilon$ and the regressors $X$.\footnote{Although independence is a strong assumption, this assumption is often fulfilled by design, e.g., using random assignment in an RCT as an instrument for treatment.} Since $\Cov(\varepsilon,Z) = 0$, if $\Cov(D,Z) \neq 0$ then the standard IV estimator
\begin{align*}
    \hat{\tau}^{\text{IV}} = \frac{\sCov(Y,Z)}{\sCov(Y,D)},
\end{align*}
where we write $\sCov(\cdot,\cdot)$ for the empirical covariance in an iid random sample of size $n$ from that distribution,
is a $\sqrt{n}$-consistent and asymptotically normal estimator of $\tau$ under standard regularity conditions. In effect, this estimator divides the ``reduced-form'' OLS coefficient in the linear regression of $Y$ on $Z$ by the ``first-stage'' OLS coefficient in the linear regression of $D$ on $Z$; it is also equivalent to the 2SLS estimator that first regresses $D$ on $Z$ to form predicted values $\hat{D}$ and then regresses $Y$ on $\hat{D}$. However, this estimator does not make use of covariates $X$.

One commonly practiced way to incorporate covariates $X$ into IV estimation is as control variables in order to reduce the residual variation in $Y$. For example, if we assume that
\begin{align*}
    Y &= \psi_2 + D \tau + X'\varphi_2 + \varepsilon_2,
    &
    D &= \psi_1 + Z \pi + X'\varphi_1 + \varepsilon_1,
\end{align*}
then we can reduce the variance of the IV estimator by residualizing with respect to $X$ first. Given the exogeneity of $Z$, this is not required for consistency, but can nonetheless improve efficiency.

In this paper, we pursue a different, complementary use of pre-assignment covariates: in addition to controlling for covariates, we can also use them to construct new instruments by combining an existing instrument with these variables.
In \autoref{sec:augmenting}, we relate such interacted instruments to the solution of a constrained prediction problem in a homogeneous and homoscedastic baseline case, which we solve for binary instruments.
In \autoref{sec:weighted}, we interpret the resulting estimator as a weighting estimator.
Since in many applications researchers may not wish to assume homogeneous treatment effects or homoscedasticity, in \autoref{sec:generalbinary} we discard these assumptions and discuss the properties of this weighted IV estimator in a potential outcomes framework.
In \autoref{sec:generallinear}, we discuss extensions of these results to non-binary instruments.

\subsection{Augmenting simple instruments with covariates}
\label{sec:augmenting}

In this section, we consider the construction of augmented instruments by interacting them with functions of pre-assignment covariates \citep[e.g.][]{Bond2007-fd}. When covariates $X$ and error term $\varepsilon$ are independent of the instrument as in \autoref{eqn:baselinesimple}, then there exist transformations of the form $h(Z,X)$ that also fulfill $\Cov(\varepsilon,h(Z,X)) = 0$ and can thus serve as valid instruments. 
This suggests the estimator
\begin{align*}  
    \hat{\tau}^h = \frac{\sCov(Y,h(Z,X))}{\sCov(D,h(Z,X))}.
\end{align*}
This augmented IV estimator $\hat{\tau}^h$, estimated from an iid sample, is consistent for the treatment effect $\tau$ if $\Cov(D,h(Z,X)) \neq 0$ holds and some mild regularity conditions guarantee the existence of sufficient population moments.

How would one choose among instruments $h(Z,X)$?
To fix ideas, we make the additional homogeneity and homoscedasticity assumption that the residuals in \autoref{eqn:baselinesimple} a.s.\ fulfill $\E[\varepsilon^2|X] = \sigma^2$. This assumption (which we drop in \autoref{sec:generalbinary} below) is restrictive -- it limits treatment effects, conditional baseline expectations, and residuals.
Under this assumption, 
the asymptotic distribution of $\hat{\tau}^h$ is
\begin{align*}
    \sqrt{n} (\hat{\tau}^h - \tau)
    &\cd
    \N\left(0,V^h\right),
    &
    V^h &= \sigma^2 \frac{\Var(h(Z,X))}{\Cov^2(D, h(Z,X))},
\end{align*}
as $n \rightarrow \infty$.
The asymptotic variance in this case has a particularly intuitive interpretation.
Indeed, if we normalize $h(Z,X)$ to $\E[h(Z,X)] = \E[D]$ and $\Var(h(Z,X)) = \Cov(D,h(Z,X))$ 
--
which is without loss since $\hat{\tau}^h$ and thus also $V^h$ are invariant to shift and scale transformations%
\footnote{Formally, we could define   
    \begin{align*}
        h^*(Z,X) &= \E[D] + \frac{h(Z,X) - \E[h(Z,X)]}{\Var(h(Z,X))} \Cov(h(Z,X), D),
        &
        R^2_h &= 1 - \frac{\E[(D - h^*(Z,X))^2]}{\Var(D)},
    \end{align*}
    which is the best linear predictor of $D$ from $h(Z,X)$.
}
--
then
\begin{align}
    \label{eqn:predictionanalogy}
    V^h
    &= \frac{\sigma^2}{\Var(D) \: R_h^2},
    &
    R_h^2
    &= 1 - \frac{\E[(D - h(Z,X))^2]}{\Var(D)},
\end{align}
where $R_h^2$ is the (population) coefficient of determination in predicting $D$ using $h(Z,X)$.

This expression for the asymptotic variance connects efficient IV estimation to optimal prediction. \autoref{eqn:predictionanalogy} shows that the asymptotic variance $V^h$ is strictly increasing in first stage mean-squared error. Moreover, $V^h$ is always weakly greater than the variance that would be obtained by regressing the outcome $Y$ directly on treatment $D$, a generally infeasible benchmark that could only be achieved if $D=h(Z,X)$ a.s. This scenario would imply full compliance with the instrument $h(Z,X)$, which of course would obviate the need for IV estimation in the first place.

The correspondence between low prediction loss and low variance of the implied IV estimator in \autoref{eqn:predictionanalogy} has motivated the use of powerful machine-learning prediction algorithms to estimate instruments of the form $h(Z,X) = f(Z)$, where $Z$ is often assumed to be high-dimensional, for example using the LASSO \citep{Belloni2012-nu} or ridge regression \citep{Hansen2014-wy}.
This literature has primarily focused on minimizing $V^h$ by solving the standard prediction problem of minimizing $\E[(D - h(Z,X))^2]$ through transformations of complex (or many) instruments $Z$. In applied research, where valid instruments are often simple and scarce while covariates are comparably rich and plentiful, these results can still be applied by considering instruments obtained by interaction with the covariates. Here, we explicitly consider such interacted instruments $h(Z,X)$ for which $Z$ is simple, but $X$ may be high-dimensional, and exploit the specific structure of $h(Z,X)$ in that case.

When we consider augmented instruments of the form $h(Z,X)$ to minimize the variance $V^h$ in \autoref{eqn:predictionanalogy}, we face an additional challenge:
we have to ensure that the instrument remains valid -- that is,
we want to solve (in the population distribution)
\begin{align*}
    &\min \E[(D - h(Z,X))^2]
    &
    &\text{s.t. } \Cov(\varepsilon, h(Z,X)) = 0.
\end{align*}
When is the constraint fulfilled? For a simple real-valued instrument $Z$ with $\E[Z] = p$, setting $h(Z,X) = w(X) \: (Z-p)$ for some conformal $w(\cdot)$ is sufficient.
Indeed, if $Z$ is binary, then (up to trivial transformation) any instrument that ensures orthogonality by construction has this form:
\begin{proposition}[Characterization of augmented extensions of a binary instrument]
    \label{prop:binarycharacterization}
    Assume that $Z$ is binary with $\E[Z]=p \in (0,1)$.
    Consider some conformal function $h(\cdot,\cdot)$.
    Then $\Cov(\varepsilon,h(Z,X)) = 0$ for all distributions of $(X,\varepsilon)$ with $\Var(h(Z,X)), \Var(\varepsilon) < \infty$ and $(X,\varepsilon) \orth Z$ if and only if $h(z,x) = \psi + w(x) \: (z-p)$ for some conformal function $w(\cdot)$ and constant $\psi$.
\end{proposition}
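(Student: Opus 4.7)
The plan is to prove the two directions separately. The \emph{if} direction is a direct computation exploiting $(X,\varepsilon)\orth Z$ together with $\E[Z-p]=0$. The \emph{only if} direction uses that a function of a binary variable is linear in that variable to decompose $h(Z,X)$ uniquely into a $Z$-free part plus a multiple of $(Z-p)$, and then leverages the quantifier over all distributions of $(X,\varepsilon)$ to force the $Z$-free part to be a constant function.

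For the \emph{if} direction, assume $h(z,x) = \psi + w(x)(z-p)$. Then $\Cov(\varepsilon, h(Z,X)) = \Cov(\varepsilon, w(X)(Z-p))$. By $(X,\varepsilon)\orth Z$, the variable $Z-p$ is independent of $\varepsilon w(X)$, so $\E[\varepsilon w(X)(Z-p)] = \E[\varepsilon w(X)]\,\E[Z-p] = 0$ and likewise $\E[\varepsilon]\,\E[w(X)(Z-p)] = 0$, which makes the covariance vanish. Existence of the required moments follows from $\Var(h(Z,X)), \Var(\varepsilon) < \infty$.

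For the \emph{only if} direction, since $Z\in\{0,1\}$ we may write $h(Z,X) = h(0,X)(1-Z) + h(1,X)\,Z$ and rearrange to obtain
\begin{align*}
    h(Z,X) \;=\; a(X) + w(X)(Z-p),
\end{align*}
where $a(x) = (1-p)\,h(0,x) + p\,h(1,x)$ and $w(x) = h(1,x)-h(0,x)$. By the \emph{if}-direction calculation applied with $\psi=0$, the second summand contributes nothing to $\Cov(\varepsilon, h(Z,X))$ whenever $(X,\varepsilon)\orth Z$, so the assumed condition reduces to $\Cov(\varepsilon, a(X)) = 0$ holding for every admissible law of $(X,\varepsilon)$ independent of $Z$. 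If $a(\cdot)$ were not constant as a function, pick $x_0, x_1$ with $a(x_0)\neq a(x_1)$, let $X$ be uniform on $\{x_0,x_1\}$, and set $\varepsilon = a(X) - \E[a(X)]$; this $(X,\varepsilon)$ is a deterministic function of $X$, hence independent of $Z$, and yields $\Cov(\varepsilon, a(X)) = \Var(a(X)) > 0$, a contradiction. Hence $a(\cdot) \equiv \psi$ for some constant $\psi$, which gives the required form.

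The main subtlety I anticipate is in the \emph{only if} step: one must argue that $a(\cdot)$ is constant as a function rather than merely almost surely under a single fixed law of $X$. The hypothesis allows us to vary the distribution of $X$ freely, which is exactly what makes the two-point construction above available and rules out any non-constant $a$.
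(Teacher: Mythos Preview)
Your proof is correct. Both directions are handled cleanly, and you correctly note and address the subtlety that $a(\cdot)$ must be constant \emph{as a function}, not merely almost surely under some fixed law of $X$; your two-point construction for $X$ together with $\varepsilon = a(X) - \E[a(X)]$ makes this explicit and ensures all the side conditions (mean-zero $\varepsilon$, finite variances, independence from $Z$) are met.

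The paper's proof takes a slightly different route: it derives \autoref{prop:binarycharacterization} as the $J=1$ special case of the finite-support result \autoref{prop:generalcharacterization}, with $g(Z) = (Z-p)/\sqrt{p(1-p)}$. There, the argument first uses $Z \orth (X,\varepsilon)$ to reduce $\E[h(Z,X)\varepsilon]$ to $\E[\E[h(Z,X)\mid X]\,\varepsilon]$ and then tests with $\varepsilon$ equal to (the centered) $\E[h(Z,X)\mid X]$ to force this conditional mean constant, before decomposing $h$ in a basis of centered indicators. Your proof inverts the order---first exploiting the binary structure to decompose $h(Z,X)=a(X)+w(X)(Z-p)$, then testing with $\varepsilon$---but the key step is the same, since your $a(X)$ is exactly $\E[h(Z,X)\mid X]$ under $Z\orth X$. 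The paper's route buys generality (any finite-support $Z$); yours is more self-contained and avoids the orthonormal-basis machinery, which is unnecessary in the binary case.
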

When $Z$ is not binary, there are additional instruments; we discuss this case along with a generalization of \autoref{prop:binarycharacterization} to finitely supported instruments in \autoref{sec:generallinear}, yielding a linear combination of augmented instruments of the above form.

For the binary case, we can now solve for the augmented instrument of the form $h(Z,X) = w(X) \: (Z-p)$ that minimizes the prediction loss
\begin{align*}
    \E[(D - h(Z,X))^2]
    =
    \E[(D - w(X) \: (Z-p))^2],
\end{align*}
maintaining the assumption that $\E[\varepsilon^2|X] \equiv \sigma^2$ a.s. 
The optimal interaction term takes a particularly simple form:
\begin{proposition}[Optimal augmented instrument]
    \label{prop:optimalweight}
    Among augmented extensions of a simple binary instrument that are valid in the sense of \autoref{prop:binarycharacterization},
    \begin{align*}
        h(Z,X) = \underbrace{\left(\E[D|Z=1,X] - \E[D|Z=0,X]\right)}_{=\alpha(X)} \: (Z - p)
    \end{align*}
    yields minimal asymptotic variance $V^h$ in \autoref{eqn:predictionanalogy}.
\end{proposition}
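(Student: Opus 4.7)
The plan is to reduce the optimization over valid augmented instruments to an unconstrained optimization over a single function $w(\cdot)$, and then close it with Cauchy--Schwarz. By \autoref{prop:binarycharacterization}, every valid augmented extension has the form $h(Z,X) = \psi + w(X)(Z-p)$ for some constant $\psi$ and conformal $w$. Since $\hat{\tau}^h$ depends on $h$ only through $\sCov(Y,h)$ and $\sCov(D,h)$, both $\hat{\tau}^h$ and $V^h$ are invariant under affine transformations of $h$; without loss of generality I would therefore set $\psi=0$ and minimize $V^h = \sigma^2 \Var(h)/\Cov^2(D,h)$ over $w(\cdot)$.

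The next step is to evaluate the two moments explicitly using $Z \orth X$, the binary nature of $Z$, and $\E[Z]=p$. Conditioning on $X$ gives $\E[(Z-p)^2 \given X] = p(1-p)$ and $\E[h \given X] = 0$, so by iterated expectations
\begin{align*}
    \Var(h(Z,X)) \;=\; \E[w(X)^2 (Z-p)^2] \;=\; p(1-p)\,\E[w(X)^2].
\end{align*}
For the covariance, a short calculation conditioning on $X$ (using $P(Z=1\given X)=p$ by independence, together with $\E[D \given Z=z,X]$ to identify $\alpha(X)$) yields $\E[D(Z-p)\given X] = p(1-p)\,\alpha(X)$, so that
\begin{align*}
    \Cov(D,h(Z,X)) \;=\; \E[D\, w(X)(Z-p)] \;=\; p(1-p)\, \E[w(X)\alpha(X)].
\end{align*}
Substituting these into $V^h$ reduces the problem to maximizing $\bigl(\E[w(X)\alpha(X)]\bigr)^2 / \E[w(X)^2]$ over $w$.

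The final step is an immediate application of the Cauchy--Schwarz inequality: $\bigl(\E[w(X)\alpha(X)]\bigr)^2 \leq \E[w(X)^2]\,\E[\alpha(X)^2]$, with equality if and only if $w(X)$ is proportional to $\alpha(X)$ almost surely. Any nonzero multiple of $\alpha$ therefore minimizes $V^h$, and taking $w = \alpha$ delivers exactly the instrument claimed in the proposition, with attained minimum $V^h = \sigma^2 / \bigl(p(1-p)\, \E[\alpha(X)^2]\bigr)$.

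None of the steps is especially delicate; the only point that requires a bit of care is the invariance argument that justifies dropping $\psi$ and pinning down the scale of $w$, which is also what makes the optimizer unique only up to a positive multiplicative constant, so that the choice $w=\alpha$ stated in the proposition is a convenient normalization rather than the unique minimizer.
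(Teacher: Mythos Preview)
Your proposal is correct, but it takes a different route from the paper. The paper does not work directly with the ratio $V^h = \sigma^2\,\Var(h)/\Cov^2(D,h)$; instead it uses the equivalence in \autoref{eqn:predictionanalogy} between minimizing $V^h$ and minimizing the first-stage prediction loss $\E[(D - w(X)(Z-p))^2]$, and then decomposes this loss via the law of total variance to obtain
\[
\E[(D - w(X)(Z-p))^2] \;=\; \text{const.} + p(1-p)\,\E[(\alpha(X) - w(X))^2],
\]
which is obviously minimized at $w=\alpha$. Your argument instead computes $\Var(h)$ and $\Cov(D,h)$ separately and closes with Cauchy--Schwarz on the ratio $\E[w\alpha]^2/\E[w^2]$. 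Both are short and elementary; the paper's decomposition has the advantage of making the link to conditional treatment-effect estimation of $\alpha(X)$ explicit (a theme the authors lean on later), while your approach is slightly more self-contained in that it does not pass through the $R^2$ reformulation of $V^h$ and works directly from its definition. One minor point: the Cauchy--Schwarz equality case only pins down $w$ up to a scalar multiple (not necessarily positive), so strictly speaking equality holds for $w = c\,\alpha$ with any $c\neq 0$; you note this, and the paper's MSE argument sidesteps it by delivering the unique $L^2$-projection $w=\alpha$.
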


The optimal interaction term
\begin{align*}
    \alpha(x) = \E[D|Z=1,X=x] - \E[D|Z=0,X=x]
\end{align*}
is the local strength of the instrument (that is, the net causal effect of the instrument on treatment). This result directly mirrors \cite{abadie2019instrumental} and \cite{Huntington-Klein} who consider compliance that varies by groups, in which case a group-interacted estimator yields compliance weighting. Any other choice for $w(X)$ leads to a prediction loss of
\begin{align*}
    \E[(D - w(X) \: (Z-p))^2]
    = 
    p(1-p) \E\left[(w(X) - \alpha(X))^2 \right]
    +
    \text{const.}
\end{align*}
in \autoref{eqn:predictionanalogy}, so solving for a low-variance estimator corresponds to the problem of estimating the treatment effect of $Z$ on $D$ with low mean-squared error.
Recognizing that the minimization of IV variance is in fact a question of (first-stage) treatment effect estimation leads us to straightforwardly adapt machine learning tools previously developed to estimate heterogeneous treatment effects, which we discuss in \autoref{sec:estimation} and \autoref{sec:simulation}.

\subsection{Connection of interacted to weighted instruments}
\label{sec:weighted}

Assuming that $w(X) \geq 0$ with $\E[w(X)] > 0$, we can shed additional light on the resulting estimator by noting that it can be understood as the weighted estimator
\begin{align*}
    \hat{\tau}^h = \frac{\sE[ w(X) (Y - \sE[Y]) (Z- p)]}{\sE[w(X)  (D - \sE[D]) (Z- p)]} = \frac{\sE^w[ (Y - \sE[Y]) (Z- p)]}{\sE^w[ (D - \sE[D]) (Z- p)]}
\end{align*}
where $\sP^w$ denotes empirical measure with weights $w(X_i)$ (so $\sE^w[\circ] = \frac{\sE[w(X) \: \circ]}{\sE[w(X)]}$, etc.).
Alternatively, we can consider the fully weighted IV estimator
\begin{align*}
    \hat{\tau}_w
    =
    \frac{\sCov^w(Y,Z)}{\sCov^w(Y,D)} = \frac{\sE^w[ (Y - \sE^w[Y]) (Z- p)]}{\sE^w[ (D - \sE^w[D]) (Z- p)]}
\end{align*}
that embeds weighting within the empirical covariance. While $\hat{\tau}^h$ with $h(Z,X) = w(X) \: (Z-p)$ and $\hat{\tau}_w$ differ slightly in how they handle baseline variation, once we control for the weight $w(X)$ in the IV regression, they are equivalent:

\begin{proposition}
    [Equivalence to weighting estimator]
    \label{prop:controlequivalence}

    Consider the IV estimator $\hat{\tau}^h$ with instrument $h(Z,X) = w(X) \: (Z-p)$ and the weighted IV estimator $\hat{\tau}_w$ with weights $w(X)$ and instrument $Z$.
    If we residualize outcomes $Y$ and treatment $D$ linearly with respect to a constant and the weight $w(X)$ (and possibly additional control variables),
    then both are equivalent,
    $
        \hat{\tau}^h = \hat{\tau}_w.
    $
\end{proposition}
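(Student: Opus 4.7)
The plan is to bring the two estimators into the same explicit form and then invoke the OLS first-order conditions induced by residualizing on $(1, w(X))$ to annihilate the remaining discrepancy. First, rewrite both in terms of the common (unweighted) empirical measure:
\begin{align*}
\hat{\tau}^h &= \frac{\sE[w(X)(Y - \sE[Y])(Z - p)]}{\sE[w(X)(D - \sE[D])(Z - p)]}, \\
\hat{\tau}_w &= \frac{\sE[w(X)(Y - \sE^w[Y])(Z - p)]}{\sE[w(X)(D - \sE^w[D])(Z - p)]},
\end{align*}
using $\sE^w[Y] = \sE[w(X) Y]/\sE[w(X)]$ to clear the weighted empirical measure in $\hat{\tau}_w$ (the $\sE[w(X)]$ factors cancel between numerator and denominator). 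The two ratios now differ only in the constants subtracted inside the products: the sample means $\sE[Y], \sE[D]$ for $\hat{\tau}^h$ versus the weighted sample means $\sE^w[Y], \sE^w[D]$ for $\hat{\tau}_w$.

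Second, replace $Y, D$ by the OLS residuals $\tilde Y, \tilde D$ from regression on a constant, $w(X)$, and any further controls. The corresponding normal equations yield two key identities, $\sE[\tilde Y] = 0$ (from the constant regressor) and $\sE[w(X) \tilde Y] = 0$ (from the $w(X)$ regressor), together with their analogues for $\tilde D$. Combining them gives $\sE^w[\tilde Y] = \sE[w(X) \tilde Y]/\sE[w(X)] = 0 = \sE[\tilde Y]$, so all four subtracted constants in the two ratios vanish. Substituting into both estimators, the numerators collapse to the common quantity $\sE[w(X) \tilde Y (Z-p)]$ and the denominators to $\sE[w(X) \tilde D (Z-p)]$, delivering $\hat{\tau}^h = \hat{\tau}_w$. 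Additional control variables in the residualization contribute only extra orthogonality conditions that leave the two identities above undisturbed.

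The argument is fundamentally bookkeeping rather than a substantive hurdle; the main point to track carefully is that the constant regressor is what supplies $\sE[\tilde Y] = 0$, while projecting on $w(X)$ separately supplies $\sE^w[\tilde Y] = 0$, and both are needed to eliminate the gap between unweighted and weighted demeaning.
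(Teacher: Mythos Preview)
Your argument is correct and follows essentially the same route as the paper: the paper residualizes $Y$ and $D$ on $(1,W)$, invokes the OLS normal equations $\sE[\dot Y]=\sE[W\dot Y]=0$ (and likewise for $\dot D$), and then writes the chain $\hat{\tau}_{\hat w}=\frac{\sCov^{\hat w}(\dot Y,Z)}{\sCov^{\hat w}(\dot D,Z)}=\frac{\sE[W\dot Y Z]}{\sE[W\dot D Z]}=\frac{\sCov(\dot Y,WZ)}{\sCov(\dot D,WZ)}=\hat{\tau}^{\hat h}$, extending to further controls by Frisch--Waugh--Lovell. Your version is slightly more explicit about why both orthogonality conditions are needed (one to kill the unweighted mean, the other the weighted mean), but the substance is identical.
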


These two results show the equivalence of certain interacted instruments \citep[for interacted instruments see e.g.][]{Bond2007-fd} to weighted instruments \citep{Joffe2003-uq}, which are also considered in \citet{Huntington-Klein}. This connection allows us to obtain efficiency in specific cases. For example, when $D$ is binary and there are no defiers, then $\alpha(X)$ is the conditional probability of compliance. (We will make this language precise below using a standard potential outcomes framework.) In this setting, \citet{Joffe2003-uq} present conditions under which compliance weighting is efficient in the homoscedastic and homogeneous case.
For the remainder of this paper, we will impose the assumptions of this proposition, so we write $\hat{\tau}_w$ for the equivalent estimators, and call it the ``$w$-weighted IV estimator''. 

\subsection{Interpretation in potential outcomes framework and the weighted LATE}
\label{sec:generalbinary}

In this section, we consider the case of a binary instrument $Z$ and binary treatment $D$ in a potential outcomes framework, dropping the assumptions of homoscedasticity and treatment effect homogeneity.
We adopt the standard assumptions for identification of the local average treatment effect (LATE) for the population distribution, also including covariates $X$ that are not affected by the instrument:
\begin{enumerate}
    \item \underline{Exclusion}: The potential outcomes only depend on the instrument $Z$ through treatment $D$, so we can write $Y(1), Y(0)$ for the potential treatment and control outcomes, where $Y= Y(D)$ is observed.
    \item \underline{Random assignment of the instrument}:
    \begin{align*}
        (Y(1),Y(0),D(1),D(0),X) \orth Z,
    \end{align*}
    with $D(1),D(0)$ representing the potential treatment outcomes as a function of assignment, where $D=D(Z)$ is observed.
    \item \underline{Relevance}: For $p= \E[Z] = \P(Z=1)$,
    \begin{align*}
        &0 < p < 1,
        &
        \P(D(1) = 1) &\neq \P(D(0) = 1).
    \end{align*}
    \item \underline{Monotonicity}:
    \begin{align*}
        D(1) \geq D(0).
    \end{align*}
\end{enumerate}
The last assumption implies that there are no defiers (units with $D(1)=0 < 1 = D(0)$), which allows us to identify conditional local average treatment effects
\begin{align*}
    \tau(x) = \E[Y(1) - Y(0) | D(1) > D(0), X=x] = \frac{\E[Y| Z=1, X=x] - \E[Y| Z=0, X=x]}{\E[D| Z=1, X=x] - \E[D| Z=0, X=x]}.
\end{align*}
Monotonicity is not essential to these results, although it simplifies interpretation. For instance, if this assumption were violated, we could still consider estimating $\tau(x) = \frac{\E[Y| Z=1, X=x] - \E[Y| Z=0, X=x]}{\E[Y| D=1, X=x] - \E[Y| D=0, X=x]}$ for those values of $x$ with $\E[D| Z=1, X=x] > \E[D| Z=0, X=x]$. However, the parameter loses its causal interpretation if treatment effects differ for compliers and defiers with $X=x$ \citep{angrist1996}.

Given an iid sample of size $n$ from the population distribution of $(Y,D,Z,X)$,
the standard Wald estimator of the local average treatment effect $\tau = \E[Y(1) - Y(0) | D(1) > D(0)]$ is the IV estimator
\begin{align*}
    \hat{\tau} = \frac{\sE[Y| Z=1] - \sE[Y| Z=0]}{\sE[D| Z=1] - \sE[D| Z=0]} = \frac{\sCov(Y,Z)}{\sCov(D,Z)},
\end{align*}
where we write $\sE,\sCov$ for the expectation and covariance operators associated with the empirical (sample) measure $\sP$ (So $\sE[\circ] = \frac{1}{n} \sum_{i=1}^n \circ_i$, etc.).
This is also the IV estimator in the model
\begin{align*}
    Y &= \psi + D \tau + \varepsilon,
    &
    \E[\varepsilon] &= 0,
\end{align*}
where $\Cov(Z,\varepsilon) = 0$.
(Throughout, we assume the existence of sufficient moments to allow for the consistent estimation of $\tau$ in this manner.)

When conditional treatment effects $\tau(X)$ vary, then the $w$-weighted IV estimator $\hat{\tau}_w$ is not necessarily consistent for the overall LATE $\tau = \E[Y(1) - Y(0) |D(1) > D(0)] = \E[\tau(X)]$. Instead, the estimand corresponding to this estimator now represents a weighted average of treatment effects among compliers:

\begin{proposition}
    [Weighted LATE]
    \label{prop:weightedlate}

    Under standard regularity conditions, the $w$-weighted IV estimator $\hat{\tau}_w$ is consistent for the $w$-weighted LATE $\tau_w$,
    \begin{align*}
        \hat{\tau}_w \cp \tau_w = \frac{\E[w(X) \: \tau(X) |D(1) > D(0)]}{\E[w(X)|D(1) > D(0)]}
    \end{align*}
    as $n \rightarrow \infty$.
\end{proposition}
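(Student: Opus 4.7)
The plan is to first use the continuous mapping theorem to reduce the consistency claim to computing the probability limits of the numerator and denominator of $\hat{\tau}_w$, and then to rewrite those limits via LATE identification conditional on $X$. By \autoref{prop:controlequivalence}, I may work with the equivalent estimator $\hat{\tau}^h$ built from the interacted instrument $h(Z,X) = w(X)(Z-p)$, since this is notationally cleaner. Applying the weak law of large numbers to the sample moments (assuming the standard regularity condition that $\Var(Y w(X)(Z-p))$, $\Var(D w(X)(Z-p))$, and the denominator $\Cov(D, w(X)(Z-p))$ are finite and the latter nonzero), the continuous mapping theorem yields
\begin{align*}
    \hat{\tau}_w \cp \frac{\Cov(Y, w(X)(Z-p))}{\Cov(D, w(X)(Z-p))}.
\end{align*}

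The main work is in reducing these two population covariances. Since $Z \orth X$ and $\E[Z-p]=0$, we have $\E[w(X)(Z-p)]=0$, so both covariances equal the corresponding raw expectations. Conditioning on $X$ and using $Z \orth X$,
\begin{align*}
    \E[Y \, w(X)(Z-p) \mid X] = p(1-p)\bigl(\E[Y\mid X, Z=1] - \E[Y\mid X, Z=0]\bigr),
\end{align*}
and analogously for $D$. Under the exclusion, random assignment, relevance, and monotonicity assumptions, standard LATE identification conditional on $X$ gives $\E[D\mid X,Z=1]-\E[D\mid X,Z=0] = \P(D(1)>D(0)\mid X) =: \alpha(X)$ and $\E[Y\mid X,Z=1]-\E[Y\mid X,Z=0] = \tau(X)\,\alpha(X)$. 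Taking outer expectations and dividing, the $p(1-p)$ factors cancel, leaving
\begin{align*}
    \frac{\Cov(Y, w(X)(Z-p))}{\Cov(D, w(X)(Z-p))} = \frac{\E[w(X)\,\tau(X)\,\alpha(X)]}{\E[w(X)\,\alpha(X)]}.
\end{align*}

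Finally, I would convert this into the conditional-on-compliance form stated in the proposition by a one-line Bayes-rule argument: since $\alpha(X) = \P(D(1)>D(0)\mid X)$, the law of iterated expectations gives $\E[w(X)\,\alpha(X)] = \P(D(1)>D(0))\cdot \E[w(X)\mid D(1)>D(0)]$ and similarly $\E[w(X)\,\tau(X)\,\alpha(X)] = \P(D(1)>D(0))\cdot \E[w(X)\,\tau(X)\mid D(1)>D(0)]$; the unconditional complier probabilities cancel, delivering $\tau_w$ exactly as stated.

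The only non-routine step is the conditional LATE identification of the numerator, which needs to pass through the potential-outcomes decomposition $Y = Y(0) + D(Y(1)-Y(0))$ together with monotonicity to rule out defiers and express the conditional reduced-form effect as $\tau(X)\alpha(X)$; everything else is bookkeeping. I do not anticipate a serious obstacle, as the argument is a covariate-conditioned repetition of the Imbens--Angrist Wald identification, with $w(X)$ simply re-weighting the outer expectation over $X$.
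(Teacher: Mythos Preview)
Your argument is correct, modulo a small slip: the displayed conditional expectation $\E[Y\, w(X)(Z-p)\mid X]$ should carry a factor of $w(X)$ on the right-hand side, which you implicitly restore when you take the outer expectation to obtain $\E[w(X)\,\tau(X)\,\alpha(X)]$.

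Your route differs from the paper's. The paper does not prove \autoref{prop:weightedlate} directly; instead it obtains it as the fixed-weight special case of \autoref{prop:estimatedconsistent} and \autoref{prop:estimatedasymptotic}, which handle cross-fitted \emph{estimated} weights via the general weighted-sum result \autoref{lem:weightedclt}. Your approach is more elementary and self-contained for this particular statement---just the WLLN, the continuous mapping theorem, and a covariate-conditioned Imbens--Angrist identification followed by the Bayes-rule rewriting---and makes no use of cross-fitting machinery. The paper's approach is less direct here but avoids duplication: since the estimated-weight result is needed anyway, the fixed-weight case comes for free. Either is a legitimate proof of the proposition as stated.
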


In the previous section we followed \cite{Joffe2003-uq} in arguing in favor of weighting by the compliance probability
\begin{align*}
    \alpha(X=x) = \E[D|Z=1,X=x] - \E[D|Z=0,X=x] = \P(D(1) > D(0)|X=x).
\end{align*}
However, the LATE is already a compliance-weighted average of conditional LATEs \citep{angrist1995two}, which we write as 
\begin{align*}
    \tau = \E[Y(1) - Y(0) |D(1) > D(0)]
    =  \frac{\E[\Ind_{D(1) > D(0)} \: \tau(X) ]}{\E[\Ind_{D(1) > D(0)}]}
    = \frac{\E[\alpha(X) \: \tau(X)]}{\E[\alpha(X)]}.
\end{align*}
The compliance-weighted IV estimator $\hat{\tau}^* = \hat{\tau}_{\alpha}$ reduces the contribution of always- and never-takers to the variance by ``doubling down'' on that weighting, estimating the compliance-weighted LATE 
\begin{align*}
    \tau_{\alpha} = 
    \frac{\E[\alpha(X)\: \tau(X)|D(1) > D(0)]}{\E[\alpha(X)|D(1) > D(0)]}
    =\frac{\E[(\alpha(X))^2 \: \tau(X)]}{\E[(\alpha(X))^2]},
\end{align*}
corresponding to a ``Super-Local Average Treatment Effect'' in the linear model considered by \citet{Huntington-Klein}.
Furthermore, heterogeneity and heteroscedasticity can also affect the asymptotic distribution around the estimand:

\begin{proposition}[Asymptotic distribution of $w$-weighted IV estimator]
    \label{prop:asymptotic}

    Under regularity conditions,
    \begin{align*}
        \sqrt{n}
        (\hat{\tau}_w - \tau_w)
        &\cd
        \N(0, V_w),
        &
        V_w &= \frac{\E\left[w^2(X) \: \sigma^2(X) \right]}{p(1-p)\E^2[\alpha(X) \: w(X)]}
    \end{align*}
    as $n \rightarrow \infty$
    for
    $
        \sigma^2(x) = (1-p) \E[\varepsilon^2|Z=1,X=x] + p \E[\varepsilon^2|Z=0,X=x]
    $
    where
    $\varepsilon$ is the population regression residual in the linear regression of $Y - \tau_w D$ on a constant, $w(X)$, and any additional control variables included in the IV regression.
\end{proposition}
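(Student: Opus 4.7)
The plan is to represent the $w$-weighted IV estimator as a ratio of sample averages and apply Slutsky's theorem using a LLN for the denominator and a CLT for the numerator. By \autoref{prop:controlequivalence}, after sample OLS residualization of $Y$ and $D$ on a constant and $w(X)$ (plus any additional controls), the $w$-weighted IV estimator equals the augmented-instrument estimator $\hat{\tau}^h$ for $h(Z,X) = w(X)(Z-p)$. Let $\varepsilon = Y - \tau_w D - \psi_0 - \psi_1 w(X)$ denote the population regression residual, which by the definition of $\tau_w$ in \autoref{prop:weightedlate} together with the independence $Z \orth X$ satisfies both $\E[\varepsilon \: h(Z,X)] = 0$ and population orthogonality to $(1, w(X))$. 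The latter implies that the sample OLS coefficients of $\varepsilon$ on $(1, w(X))$ are $\Oh(n^{-1/2})$; combined with the fact that $\sE[h(Z,X)]$ and $\sE[w(X)\, h(Z,X)]$ are $\Oh(n^{-1/2})$ (their population values vanish under $Z \orth X$), the numerator decomposes as
\begin{align*}
\sqrt{n}\,(\hat{\tau}_w - \tau_w) = \frac{\sqrt{n}\,\sE[\varepsilon \: h(Z,X)] + \oh(1)}{\sE[D \: h(Z,X)]}.
\end{align*}

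Next I would apply the LLN to the denominator and the CLT to the numerator. Conditioning on $X$ and using $Z \orth X$ yields $\E[D(Z-p) \given X] = p(1-p)\,\alpha(X)$, so $\sE[D \: h(Z,X)] \cp p(1-p)\,\E[w(X)\,\alpha(X)]$. For the numerator, $\sqrt{n}\,\sE[\varepsilon \: h(Z,X)] \cd \N(0, \E[\varepsilon^2 \: h^2(Z,X)])$. The second moment is evaluated by conditioning on $X$: since $(Z-p)^2$ equals $(1-p)^2$ on $\{Z=1\}$ and $p^2$ on $\{Z=0\}$, and $\P(Z=1 \given X) = p$,
\begin{align*}
\E[\varepsilon^2 (Z-p)^2 \given X]
&= p(1-p)^2\,\E[\varepsilon^2 \given Z=1, X] + (1-p)\,p^2\,\E[\varepsilon^2 \given Z=0, X] \\
&= p(1-p)\,\sigma^2(X),
\end{align*}
so $\E[\varepsilon^2 \: h^2(Z,X)] = p(1-p)\,\E[w^2(X)\,\sigma^2(X)]$. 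Slutsky then delivers $\sqrt{n}(\hat{\tau}_w - \tau_w) \cd \N(0, V_w)$ with $V_w = \frac{p(1-p)\,\E[w^2(X)\,\sigma^2(X)]}{[p(1-p)\,\E[w(X)\,\alpha(X)]]^2}$, which reduces to the stated expression.

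The main obstacle is verifying the $\oh(1)$ in the first display: each substitution of a sample residualization coefficient for its population counterpart contributes a cross-term of the form (estimation error) $\times$ (sample moment of $1$ or $w(X)$ against $h(Z,X)$), and both factors are $\Oh(n^{-1/2})$ because the population residual $\varepsilon$ is orthogonal to $(1, w(X))$ and $h(Z,X)$ is mean-zero and uncorrelated with $w(X)$ under $Z \orth X$. Hence each cross-term is $\Oh(n^{-1})$, which is $\oh(1)$ after multiplication by $\sqrt{n}$. Under the moment conditions implicit in the proposition (e.g., finite fourth moments of $w(X)$, $\varepsilon$, and $D$), this bookkeeping is routine; extensions to additional controls in the residualization and to plug-in estimation of $p$ by $\sE[Z]$ follow by the same reasoning.
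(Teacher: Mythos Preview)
Your argument is correct. The decomposition into a CLT numerator and LLN denominator, the handling of residualization cross-terms via the double-$\Oh(n^{-1/2})$ argument, and the conditional-variance computation using $Z\orth X$ and the binary structure of $Z$ all go through as you describe. (One cosmetic point: your displayed denominator $\sE[D\,h(Z,X)]$ should strictly be $\sE[\ddot{D}\,h(Z,X)]$ after residualization, but since $\sE[h]$ and $\sE[w(X)h]$ are $\oh(1)$ the two have the same probability limit, so this does not affect the result.)

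The paper takes a different route: it does not prove \autoref{prop:asymptotic} directly but obtains it as the degenerate (fixed-weight) case of \autoref{prop:estimatedasymptotic}, whose proof relies on the cross-fitting machinery and the general weighted-sum CLT of \autoref{lem:weightedclt}. That proof still carries out essentially the same residualization bookkeeping you do (showing the cross-terms vanish via $\sqrt{n}\,\sE[W(Z-p)]$, $\sqrt{n}\,\sE[W^2(Z-p)]$ being asymptotically Normal with mean zero), but wraps it in a framework that simultaneously handles estimated weights $\hat{w}_{-j}$ converging in $L_2$. Your approach is more elementary and self-contained for the fixed-weight case---it needs only the standard iid LLN/CLT and Slutsky, with no sample splitting or auxiliary lemma---while the paper's approach amortizes the work by proving the harder estimated-weight result once and reading off the oracle case.
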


The conditional residual variation $\sigma^2(x) = (1-p) \E[\varepsilon^2|Z=1,X=x] + p \E[\varepsilon^2|Z=0,X=x]$ comes from three sources: from variation of conditional LATEs $\tau(x)$ around their weighted average $\tau_w$; from variation in the outcome that can be predicted from pre-assignment characteristics, which can be addressed by residualizing the outcome with respect to (potentially nonparametric) functions of $X$; and from irreducible noise that cannot be explained by $X$.

While we leave a more thorough treatment of these additional sources of variation (and their mitigation) to future versions,
we note here that the Cauchy--Schwartz inequality immediately implies that the variance-minimizing weights
are given by
\begin{align*}
    w^*(x) &= \frac{\alpha(x)}{\sigma^2(x)}
    &
    &\text{yielding asymptotic variance}
    &
    V_{w^*} &= \left(p (1-p) \E\left[\frac{(\alpha(X))^2}{\sigma^2(X)}\right] \right)^{-1}
\end{align*}
and a natural extension to the estimators in this paper is to also estimate $\sigma^2(X)$ when determining weights \citep[e.g., as proposed by][for linear regression using machine learning]{Miller2018-xt}, in addition to compliance $\alpha(X)$. That said, an estimator that weights by compliance alone may prove to be desirable in many empirical settings due to concerns regarding the finite-sample performance of inverse-variance estimators, analogous to the reason that an applied researcher may prefer OLS to the asymptotically efficient GLS.

We also note a parallel to \citet{Borusyak2020-xu}, in which the authors develop methods to improve IV efficiency by recentering the instrument to account for heterogeneous treatment exposure. In a more general setting where the given instrument is not valid unconditionally, they extend the framework of \citet{chamberlain1987asymptotic}, and propose an optimal instrument that can be re-written in our notation and under our assumptions as
\begin{align*}
    Z^*
    =
    \frac{1}{\sigma^2(X)}
    \left(\E[D|Z,X] - \E[D|X]\right)
    =
    \frac{\alpha(X)}{\sigma^2(X)} (Z-p),
\end{align*}
corresponding to the efficient weighting function $w^*(x)$.
Motivated by the homoscedastic and homogeneous case where $\sigma^2(X) \equiv \sigma^2$, we focus for now on the weight choice $w(x) = \alpha(x)$, which is optimal in this case. This has the additional benefit of not requiring any information about the distribution of outcome $Y$, while also giving the weights in $\tau_\alpha$ an immediately intuitive meaning.

How should $\hat{\tau}_\alpha$ be interpreted in practice? Under assumptions of treatment effect homogeneity and homoscedasticity, it is a consistent estimator of the LATE, and more efficient than traditional IV estimators (e.g., 2SLS). Moreover, consistency for the LATE also extends to settings in which treatment effects are heterogeneous but uncorrelated with the compliance weights. For example, consider an RCT designed to estimate the effectiveness of a medical intervention administered at an outpatient health clinic. Suppose that the distance between study participants' homes and the clinic is highly predictive of their compliance with random assignment, but that the effectiveness of the treatment is largely determined by the participants' latent genetic characteristics. In such settings, it seems unlikely that estimated compliance would be meaningfully correlated with treatment effects, in which case compliance-weighted estimates could be simply interpreted as approximations to the LATE.

However, the estimand $\tau_w$ and the LATE $\tau$ are not generally equivalent when $\tau(x)$ are both heterogeneous and correlated with $w(x)$; in this setting, the estimand $\tau_\alpha$ represents a convex combination of conditional LATEs \cite[``Super-LATE'' in the parlance of][]{Huntington-Klein}. Yet even when the unconditional LATE is the estimand of interest, we point out that in many applied scenarios, the compliance-weighted IV estimator $\hat{\tau}_\alpha$ will nonetheless be more precise for the LATE than the traditional IV estimator $\hat{\tau}$.
To do so, we consider an asymptotic framework with relatively small treatment effects. When treatment effects are fixed in our asymptotic framework, then we can distinguish the LATE arbitrarily well from any alternative in a large sample. So instead, we focus on the case where treatment effects are of an order of magnitude that makes them hard to distinguish from noise, a common scenario in applied research.%
\footnote{Note that this approach is similar to the weak instrument analysis of \citet{staiger1997instrumental} in that we adopt a local-to-zero regime, but complementary in that rather than applying it to first-stage effects, we apply it to treatment effects, a separate and generally less problematic source of weak signal.}

\begin{proposition}[Asymptotic error in estimating LATE using $w$-weighted IV estimator]
    \label{prop:asymptoticlocal}

    Assume that conditional local average treatment effects are local to zero in the sense that
    $
        \tau(x) = \frac{\mu(x)}{\sqrt{n}}
    $.
    Then
    \begin{align*}
        \sqrt{n} (\hat{\tau}_w - \tau)
        &\cd
        \N\left(
            B_w,
            V_w
        \right)
        &
        B_w
        &=
        \frac{\Cov(\mu(X), w(X)|D(1)>D(0))}{\E[w(X)|D(1)>D(0)]}
    \end{align*}
    as $n \rightarrow \infty$, and $V_w$ does not depend on treatment effects.
    
\end{proposition}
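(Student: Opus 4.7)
The natural approach is to decompose
\begin{align*}
    \sqrt{n}\,(\hat{\tau}_w - \tau) = \sqrt{n}\,(\hat{\tau}_w - \tau_w) + \sqrt{n}\,(\tau_w - \tau)
\end{align*}
and handle the stochastic and deterministic pieces separately. For the stochastic piece, I would simply invoke \autoref{prop:asymptotic} along the local-to-zero sequence of data-generating processes to get $\sqrt{n}\,(\hat{\tau}_w - \tau_w) \cd \N(0, V_w)$. The deterministic piece is a direct calculation: by \autoref{prop:weightedlate} and the local-to-zero assumption $\tau(x) = \mu(x)/\sqrt{n}$,
\begin{align*}
    \tau
    &= \E[\tau(X)|D(1)>D(0)]
    = \frac{\E[\mu(X)|D(1)>D(0)]}{\sqrt{n}},
    \\
    \tau_w
    &= \frac{\E[w(X)\tau(X)|D(1)>D(0)]}{\E[w(X)|D(1)>D(0)]}
    = \frac{\E[w(X)\mu(X)|D(1)>D(0)]}{\sqrt{n}\,\E[w(X)|D(1)>D(0)]},
\end{align*}
so that $\sqrt{n}\,(\tau_w - \tau)$ equals, after a common denominator, the ratio of $\E[w(X)\mu(X)|D(1)>D(0)] - \E[w(X)|D(1)>D(0)]\,\E[\mu(X)|D(1)>D(0)]$ to $\E[w(X)|D(1)>D(0)]$, which is precisely the deterministic constant $B_w$. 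Combining the two pieces via Slutsky's theorem then gives the claimed limit.

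To verify that $V_w$ does not depend on treatment effects, I would argue as follows. By \autoref{prop:weightedlate} and the calculation above, $\tau_w = O(1/\sqrt{n})$ along the local-to-zero sequence, so the term $\tau_w D$ appearing in the definition of $\varepsilon$ in \autoref{prop:asymptotic} is uniformly $O(1/\sqrt{n})$ and drops out in the limit. The residual $\varepsilon$ therefore converges (in $L^2$) to the residual $\tilde{\varepsilon}$ from the linear projection of $Y$ on $(1, w(X))$ and the additional controls. Moreover, the $Z$-dependence of the conditional mean $\E[\varepsilon|Z, X=x]$ is given by the reduced-form effect $\alpha(x)\mu(x)/\sqrt{n}$ and so contributes at most $O(1/n)$ to $\sigma^2(x)$; in the limit only $\E[\tilde{\varepsilon}^2|X=x]$ survives, which is a function of the conditional distribution of $Y$ given $X$ and controls, and under local-to-zero any dependence on $\mu$ vanishes.

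The main obstacle is precisely this last step — establishing that \autoref{prop:asymptotic} applies uniformly along the triangular array of local-to-zero DGPs, and that both $V_w^{(n)}$ and the error in the decomposition converge to the limiting quantities computed under the null ($\mu \equiv 0$). This is essentially a uniform-integrability/Lindeberg-style check on the influence function underlying \autoref{prop:asymptotic}, and is routine provided one assumes the standard moment conditions used there hold uniformly in $n$; the bias term $B_w$ is itself fixed (since $\mu$ does not depend on $n$), so no further work is needed to identify the non-centrality.
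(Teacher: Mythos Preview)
Your proposal is correct and matches the paper's own proof essentially line for line: the paper also splits $\sqrt{n}(\hat{\tau}_w - \tau)$ into $\sqrt{n}(\hat{\tau}_w - \tau_w)$ plus the deterministic piece $\sqrt{n}(\tau_w - \tau)$, invokes \autoref{prop:asymptotic} for the first, and computes the second to equal $B_w$ via the same covariance identity. Your justification that $V_w$ is unaffected by treatment effects (because $\tau_w D$ and $\tau(X)D$ are $O(1/\sqrt{n})$ perturbations of the residual) is likewise exactly the argument the paper gives, and your remark about the triangular-array uniformity is a legitimate point that the paper leaves implicit under its ``regularity conditions.''
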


Hence, if the unconditional LATE $\tau$ is in fact the estimand of interest, the $w$-weighted estimator has an asymptotic bias $B_w$ that is equal to the covariance of conditional LATEs and relative weights $w(X)/\E[w(X)]$ among only the compliers. This result is intuitive: since only compliers contribute to the conditional LATEs, the extent to which $\tau_w$ can deviate from $\tau$ is governed by differences in the relative weighting of compliers. 
This also highlights the fact that when weighting by compliance (i.e., $w(x) = \alpha(x)$), $B_\alpha = 0$ both when $X$ is not predictive of compliance (in which case $w(X) \equiv \alpha$), as well as when $X$ is perfectly predictive of compliance (in which case $w(x) = 1$ for all compliers, and $w(x) = 0$ for all non-compliers).

In principle, the fit from estimating the unweighted LATE $\tau$ by the $w$-weighted IV estimator $\hat{\tau}_w$ can be asymptotically better or worse than using the standard unweighted IV estimator $\hat{\tau}^{\text{IV}}$.
Indeed, the expected square of the asymptotic approximation to $\sqrt{n} (\hat{\tau}_w - \tau)$ is
\begin{align}
    \label{eqn:asymptloss}
    \frac{\Cov^2(\mu(X), w(X)|D(1)>D(0)) + \frac{\E[w^2(X)\sigma^2(X)]}{p(1-p) \P^2(D(1)>D(0))}}{\E^2[w(X)|D(1)>D(0)]}
\end{align}
representing a bias--variance trade-off that can make it smaller or larger than the variance of the large-sample limit of $\sqrt{n} (\hat{\tau}^{\text{IV}} - \tau)$, which is $\E\left[\sigma^2(X)\right] / (p(1-p) \P^2(D(1)>D(0)))$.

A particularly simple estimator that navigates this bias--variance trade-off is the estimator $\hat{\tau}^*(\lambda)$ that implements weights
\begin{align*}
    w(x) = (1-\lambda) \: \frac{\E[(\alpha(x))^2]}{\E[\alpha(x)]}  + \lambda \: \alpha(x) = (1-\lambda) \: \E[\alpha(x)|D(1)>D(0)]  + \lambda \: \alpha(x)
\end{align*}
for $\lambda \in [0,1]$, interpolating between equal weighting ($\lambda=0$) and compliance weights ($\lambda=1$).
This weighting scheme can be interpreted as a shrinkage estimator in the population compliance prediction problem.
When compliance probabilities do not co-vary too much with treatment effects and conditional variances, then the resulting estimator has better fit for the LATE than the standard IV estimator:

\begin{corollary}
    [Dominance of shrinkage LATE estimator]
    \label{cor:dominance}
    If
    \begin{align*}
        \Cov\left(\frac{\alpha(X)}{\E[\alpha(X)]},\sigma^2(X)\right) < \E[\sigma^2(X)] \: \Var\left(\frac{\alpha(X)}{\E[\alpha(X)]}\right)
    \end{align*}
    then for $\lambda$ sufficiently small, the sum of variance and squared bias of the asymptotic distribution of $\hat{\tau}^*(\lambda)$ around the LATE $\tau$ is less than the variance of the standard (unweighted) IV estimator $\hat{\tau}^\text{IV}$.
\end{corollary}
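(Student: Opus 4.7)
The plan is to compute the asymptotic MSE of $\hat{\tau}^*(\lambda)$ around the LATE $\tau$ as a smooth function of $\lambda$, observe that at $\lambda = 0$ it equals the asymptotic variance of $\hat{\tau}^{\text{IV}}$, and show that its right derivative at $\lambda = 0$ is strictly negative under the stated condition. Dominance for sufficiently small $\lambda > 0$ then follows from a first-order Taylor expansion, since the AMSE is a polynomial in $\lambda$ divided by a constant in $\lambda$.

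Writing $c = \E[\alpha(X) | D(1)>D(0)] = \E[\alpha^2(X)]/\E[\alpha(X)]$ and $w_\lambda(x) = (1-\lambda) c + \lambda \alpha(x)$, the key algebraic observation is that
\begin{align*}
\E[w_\lambda(X) | D(1)>D(0)] = (1-\lambda) c + \lambda \E[\alpha(X) | D(1)>D(0)] = c,
\end{align*}
so the denominator $\E^2[w_\lambda(X) | D(1)>D(0)]$ in \autoref{eqn:asymptloss} is identically $c^2$ for every $\lambda$, eliminating all $\lambda$-dependence there. The squared-bias numerator from \autoref{prop:asymptoticlocal} then reduces to $\lambda^2 \Cov^2(\mu(X), \alpha(X) | D(1)>D(0))$, which is $O(\lambda^2)$ and contributes nothing to the derivative at $\lambda = 0$. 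The variance numerator from \autoref{prop:asymptotic} expands as
\begin{align*}
\E[w_\lambda^2(X)\sigma^2(X)] = (1-\lambda)^2 c^2 \E[\sigma^2(X)] + 2\lambda(1-\lambda) c \E[\alpha(X)\sigma^2(X)] + \lambda^2 \E[\alpha^2(X)\sigma^2(X)],
\end{align*}
with derivative at $\lambda = 0$ equal to $2c\bigl(\E[\alpha(X)\sigma^2(X)] - c\,\E[\sigma^2(X)]\bigr)$.

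Next I would verify that the hypothesis of the corollary is algebraically equivalent to $c\,\E[\sigma^2(X)] > \E[\alpha(X)\sigma^2(X)]$. Expanding $\Cov(\alpha/\E[\alpha], \sigma^2)$ and $\Var(\alpha/\E[\alpha])$ into raw moments and multiplying both sides of the stated inequality by $\E^2[\alpha(X)]$ yields $\E[\alpha(X)]\,\E[\alpha(X)\sigma^2(X)] < \E[\alpha^2(X)]\,\E[\sigma^2(X)]$, which, upon dividing by $\E[\alpha(X)]$ and using the definition of $c$, gives precisely $c\,\E[\sigma^2(X)] > \E[\alpha(X)\sigma^2(X)]$. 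Under this inequality the derivative of the AMSE at $\lambda = 0$ is strictly negative, so a first-order Taylor expansion shows $\mathrm{AMSE}(\lambda) < \mathrm{AMSE}(0)$ throughout a right neighborhood of zero. Since $\mathrm{AMSE}(0) = V_{w_0} = \E[\sigma^2(X)]/(p(1-p)\P^2(D(1)>D(0)))$, which equals the asymptotic variance of $\hat{\tau}^{\text{IV}}$ as stated immediately after \autoref{eqn:asymptloss}, the conclusion follows.

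Rather than a genuine obstacle, the clarifying insight is the cancellation $\E[w_\lambda(X)|D(1)>D(0)] = c$, which collapses the denominator and reduces everything to the sign of one derivative. The remaining work is routine polynomial bookkeeping plus the straightforward algebraic translation of the covariance--variance hypothesis; the shrinkage parametrization is chosen precisely so that this bias--variance trade-off becomes a one-parameter Taylor argument.
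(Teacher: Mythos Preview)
Your proposal is correct and follows essentially the same approach as the paper: both compute the derivative of the asymptotic loss $L(\lambda)$ at $\lambda=0$, show it equals (a positive multiple of) $\E[\alpha(X)\sigma^2(X)] - c\,\E[\sigma^2(X)]$, and then verify that the stated covariance--variance hypothesis is algebraically equivalent to this quantity being negative. Your explicit observation that $\E[w_\lambda(X)\mid D(1)>D(0)]\equiv c$ makes the denominator constant is a nice clarification that the paper leaves implicit, but otherwise the argument is the same.
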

\noindent This condition holds, in particular, under homoscedasticity or when the variation of conditional variances is small.

As an alternative to this analysis of the estimator relative to the LATE, compliance-weighted estimators could be analyzed within the marginal treatment effect (MTE) frameworks \citep{heckman2005structural,heckman2006understanding,brinch2017beyond,mogstad2018using}, which use IV to bound (or extrapolate to) treatment effect parameters of interest that cannot be point-identified, such as the ATE. In applications of these frameworks \citep[e.g.][]{kowalski2016doing,kowalski2018examine,kowalski2018reconciling}, compliance-weighted estimators may similarly improve precision in estimating treatment-effect bounds.

    \section{Estimation and Inference with Estimated Weights}
\label{sec:estimation}

In this section we analyze the asymptotic distribution of the weight-augmented IV estimator when weights are estimated, rather than fixed ex-ante, and consider inference. Under regularity conditions, we find that weight estimation does not change the asymptotic distribution relative to a specific sample-dependent estimand, and that commonly used robust standard errors remain valid after first-stage estimation. Notably, our assumptions allow for valid inference relative to this data-dependent estimand even when compliance weights are estimated nonparametrically.

In the previous section, we considered IV estimation with instruments of the form $h(Z,X)$ for a simple instrument $Z$ that is independent of covariates $X$, focusing on cases where $h(Z,X) = w(X) \: (Z-p)$ for $p = \E[Z]$. In the homogeneous and homoscedastic baseline case with a binary instrument $Z$, we justified the specific choice
\begin{align*}
    \alpha(x) = \E[D|Z=1,X=x] - \E[D|Z=0,X=x]  
\end{align*}
for the weight $w(X)$, corresponding to the probability of compliance in an instrumental-variable setting with binary treatment and no defiers.
Yet in practice, such weights are unknown to the researcher, and must be estimated instead.
We therefore analyze the asymptotic distributions of the augmented IV estimator $\hat{\tau}^{\hat{h}}$ and the weighted IV estimator $\hat{\tau}_{\hat{w}}$, where
\begin{align*}
    \hat{\tau}^{\hat{h}} &= \frac{\sCov(Y, (Z - \sE[Z]) \hat{w}(X))}{\sCov(D, (Z - \sE[Z]) \hat{w}(X))},
    &
    \hat{\tau}_{\hat{w}} &= \frac{\sCov^{\hat{w}}(Y,Z)}{\sCov^{\hat{w}}(D, Z)}
\end{align*}
for the empirical covariance $\sCov$ and weighted empirical covariance $\sCov^{\hat{w}}$.

\begin{proposition}[Consistency with estimated weights]
    \label{prop:estimatedconsistent}
    Assume that weights $\hat{w}_{-j(i)}(X_i)$ are fitted with $k$-fold cross-fitting,
    where folds are (approximately) equally sized with $X_i$ in fold $j=j(i)$ and $\hat{w}_{-j}$ fitted on all other folds.
    Also assume that there is some function $w(\cdot) \geq 0, \E[w(X)] > 0$ such that $\E[(\hat{w}_{-j}(X) - w(X))^2] \rightarrow 0$ as $n \rightarrow \infty$ (where the expectation is both over the training data and an independently drawn test point) for all $j$.
    Then $\hat{\tau}^{\hat{h}}, \hat{\tau}_{\hat{w}} \cp \tau_w$ under regularity conditions.
\end{proposition}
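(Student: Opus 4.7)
The plan is to reduce the problem to one of showing that each empirical moment appearing in $\hat{\tau}^{\hat{h}}$ and $\hat{\tau}_{\hat{w}}$ converges in probability to the corresponding population moment built from the oracle weight $w$, after which the convergence $\hat{\tau}^{\hat{h}}, \hat{\tau}_{\hat{w}} \cp \tau_w$ follows by continuous mapping combined with \autoref{prop:weightedlate} (applied with the oracle $w$) and a non-degeneracy condition on the probability limit of the denominator. I will write out the plan for $\hat{\tau}^{\hat{h}}$; the argument for $\hat{\tau}_{\hat{w}}$ is parallel, and the two estimators would in any case be equivalent after residualizing by $\hat{w}(X)$ as in \autoref{prop:controlequivalence}.

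The numerator and denominator of $\hat{\tau}^{\hat{h}}$ are both sums of the generic form $A_n(\hat{w}) = \sE[U\,(Z-\sE[Z])\,\hat{w}(X)]$ with $U \in \{Y,D\}$. I would first replace $\sE[Z]$ by its probability limit $p$, using $\sE[Z]\cp p$ together with $\sE[|U\,\hat{w}(X)|] = O_\P(1)$, which reduces matters to studying $\sE[U(Z-p)\hat{w}(X)]$. I would then decompose
\begin{align*}
\sE[U(Z-p)\hat{w}(X)] = \sE[U(Z-p)w(X)] + \sE[U(Z-p)(\hat{w}(X)-w(X))].
\end{align*}
The first term converges in probability to $\E[U(Z-p)w(X)]$ by the weak law of large numbers, under mild moment conditions (e.g.\ $\E[U^2 w(X)^2]<\infty$). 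The second term is the remainder that must be shown to be $o_\P(1)$, and this is where cross-fitting does the work.

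For the remainder, write it as $\tfrac{1}{n}\sum_{j=1}^{k}\sum_{i:j(i)=j} U_i(Z_i-p)(\hat{w}_{-j}(X_i)-w(X_i))$. Conditioning on the training data $\mathcal{D}_{-j}$ used to fit $\hat{w}_{-j}$, the observations with $j(i)=j$ are i.i.d.\ and independent of $\hat{w}_{-j}$. Hence the conditional mean of each within-fold average is $\E[U(Z-p)(\hat{w}_{-j}(X)-w(X))\mid\mathcal{D}_{-j}]$, which by Cauchy--Schwarz is bounded in absolute value by $\sqrt{\E[U^2(Z-p)^2]}\cdot\sqrt{\E[(\hat{w}_{-j}(X)-w(X))^2\mid\mathcal{D}_{-j}]}$; and the conditional variance is at most $n_j^{-1}\E[U^2(Z-p)^2(\hat{w}_{-j}(X)-w(X))^2\mid\mathcal{D}_{-j}]$, which under a uniform fourth-moment bound on $U(Z-p)$ (or a boundedness assumption on $\hat{w}_{-j}-w$ followed by truncation) is dominated by a constant times $\E[(\hat{w}_{-j}(X)-w(X))^2\mid\mathcal{D}_{-j}]$ times $n_j^{-1}$. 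Taking unconditional expectations and invoking the assumption $\E[(\hat{w}_{-j}(X)-w(X))^2]\to 0$ yields unconditional $L_2$-convergence of the remainder to zero, hence convergence in probability.

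Combining these pieces, numerator and denominator each converge in probability to $\E[U(Z-p)w(X)]$ for $U\in\{Y,D\}$; applying Slutsky and the definition of $\tau_w$ in \autoref{prop:weightedlate} gives $\hat{\tau}^{\hat{h}}\cp \tau_w$. The analogous argument for $\hat{\tau}_{\hat{w}}$ proceeds by writing $\sCov^{\hat{w}}(U,Z)=\sE[\hat{w}(X) U Z]/\sE[\hat{w}(X)]-\sE[\hat{w}(X)U]\sE[\hat{w}(X)Z]/\sE[\hat{w}(X)]^2$ and applying the same decomposition termwise. The \emph{main obstacle} is controlling the remainder term $\sE[U(Z-p)(\hat{w}-w)(X)]$ without imposing rate conditions on $\hat{w}_{-j}$ beyond $L_2$-consistency; cross-fitting is essential here, since it removes the own-observation dependence that would otherwise require Donsker-type complexity restrictions. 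A secondary regularity issue is that $\E[\alpha(X)w(X)]\neq 0$ must be assumed so that the limiting denominator is bounded away from zero, ensuring the ratio is well-defined in the limit.
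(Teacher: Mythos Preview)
Your proposal is correct and follows essentially the same approach as the paper: decompose each empirical moment into an oracle-weight term (handled by the LLN) plus a remainder involving $\hat{w}_{-j}-w$, and control the remainder in $L_2$ using cross-fitting together with the assumed $L_2$-consistency of the weights, then conclude by Slutsky. The only notable difference is packaging: the paper abstracts this step into a general lemma on cross-fitted weighted averages (showing $\tfrac{1}{n}\sum_i W_iA_i \cp \E[w(X)a(X)]$ under bounded conditional moments of $A$) and then applies it to numerator and denominator, whereas you carry out the same argument inline; the paper also first establishes the $\hat{\tau}^{\hat{h}}=\hat{\tau}_{\hat{w}}$ equivalence under residualization so that only one estimator needs to be analyzed, while you propose handling $\hat{\tau}_{\hat{w}}$ termwise, which is equally valid.
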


We note that the notion of convergence of $w_{-j}$ to $w$ as well as the cross-fitting approach we employ for weights are similar to the assumption and construction \citet{Chen2020-xg} employ for instruments with a linear second-stage model, while we focus specifically on compliance estimation in a completely nonparametric model.

While \autoref{prop:estimatedconsistent} establishes consistency, we also care about the asymptotic distribution.
$\hat{\tau}^{\hat{h}}, \hat{\tau}_{\hat{w}}$ can generally differ and have different asymptotic distributions.
We now residualize $Y_i$ and $D_i$ linearly with respect to a constant and the weight $\hat{w}_{-j(i)}(X_i)$ (and potentially additional pre-assignment control variables) and invoke \autoref{prop:controlequivalence}, which applies to the estimators with estimated weights as well, to conclude that $\hat{\tau}^{\hat{h}} = \hat{\tau}_{\hat{w}}$ in that case.
Assuming this residualization throughout, we can thus focus on $\hat{\tau}_{\hat{w}}$.

Under arbitrary treatment effect heterogeneity and estimated weights, there is little hope of establishing $\sqrt{n}$-consistency relative to the $w$-weighted LATE $\tau_w$ without imposing stronger assumptions on the estimation of weights and/or the heterogeneity of conditional LATEs.
Instead of imposing such assumptions ex-ante, we consider an estimand that is itself a function of the data, namely the cross-fitted hybrid estimand
\begin{align*}
    \tau_{\hat{w}}^{\text{CF}}
    =
    \frac{\frac{1}{k} \sum_{j=1}^k \E[\alpha(X) \: \hat{w}_{-j}(X) \: \tau(X) |\hat{w}_{-j}]}{\frac{1}{k} \sum_{j=1}^k \E[\alpha(X) \: \hat{w}_{-j}(X)  |\hat{w}_{-j}]},
\end{align*}
where conditional expectations consider a draw of the data independent of $\hat{w}_{-j}$.
In other words, this estimand represents a $\hat{w}^{\text{CF}}(\cdot) = \frac{1}{k} \sum_{j=1}^k \hat{w}_{-j}(\cdot)$-weighted LATE that depends itself on the data through the cross-fitting splits and weight functions.
Relative to this estimand, we obtain an asymptotic distribution that is equivalent to that for fixed weights:

\begin{proposition}[Asymptotic distribution with estimated weights]
    \label{prop:estimatedasymptotic}
    Under the assumptions of \autoref{prop:estimatedconsistent} and regularity conditions,
    \begin{align*}
        \sqrt{n} (\hat{\tau}_{\hat{w}} - \tau_{\hat{w}}^{\text{CF}})
        & \cd \N(0,V_w),
        &
        V_w &= \frac{\E\left[w^2(X) \: \sigma^2(X) \right]}{p(1-p)\E^2[\alpha(X) \: w(X)]}
    \end{align*}
    as $n \rightarrow \infty$,
    where
    $
        \sigma^2(x) = (1-p) \E[\varepsilon^2|Z=1,X=x] + p \E[\varepsilon^2|Z=0,X=x],
    $
    and $\varepsilon$ is the population regression residual in the linear regression of $Y - \tau_w D$ on a constant, the weight $w(X)$, and any other control variables (where we assume that $Y_i$ and $D_i$ have been residualized accordingly).
\end{proposition}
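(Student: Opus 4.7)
The plan is to reduce the asymptotic analysis of the ratio $\hat{\tau}_{\hat{w}}$ to a joint central limit theorem for its numerator and denominator, centered at the cross-fitted estimand $\tau_{\hat{w}}^{\text{CF}}$, and then invoke the delta method. After residualizing $Y_i$ and $D_i$ linearly with respect to a constant, the weight, and any control variables, \autoref{prop:controlequivalence} (extended to estimated weights) gives $\hat{\tau}^{\hat{h}} = \hat{\tau}_{\hat{w}}$, so I can work with the augmented-instrument representation. Write $\widehat{N} = \sE[\hat{w}_{-j(i)}(X_i) Y_i (Z_i - \bar{Z})]$ and $\widehat{D} = \sE[\hat{w}_{-j(i)}(X_i) D_i (Z_i - \bar{Z})]$; these are empirically weighted sums whose weights are fit on held-out folds.

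The main engine is a central limit theorem for cross-fitted empirical weighted sums of the form $\bar{S} = \frac{1}{n}\sum_i \hat{w}_{-j(i)}(X_i)\psi(O_i)$, where $O$ denotes a generic observation. The data-dependent centering in $\tau_{\hat{w}}^{\text{CF}}$ is chosen precisely to match $\E[\bar{S} \mid \hat{w}_{-1},\ldots,\hat{w}_{-k}]$: within fold $j$, conditional on $\hat{w}_{-j}$ (which uses only data outside the fold), the summands $\hat{w}_{-j}(X_i)\psi(O_i)$ are iid with conditional mean $\E[\hat{w}_{-j}(X)\psi(O)\mid \hat{w}_{-j}]$ and finite variance. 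A conditional Lindeberg--Feller CLT yields asymptotic normality of $\sqrt{n}(\bar{S} - \text{centering})$ with conditional variance $\frac{1}{k}\sum_j \E[\hat{w}_{-j}^2(X)\psi(O)^2\mid \hat{w}_{-j}]$, which converges to $\E[w^2(X)\psi(O)^2]$ under the $L^2$ hypothesis together with suitable uniform integrability. This is the general empirically-weighted-sum lemma alluded to in the introduction.

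Applying this CLT with $\psi(O) = \tilde{Y}(Z-p)$ and $\psi(O) = \tilde{D}(Z-p)$ (residualized quantities) yields joint normality of $\sqrt{n}(\widehat{N} - N^{\text{CF}}, \widehat{D} - D^{\text{CF}})$; replacing $p$ by $\bar{Z}$ is negligible under $Z \orth X$. By construction $\tau_{\hat{w}}^{\text{CF}} = N^{\text{CF}}/D^{\text{CF}}$, so the delta method gives
\begin{align*}
    \sqrt{n}(\hat{\tau}_{\hat{w}} - \tau_{\hat{w}}^{\text{CF}}) = \frac{1}{D^{\text{CF}}}\left(\sqrt{n}(\widehat{N} - N^{\text{CF}}) - \tau_{\hat{w}}^{\text{CF}}\sqrt{n}(\widehat{D} - D^{\text{CF}})\right) + \oh(1).
\end{align*}
Using $Z \orth (X,\varepsilon)$, the limiting variance of the combined numerator simplifies to $p(1-p)\E[w^2(X)\sigma^2(X)]$, while $D^{\text{CF}} \cp p(1-p)\E[\alpha(X)w(X)]$ by a cross-fitted analogue of \autoref{prop:weightedlate}, producing the stated $V_w$.

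The principal obstacle is establishing the CLT for cross-fitted sums under only $L^2$-consistent nonparametric weights: without a $\sqrt{n}$-rate for $\hat{w}$, one cannot linearize it around $w$ directly. The whole purpose of centering at the data-dependent estimand $\tau_{\hat{w}}^{\text{CF}}$ is to absorb the otherwise non-negligible finite-sample bias from $\hat{w} \neq w$ into the definition of the target, leaving only within-fold stochastic fluctuations to control. Verifying the conditional Lindeberg condition uniformly in $\hat{w}_{-j}$ (requiring moment bounds on $\hat{w}$ and on $Y, D$) and making rigorous the substitution $\E[\hat{w}_{-j}^2\psi^2\mid \hat{w}_{-j}] \to \E[w^2\psi^2]$ via the $L^2$ hypothesis are the main technical tasks.
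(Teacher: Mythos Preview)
Your proposal is correct and follows essentially the same architecture as the paper: reduce to the augmented-instrument representation via \autoref{prop:controlequivalence}, invoke a general CLT for cross-fitted weighted sums (the paper's \autoref{lem:weightedclt}) centered at the fold-averaged conditional means, and pass to the ratio via an algebraic expansion equivalent to your delta-method step. The one notable technical difference is in how the cross-fitted CLT is proved: you propose a conditional Lindeberg--Feller argument within folds, whereas the paper decomposes $\hat{w}_{-j(i)}(X_i) = w(X_i) + (\hat{w}_{-j(i)}(X_i) - w(X_i))$, applies the ordinary CLT to the first part, and shows the second part is $o_{\P}(1)$ in $L^2$ using only the $L^2$-consistency hypothesis---this sidesteps the need to verify Lindeberg uniformly in the realized $\hat{w}_{-j}$, which you correctly flag as a nontrivial burden in your approach. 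The paper also devotes a separate ``preliminary calculation'' to showing that replacing the empirically residualized $\ddot{Y},\ddot{D}$ by their population-residualized counterparts $Y^*,D^*$ is $\sqrt{n}$-negligible (not just the $\bar{Z}\to p$ substitution you mention, but also the estimated-versus-population OLS coefficients), again leaning on \autoref{lem:weightedclt} applied to $W$ and $W^2$; you gloss over this, but it is routine once the weighted-sum lemma is in hand.
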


This result is related to the idea in \citet{Chernozhukov2017-ug} of separating out variation conditional on sample splits from variation through sample splits themselves. We simply focus on the former and take the sample splits -- and corresponding weights -- as given, while ignoring variation through sample splits and weight estimation itself. The ``variational'' inference approach from \citet{Chernozhukov2017-ug} could complement this purely conditional inference.
As an alternative, we could have considered the empirically-weighted estimand
\begin{align*}
    \tau^{\text{empirical}}_{\hat{w}} = \frac{\sE[\alpha(X) \: \hat{w}(X) \: \tau(X)]}{\sE[\alpha(X) \: \hat{w}(X)]}
    =
    \frac{\sum_{i=1}^n \alpha(X_i) \: \hat{w}_{-j(i)}(X_i) \: \tau(X_i)}{\sum_{i=1}^n \alpha(X_i) \: \hat{w}_{-j(i)}(X_i)}
\end{align*}
reducing the asymptotic variance further to the residual in a regression of $Y - \tau(X) D$, but ignoring additional within-fold variation.

When treatment effects have variation that vanishes sufficiently quickly -- for example in the case where treatment effects are themselves local to zero -- we obtain $\sqrt{n}$-consistency relative to the $w$-weighted LATE:
\begin{corollary}[$\sqrt{n}$-consistency]
    Assume also that $\Var(\tau(X)) = \Oh(1/n)$ (for example, $\tau(x) = \frac{\mu(x)}{\sqrt{n}}$ as in \autoref{prop:asymptoticlocal}).
    Then
    $
        \sqrt{n} (\hat{\tau}_{\hat{w}} - \tau_w)
        \cd \N(0,V_w)
    $
    as $n \rightarrow \infty$.
\end{corollary}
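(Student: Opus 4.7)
The plan is to decompose
\begin{align*}
\sqrt{n}(\hat{\tau}_{\hat{w}} - \tau_w) = \sqrt{n}(\hat{\tau}_{\hat{w}} - \tau_{\hat{w}}^{\text{CF}}) + \sqrt{n}(\tau_{\hat{w}}^{\text{CF}} - \tau_w)
\end{align*}
and apply Slutsky's theorem. By \autoref{prop:estimatedasymptotic} the first summand already converges in distribution to $\N(0, V_w)$, so it suffices to show $\sqrt{n}(\tau_{\hat{w}}^{\text{CF}} - \tau_w) \cp 0$ under the added assumption $\Var(\tau(X)) = \Oh(1/n)$.

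To that end, I would start from the fractional form
\begin{align*}
\tau_{\hat{w}}^{\text{CF}} - \tau_w = \frac{\frac{1}{k}\sum_{j=1}^k \E[\alpha(X) \hat{w}_{-j}(X) (\tau(X) - \tau_w) \mid \hat{w}_{-j}]}{\frac{1}{k}\sum_{j=1}^k \E[\alpha(X) \hat{w}_{-j}(X) \mid \hat{w}_{-j}]}.
\end{align*}
The denominator converges in probability to $\E[\alpha(X) w(X)] > 0$ by Cauchy--Schwarz and the $L^2$ consistency $\E[(\hat{w}_{-j}(X) - w(X))^2] \to 0$ from \autoref{prop:estimatedconsistent}. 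The key algebraic observation is that $\E[\alpha(X) w(X) (\tau(X) - \tau_w)] = 0$ by the very definition of $\tau_w$, so each numerator summand can be recentered as $\E[\alpha(X) (\hat{w}_{-j}(X) - w(X)) (\tau(X) - \tau_w) \mid \hat{w}_{-j}]$. Applying conditional Cauchy--Schwarz together with $\alpha(X) \leq 1$ bounds each such summand by $\sqrt{\E[(\hat{w}_{-j}(X) - w(X))^2 \mid \hat{w}_{-j}]} \cdot \sqrt{\E[(\tau(X) - \tau_w)^2]}$.

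The first factor is $\oh(1)$ by Markov's inequality applied to the unconditional $L^2$ rate. For the second, I would split $\tau(X) - \tau_w = (\tau(X) - \E[\tau(X)]) + (\E[\tau(X)] - \tau_w)$: the first piece contributes $\Var(\tau(X)) = \Oh(1/n)$, while another Cauchy--Schwarz gives $(\E[\tau(X)] - \tau_w)^2 \leq \Var(\tau(X)) \cdot \E[\alpha^2(X) w^2(X)] / \E[\alpha(X) w(X)]^2 = \Oh(1/n)$. Hence $\sqrt{\E[(\tau(X) - \tau_w)^2]} = \Oh(1/\sqrt{n})$, and the product is $\oh(1/\sqrt{n})$. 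Averaging over $j$ and dividing by the converging denominator then yields $\sqrt{n}(\tau_{\hat{w}}^{\text{CF}} - \tau_w) \cp 0$, completing the argument via Slutsky.

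The main obstacle, and the reason the argument works at all, is the cancellation from $\E[\alpha(X) w(X) (\tau(X) - \tau_w)] = 0$. Without centering $\tau$ by $\tau_w$, the $L^2$ convergence of $\hat{w}_{-j}$ alone would leave a term of order (weight error) $\times$ $\E[\tau(X)]$ that need not be $o(1/\sqrt{n})$. The centering trades this leading bias for a residual involving the \emph{variability} of $\tau$, which the new hypothesis $\Var(\tau(X)) = \Oh(1/n)$ is precisely tailored to control. One also has to use the cross-fitting structure carefully so that, conditional on $\hat{w}_{-j}$, the test point $X$ is distributed as a fresh draw independent of the weight estimate, allowing the conditional $L^2$ bounds to be invoked.
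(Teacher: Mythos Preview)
Your argument is correct and is precisely the route the paper implicitly has in mind. The paper does not spell out a separate proof for this corollary; it is presented as an immediate consequence of \autoref{prop:estimatedasymptotic}, and your decomposition $\sqrt{n}(\hat{\tau}_{\hat{w}} - \tau_w) = \sqrt{n}(\hat{\tau}_{\hat{w}} - \tau_{\hat{w}}^{\text{CF}}) + \sqrt{n}(\tau_{\hat{w}}^{\text{CF}} - \tau_w)$ together with Slutsky is exactly that intended step. The centering trick $\E[\alpha(X) w(X)(\tau(X)-\tau_w)]=0$, the Cauchy--Schwarz bound pairing the $L^2$ weight error with $\sqrt{\E[(\tau(X)-\tau_w)^2]}$, and the observation that the latter is $\Oh(1/\sqrt{n})$ under the hypothesis are all sound; your closing paragraph correctly identifies why the centering is essential.
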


When we aim to obtain compliance weights $\alpha(x) = \E[D|Z=1, X=x] - \E[D|Z=0, X=x]$ for the sake of minimizing the empirical variance $V_w$ in the homogeneous and homoscedastic case, we face a causal estimation problem, since the compliance score $\alpha(X)$ is not directly observed -- for each observation $i$, we only observe treatment $D=D_i$ given the realization $Z=Z_i$, but not for the counterfactual assignment $Z=1 - Z_i$.
We therefore face a first stage heterogeneous treatment effect estimation problem (where the instrument $Z$ represents treatment, and the treatment variable $D$ is the outcome of interest).
Two natural estimation strategies in this case are:
\begin{enumerate}
    \item Using a low-dimensional fully interacted model to estimate compliance, such as linear (or logistic) regression, and obtaining weights from the positive part of the in-sample fit. In that case, the target weight function $w$ is (the positive part of) a model-specific approximation of $\alpha$, such as (the positive part of) the best linear predictor.
    \item Using a flexible nonparametric tool, such as the causal forest \citep{wager2018estimation}. In that case, the target weight function is $\alpha$, and the approximation is the (positive part of the) out-of-fold estimate. 
    This is similar to \citet{Huntington-Klein}, who also uses the causal forest for compliance estimation, but instead groups observations by quintiles of honest compliance estimates.
\end{enumerate}
In general, with a binary instrument, any algorithm for the consistent estimation of heterogeneous treatment effects from experimental data is also a tool for the estimation of compliance weights in our setting, and better estimation translates into lower asymptotic variance under homogeneity and homoscedasticity.
Additional shrinkage that goes beyond the regularization employed to reduce out-of-sample mean-squared error in prediction may be helpful to mitigate the impact of heterogeneous treatment effects when the local average treatment effect $\tau$ is the estimand of interest.

Finally, for inference on treatment effects, we also want to estimate the asymptotic variance
\begin{align*}
    V_w &= \frac{\E\left[w^2(X) \: \sigma^2(X) \right]}{p(1-p)\E^2[\alpha(X) \: w(X)]},
    & 
    \sigma^2(x)
    &=
    (1-p) \E[\varepsilon^2|Z=1,X=x] + p \E[\varepsilon^2|Z=0,X=x]
\end{align*}
where $\varepsilon$ are the residuals in a linear regression of $Y - D \tau_w$ on the control variables, including the weight $w(X)$.
We can consistently estimate the asymptotic variance by the sample analogue (where we assume that the instrument is binary):

\begin{proposition}
    [Valid variance estimation]
    \label{prop:robustse}
    Under the assumptions of \autoref{prop:estimatedasymptotic} and regularity conditions,
    \begin{align*}
        \hat{V}_{\hat{w}}
        =
        \frac{ \left(\frac{\sE\left[\hat{w}^2(X) \: \hat{\varepsilon}^2|Z=1\right]}{\sP(Z=1) \sE[\hat{w}^2(X)]} + \frac{\sE\left[\hat{w}^2(X) \: \hat{\varepsilon}^2|Z=0\right]}{\sP(Z=0) \sE[\hat{w}^2(X)]}\right) \sVar^2(Z) }{\left(\sCov^{\hat{w}}(Z , D)\right)^2}
        \cd V_w
    \end{align*}
    for residuals $\hat{\varepsilon}_i$ in the linear regression of $Y_i - \hat{\tau}_{\hat{w}} D_i$ on the control variables, including $\hat{w}(X_i)$.
\end{proposition}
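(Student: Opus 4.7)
The plan is to establish $\hat V_{\hat w} \cp V_w$ by showing that each empirical average appearing in $\hat V_{\hat w}$ converges in probability to a deterministic limit, and then combining them via the continuous mapping theorem. The argument decomposes into three pieces: a residual-replacement step that swaps $\hat\varepsilon_i$ for the population residual $\varepsilon_i$ (evaluated at $\tau_w$ and after residualization); cross-fitted laws of large numbers for each sub-average; and a deterministic simplification using the definition of $\sigma^2(x)$ together with $Z \orth X$.

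First I would expand $\hat\varepsilon_i - \varepsilon_i = -(\hat\tau_{\hat w} - \tau_w) D_i + (\hat\pi - \pi)'(1, \hat w_{-j(i)}(X_i))' + \pi_w \bigl(w(X_i) - \hat w_{-j(i)}(X_i)\bigr)$, where the coefficient errors are $\oh(1)$ by \autoref{prop:estimatedconsistent} and \autoref{prop:estimatedasymptotic} and $\sE[(\hat w_{-j(i)}(X) - w(X))^2] \cp 0$ by the hypothesis maintained from \autoref{prop:estimatedconsistent}. Squaring and applying Cauchy--Schwarz with the assumed moment bounds yields $\sE[\hat w^2(X) \hat\varepsilon^2 \mid Z=z] - \sE[\hat w^2(X) \varepsilon^2 \mid Z=z] = \oh(1)$ for $z \in \{0,1\}$, so the fitted residuals may be replaced by $\varepsilon$ without altering the limit. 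Next, because cross-fitting makes $\hat w_{-j}$ independent of the fold-$j$ observations, I can condition on the out-of-fold data and apply the same empirically weighted weak law that underlies \autoref{prop:estimatedconsistent} and \autoref{prop:estimatedasymptotic} to obtain $\sE[\hat w^2(X)\varepsilon^2 \mid Z=z] \cp \E[w^2(X)\, \E[\varepsilon^2 \mid Z=z, X]]$ (using $Z \orth X$ to factor the conditional expectation), $\sE[\hat w^2(X)] \cp \E[w^2(X)]$, $\sVar(Z) \cp p(1-p)$, and a corresponding limit for $\sCov^{\hat w}(Z,D)$ involving $\E[w(X)\alpha(X)]$. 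Combining these pieces with the identity $\sigma^2(x) = (1-p)\E[\varepsilon^2 \mid Z=1, X=x] + p\,\E[\varepsilon^2 \mid Z=0, X=x]$ via Slutsky and the continuous mapping theorem delivers the claimed limit $V_w$.

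The main obstacle is the residual-replacement step: because $\hat\varepsilon_i$ depends jointly on $\hat\tau_{\hat w}$, on the OLS coefficients from residualization against the controls (one of which is $\hat w(X_i)$ itself), and on $\hat w$ through the multiplicative factor $\hat w^2(X_i)$, controlling the cross-terms in $\hat w^2(X_i)(\hat\varepsilon_i^2 - \varepsilon_i^2)$ uniformly requires combining the $L^2$ rate on $\hat w_{-j}$ with moment conditions on $Y, D, X$ and the standard $\sqrt{n}$-consistency of OLS on the residualization regressors. Once this interchange is justified, the remaining steps are routine applications of the cross-fitted weak law and elementary algebra.
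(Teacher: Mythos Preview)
Your proposal is correct and follows essentially the same route as the paper: the paper's proof is a one-line pointer to the same three ingredients you isolate---consistency of the OLS residualization coefficients (your residual-replacement step), the cross-fitted weak law of \autoref{lem:weightedclt} applied to powers of the weight (your second step), and the ordinary law of large numbers for $\sVar(Z)$. Your write-up simply makes explicit the decomposition and the Slutsky/continuous-mapping combination that the paper leaves implicit.
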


Standard heteroscedasticity-robust variance estimators in the weighted IV regression (after residualization) with estimated weights similarly remain valid, where $\P(Z{=}1), \P(Z{=}0)$, and $\Var(Z)$ are typically estimated with weights (which are asymptotically equivalent when $X$ and $Z$ are independent, as in our setting).

    \section{Simulation}
\label{sec:simulation}

This section presents a simulation study that systematically compares the performance of the compliance-weighted IV estimator under a number different weight estimation procedures to that of unweighted IV. We consider a number of data generating processes that vary the degree of heterogeneity in compliance and treatment effects, as well as the heteroscedasticity of errors.

\subsection{Data-generating processes}
\label{sec:sim_DGP}

In this section, we demonstrate the finite-sample performance of the estimation procedures discussed in \autoref{sec:estimation}. Let $\delta_i, \varepsilon_i, \tau_i, \sim \mathcal{N}(\mathbf{0}, \Sigma)$, with  
\begin{align*}
	\Sigma &= 
	\begin{pmatrix} 
		1 & \rho_{\delta\varepsilon} & \rho_{\delta\tau}\sigma_{\tau} \\ 
		\rho_{\delta\varepsilon} & 1  & \rho_{\tau\varepsilon}\sigma_{\tau}\\ \rho_{\delta\tau}\sigma_{\tau}  & \rho_{\tau\varepsilon}\sigma_{\tau} & \sigma_{\tau}^2
	\end{pmatrix}
\end{align*}
where $\delta_i$ is the latent tendency to receive treatment, $\tau_i$ is the treatment effect, and $\varepsilon_i$ is the baseline untreated potential outcome for individual $i$.

Using the framework outlined in \autoref{sec:generalbinary}, let potential treatment indicators be defined as
\begin{align*}
	D_i(0) &= \Ind(\Phi^{-1}(\delta_i) > 1 - S_{AT}),
	&
	D_i(1) &= \Ind(\Phi^{-1}(\delta_i) > S_{NT})
\end{align*}
where $S_{AT}$ and $S_{NT}$ represent the share of always-takers and never-takers in the population, respectively. Realized outcomes and treatments are thus defined as
\begin{align*}
Y_{i} &= D_i\tau_{i} + (1+\zeta\delta_i)\err_{i},
&
D_i &= D_i(0)(1-Z_i) + D_i(1)Z_i
\end{align*}
in which scalar parameter $\zeta$ determines the degree of heteroscedasticity exhibited in the variance of the untreated outcome. When $\rho_{\delta\varepsilon} \neq 0$, observations in this simulation exhibit selection into treatment that is correlated with untreated potential outcomes, a setting in which IV methods can prove useful. Moreover, when $\rho_{\delta\tau} \neq 0$,
there will be differences between the unweighted and $w$-weighted IV estimands, as discussed in \autoref{sec:generalbinary}.

In order to explore how the performance of the compliance-weighted estimator relates to the researcher's ability to predict compliance, we employ a framework that allows us to vary the irreducible error in predicting treatment. Let 
\begin{align*}
X_i = \delta_i + \eta_i
\end{align*}
represent a noisy continuous predictor of treatment receipt, where $\eta_i \sim \mathcal{N}(0, \sigma_{\eta}^2)$ is a random error. Note that while $X$ is a linear function of $\delta$, compliance rate $\alpha(X) = \E[D|Z=1] - \E[D|Z=0]$ is not, making estimation a non-trivial exercise. We demonstrate the importance of selecting appropriate compliance estimation techniques for improving IV estimate precision and conducting valid inference in \autoref{sec:sim_results} below.

Using this framework, we produce simulations under four different data generating processes (DGPs). In DGP 1, the baseline case, there is a constant treatment effect and homoscedasticity. In DGP 2, there are heterogeneous treatment effects that are positively correlated with the latent tendency to receive treatment ($\rho_{\delta\tau} = 0.5$), consistent with a Roy-type model of selection on heterogeneous treatment effects. In DGPs 3 and 4, errors are heteroscedastic and positively ($\zeta = 0.25$) and negatively ($\zeta = -0.25$) correlated with covariates, respectively. For each DGP, we draw $M = 1,000$ Monte Carlo samples of $N = 1,000$ observations and generate treatment effect estimates using both unweighted IV and the compliance-weighted IV estimators while varying irreducible error $\sigma_{\eta} \in \{0.5, 1, 2\}$. In each simulation, always-takers, compliers, and never-takers comprise $0.05$, $0.25$, and $0.70$ of the population, respectively.

\subsection{Compliance weight estimation methods}
\label{sec:sim_methods}

To demonstrate the role of compliance estimation in the performance of the weighted estimator, we estimate compliance weights using three different feasible estimation procedures: in-sample OLS, cross-fitted OLS, and cross-fitted causal forest. 
In-sample weights are the compliance estimates generated by fitting a prediction function to the sample from which it was constructed. On the other hand, cross-fitted weights are out-of-sample fits produced using 5 randomly-partitioned folds, as discussed in \autoref{sec:estimation}.

As mentioned above, the compliance score $\alpha(X)$ is a non-linear function of $X$. So when using OLS to generate weights, we flexibly approximate $\alpha(X)$ by splitting $X$ into an ordinal series of bins, then estimating the compliance rate within each bin. Specifically, let $\hat{t}_{jz}(J)$ represent a function that predicts treatment rates conditional on the instrument and a coarse ranking of $X_i$ that assigns individuals into one of $J<<N$ bins of equal size partitioning the distribution of $X_i$
\begin{align*}
	\hat{t}_{jz}(J) &=  \frac{\sum_{i=1}^{N} \Ind\left(Z_i = z \cap  F_n(X_i) \in \left[\frac{j-1}{J}, \frac{j}{J}\right)\right) D_{i}}{\sum_{i=1}^{N} \Ind\left(Z_i = z \cap F_n(X_i) \in \left[\frac{j-1}{J}, \frac{j}{J}\right)\right)}
\end{align*}
where $F_N(\cdot)$ is the empirical distribution function for a sample of size $N$. Predicted rates are thus the estimated mean treatment rates among those in group $z$ and bin $j$, and the complexity of the function is increasing in $J$. The bin-specific compliance weight estimates are therefore
\begin{align*}
	\hat{w_j} = \hat{t}_{j1}(J) - \hat{t}_{j0}(J).
\end{align*}
Note that these $w_j$ are equivalent to the coefficients from an OLS regression of $D$ on $Z$ interacted with dummy indicators for each bin. We can then control the complexity of this compliance prediction function, and therefore its tendency to over-fit in small samples, by varying $J$. 

Alternatively, when using the causal forest algorithm \citep{wager2018estimation} to estimate the weights, compliance is nonparametrically and ``honestly" estimated by predicting $D$ using $Z$ and $X$ directly (rather than the $J$-bin partitioning of $X$ used in OLS estimation), with regularization parameters chosen to minimize out-of-bag RMSE. In this context, ``honest" estimation means that tree splits on $X$ are selected by the algorithm to maximize compliance heterogeneity using only a randomly-drawn subset of the data, while compliance estimates are fitted using only the observations excluded from this subset. This allows for asymptotically unbiased estimation of compliance conditional on $X$.

\subsection{Results}
\label{sec:sim_results}

\autoref{fig:w_dist} presents the analytically-derived oracle (i.e., population) compliance weights $\alpha(X)$ across the cumulative distribution of $X$ for each value of irreducible error $\eta$.\footnote{See \autoref{sec:sim_deriv} for derivations of the oracle compliance weight distribution.} When irreducible error is relatively small, compliance weights are heavily concentrated among compliers, and a substantial share of observations receives no weight whatsoever. However, as the irreducible error increases, the distribution of compliance weights flattens, placing relatively less weight on compliers. Throughout the simulation results presented below, the compliance-weighted IV estimates generated using these oracle weights represent an upper-bound on the performance of the compliance-weighted IV estimator possessing full (population) information of the joint distribution of observed variables $X$, $D$, and $Z$.

\autoref{fig:density_no_ols} presents both unweighted and compliance-weighted IV estimate distributions corresponding to each of the DGPs described above using oracle weights. Similarly, \autoref{fig:rmse} presents the RMSE performance of the compliance-weighted IV estimator relative to unweighted IV in estimating the LATE. First, in order to temporarily abstract away from the role of compliance weight estimation error, consider the case in which oracle weights are used in each of the simulations. The role of irreducible error first-stage prediction problem (reflected by $\sigma_{\eta}$) is immediately evident in each of these figures. Regardless of the data-generating process, when the error is small (e.g., $\sigma_{\eta} = 0.5$), the reduction in RMSE relative to 2SLS is large (roughly 40\% in DGP 1, the baseline homogeneous case). As $\sigma_{\eta}$ grows, the information content in $X$ weakens, reducing prediction quality and leading to more modest improvements in estimator performance over unweighted IV. 

Next, consider the performance of the compliance-weighted estimator using the feasible OLS weight estimation schemes, considering two cases: one in which the complexity of the prediction function is low ($J = 10$), and one in which the complexity of the prediction function is high ($J = 50$). When $J = 10$, there 100 observations (including an average of 25 compliers) per bin, allows for relatively precise estimation of the bin-specific compliance weights. In this case, weights estimated in-sample generally offer better performance than those estimated through cross-fitting, yielding substantially lower variance and minimal bias.

However, when the complexity of the OLS prediction function is high ($J = 50$), there are only 20 observations (including an average of 5 compliers) per bin. In this case, the potential for bias due to over-fitting is much greater. Indeed, the results in \autoref{fig:decomp_mse} demonstrate that when $X$ is an insufficiently strong predictor of compliance (e.g. when $\sigma_{\eta} = 2$) and weights are over-fitted, the in-sample compliance-weighted IV estimator can exhibit substantial bias. This results in coverage below the nominal rate, as shown in \autoref{tab:coverage_late} and \autoref{tab:coverage_wlate}, which present coverage rates for the compliance-weighted IV estimators with respect to the population LATE and compliance-weighted LATEs, respectively. However, consistent with the analytical results in \autoref{sec:estimation}, cross-fitted compliance-weighted IV estimators allow for valid inference regardless of the level of irreducible error or model complexity. Nonetheless, while the bias resulting from over-fitted weight estimates can be eliminated through the use of cross-fitting, this may introduce variance that exceeds the (squared) bias that it eliminates, potentially resulting in greater MSE.

Yet we need not restrict ourselves to using only unpenalized linear models to generate compliance rates. As described in \autoref{sec:oracle}, the asymptotic variance of the IV estimator for DGP 1 depends on the prediction error of the instrument. This suggests that in many settings, superior performance can be obtained by exploiting the bias-variance trade-off in compliance prediction through regularization. The results in \autoref{fig:rmse} and \autoref{fig:decomp_mse} show that compliance-weighted IV estimates using weights generated by a regularized prediction function such as the causal forest have substantially lower variance than those using unpenalized cross-fitted weights.

Note that regardless of the degree of noise in the first stage ($\sigma_{\eta}$) or the weight estimation procedure employed, the relative performance of the $w$-weighted estimator exhibits a similar pattern across the four DGPs. Relative to baseline case DGP 1 (homogeneous and homoscedastic), performance under heteroscedasticity depends on the manner in which the squared residuals are correlated with the compliance weights. When this correlation is positive (DGP 3), then performance declines, since the variance among compliers is above average; when this correlation is negative (DGP 4), then performance improves, since the variance among compliers is below average. Similarly, the RMSE performance dips slightly when treatment effects are heterogeneous and correlated with the weights (DGP 2), since in this case the compliance-weighted estimand differs from the LATE, as discussed in \autoref{sec:generalbinary}. However, the results in DGP 2 strongly suggest that  when the heterogeneity in treatment effects is limited, compliance-weighted estimates can in practice be interpreted as approximations to the LATE.

    \section{Conclusion}
\label{sec:conclusion}

In this article, we contribute to a recent literature in econometrics \citep{coussens2018, abadie2019instrumental,Huntington-Klein,Borusyak2020-xu} and build upon prior results in biostatistics \citep{Joffe2003-uq} that consider improving IV estimation through weighting (or selecting) samples according to the probability of compliance. Specifically, we relate the construction of interacted instruments to weighting in the IV estimator, consider a potential outcomes framework with a nonparametric first stage to clarify the resulting estimands, and derive the large-sample properties of the corresponding estimators. Moreover, we provide tools for valid inference even when instruments are estimated using nonparametric techniques such as machine learning algorithms for the estimation of heterogeneous treatment effects. We then document the improvements in precision that can be obtained from using these methods in a theoretical illustration and a simulation study.
We hope that the framework deployed in this article will provide practitioners with helpful insights and practical methods to improve inference in the often-encountered scenarios in which conventional IV estimates are noisy.
    
    \newpage
    
    \begin{graphstables}
        \clearpage
\newpage

\newcommand{\figfoot}{\floatfoot{Estimates were generated using 1,000 Monte Carlo samples with 1,000 observations each. `X-' prefix indicates that estimates were cross-fitted.}}

\begin{figure}
	\caption{Oracle compliance weights}
	\label{fig:w_dist}
	\centerline{\includegraphics[width=\textwidth]{./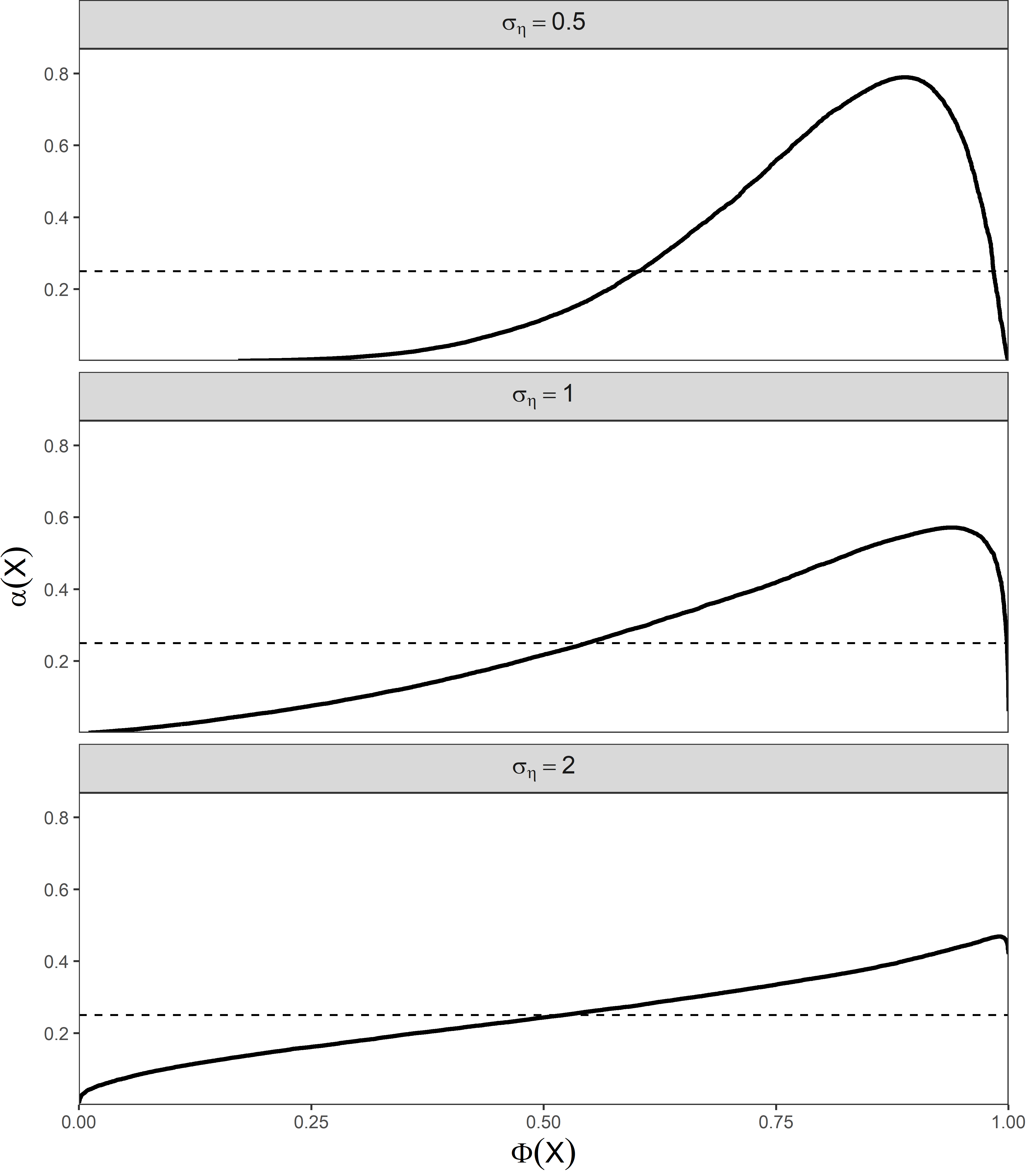}}
	\floatfoot{The oracle distribution of compliance weights is analytically derived (see \autoref{sec:sim_deriv} for details). The dashed horizontal line represents the population average compliance rate (0.25).}
\end{figure}

\begin{figure}
	\caption{Distribution of estimates relative to LATE}
	\label{fig:density_no_ols}
	\centerline{\includegraphics[width=\textwidth]{./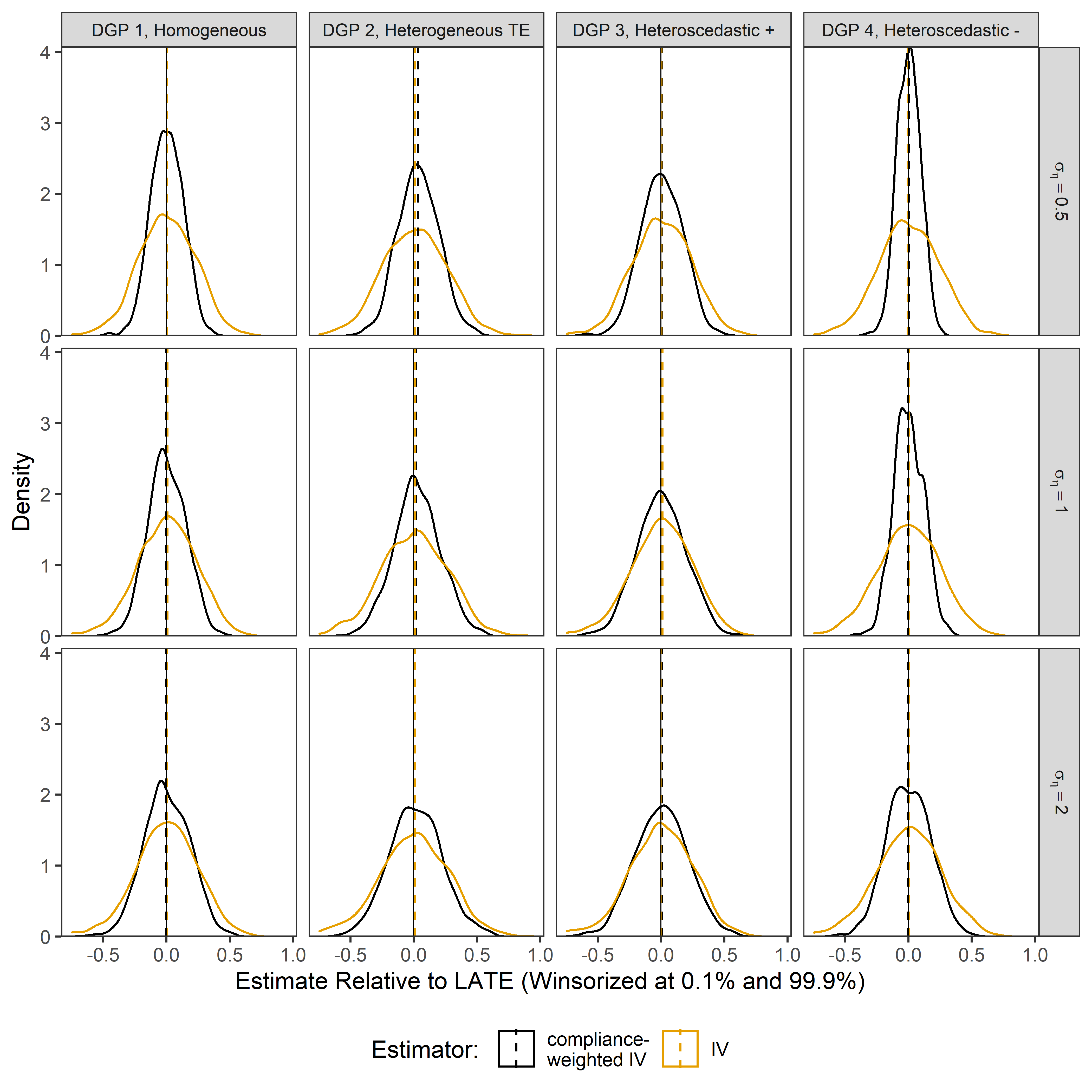}}
	\floatfoot{Distribution of estimates were generated using 1,000 Monte Carlo samples with 1,000 observations each, and oracle weights. The dashed vertical lines represent the median for each estimator.}
\end{figure}

\begin{figure}
	\caption{Compliance-weighted estimator precision relative to unweighted IV}
	\label{fig:rmse}
	\centerline{\includegraphics[width=\textwidth]{./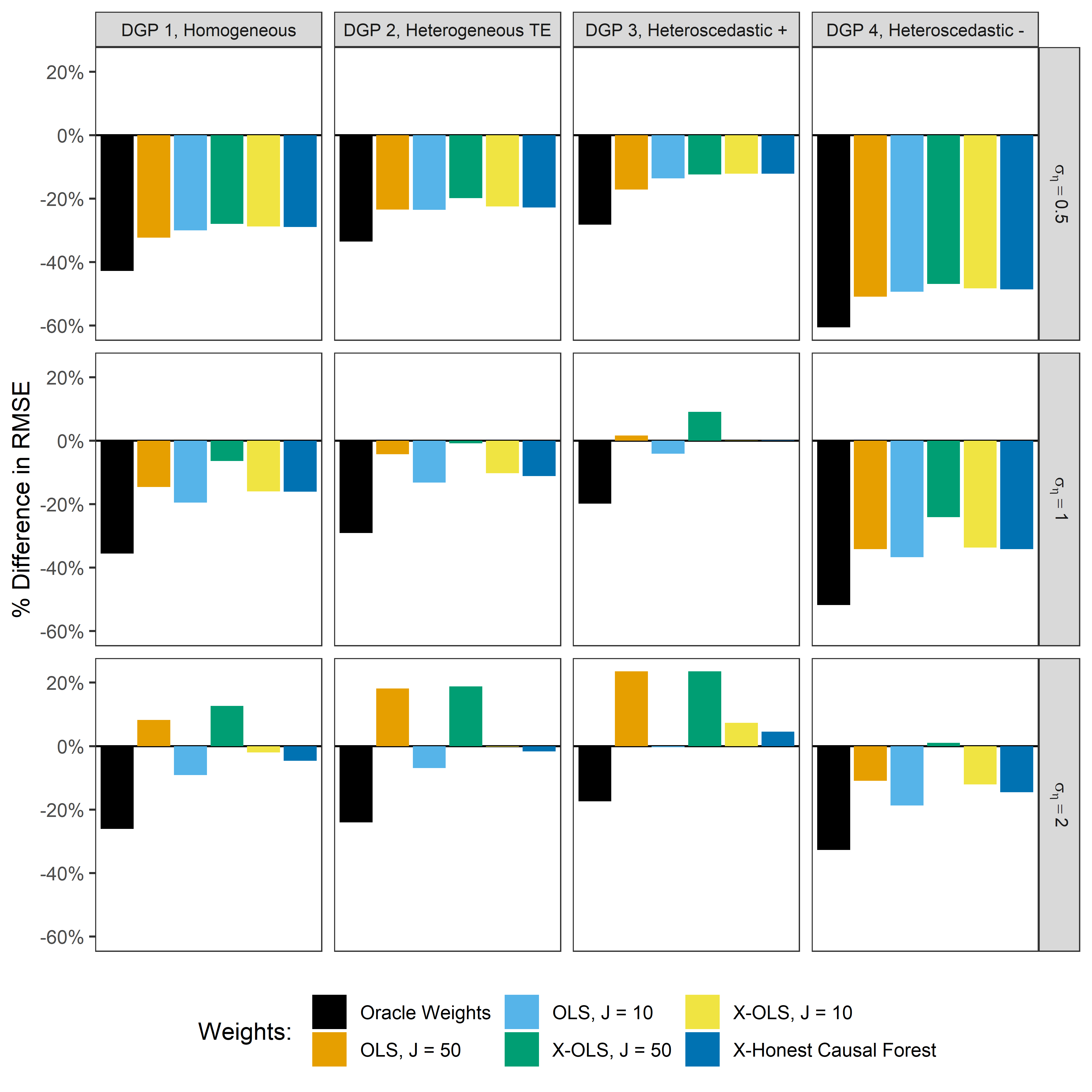}}
	\figfoot
\end{figure}

\begin{figure}
	\caption{MSE decomposition of compliance-weighted estimator relative to unweighted IV}
	\label{fig:decomp_mse}
	\centerline{\includegraphics[width=\textwidth]{./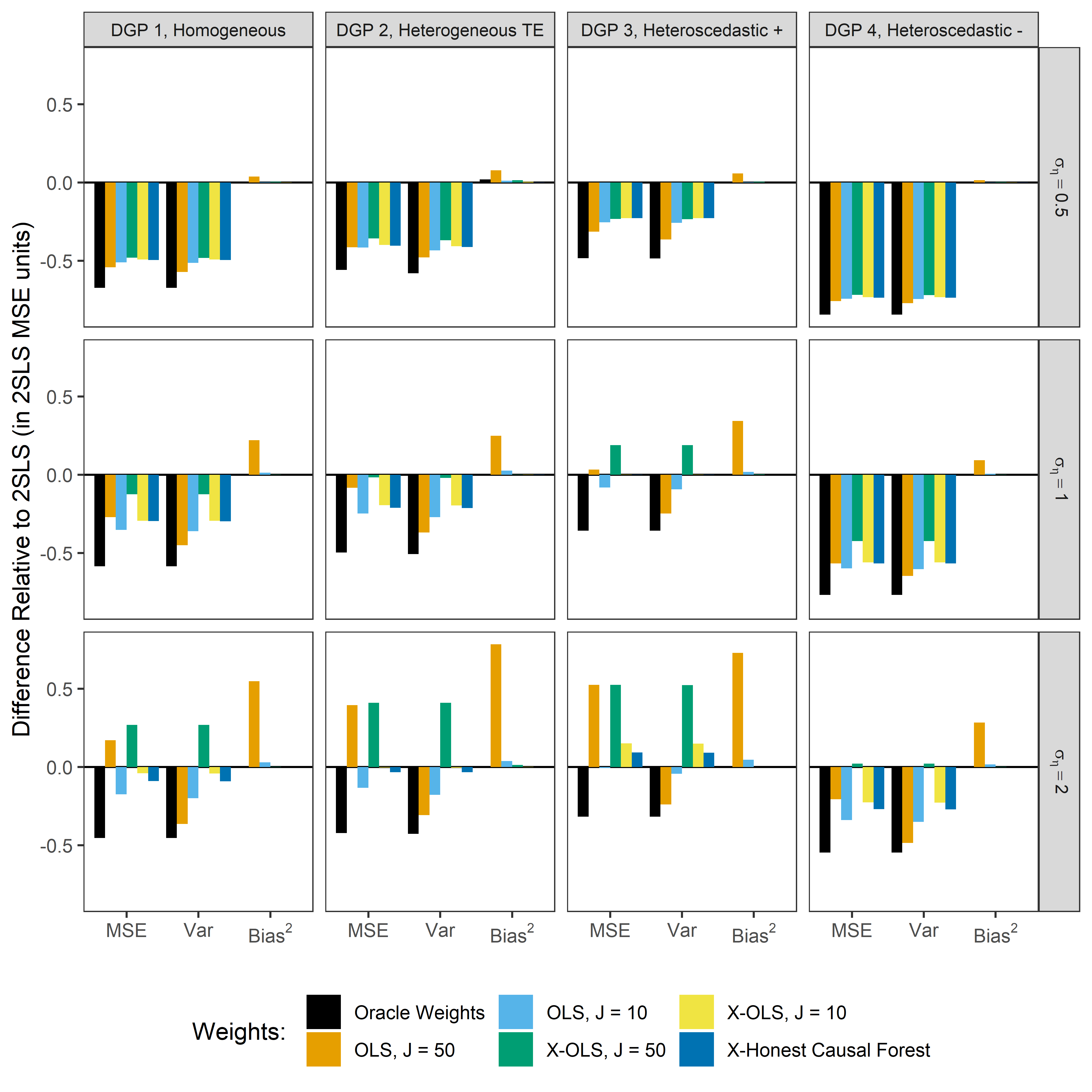}}
	\figfoot
\end{figure}

\clearpage
        \newpage
\clearpage

\newcommand{\tabfoot}{\captionsetup{width=0.8\textwidth}\floatfoot{Estimates were generated using 1,000 Monte Carlo samples with 1,000 observations.\newline
N/A: Unweighted; OW: Oracle Weights; J10: OLS, 10 bins; J50: OLS, 50 bins; HCF: Honest Causal Forest.
\textbf{Note:} `X-' prefix indicates that weight estimates are cross-fitted.}}

\setlength{\tabcolsep}{5pt} 

\begin{table}
	\caption{Estimator RMSE}
	\label{tab:rmse}
	\captionsetup[table]{labelformat=empty,skip=1pt}
\begin{longtable*}{cccccccccccc}
\toprule
$\sigma_{\eta}$ & $\sigma_{\tau}$ & $\rho_{\delta\tau}$ & $\rho_{\delta\varepsilon}$ & $\zeta$ & N/A & OW & J10 & J50 & X-J10 & X-J50 & X-HCF \\ 
\midrule
\multicolumn{10}{l}{DGP 1: Homogeneous} \\ 
\midrule
0.5 & 0 & 0.0 & 0.5 & 0.00 & $0.224$ & $0.128$ & $0.157$ & $0.152$ & $0.160$ & $0.161$ & $0.159$ \\ 
1.0 & 0 & 0.0 & 0.5 & 0.00 & $0.234$ & $0.151$ & $0.189$ & $0.200$ & $0.197$ & $0.219$ & $0.197$ \\ 
2.0 & 0 & 0.0 & 0.5 & 0.00 & $0.245$ & $0.181$ & $0.222$ & $0.265$ & $0.240$ & $0.276$ & $0.233$ \\ 
\midrule
\multicolumn{10}{l}{DGP 2: Heterogeneous TE} \\ 
\midrule
0.5 & 1 & 0.5 & 0.5 & 0.00 & $0.250$ & $0.166$ & $0.191$ & $0.191$ & $0.194$ & $0.201$ & $0.193$ \\ 
1.0 & 1 & 0.5 & 0.5 & 0.00 & $0.263$ & $0.187$ & $0.229$ & $0.252$ & $0.236$ & $0.261$ & $0.234$ \\ 
2.0 & 1 & 0.5 & 0.5 & 0.00 & $0.274$ & $0.208$ & $0.255$ & $0.324$ & $0.274$ & $0.326$ & $0.270$ \\ 
\midrule
\multicolumn{10}{l}{DGP 3: Heteroscedastic $+$} \\ 
\midrule
0.5 & 0 & 0.0 & 0.5 & 0.25 & $0.231$ & $0.166$ & $0.200$ & $0.191$ & $0.203$ & $0.203$ & $0.203$ \\ 
1.0 & 0 & 0.0 & 0.5 & 0.25 & $0.243$ & $0.195$ & $0.233$ & $0.247$ & $0.243$ & $0.265$ & $0.243$ \\ 
2.0 & 0 & 0.0 & 0.5 & 0.25 & $0.253$ & $0.209$ & $0.253$ & $0.313$ & $0.272$ & $0.313$ & $0.265$ \\ 
\midrule
\multicolumn{10}{l}{DGP 4: Heteroscedastic $-$} \\ 
\midrule
0.5 & 0 & 0.0 & 0.5 & -0.25 & $0.236$ & $0.093$ & $0.120$ & $0.116$ & $0.122$ & $0.126$ & $0.121$ \\ 
1.0 & 0 & 0.0 & 0.5 & -0.25 & $0.247$ & $0.119$ & $0.157$ & $0.163$ & $0.164$ & $0.188$ & $0.163$ \\ 
2.0 & 0 & 0.0 & 0.5 & -0.25 & $0.258$ & $0.174$ & $0.210$ & $0.230$ & $0.227$ & $0.261$ & $0.220$ \\ 
\bottomrule
\end{longtable*}

	\tabfoot 
\end{table}

\begin{table}
	\caption{Estimator standard deviation}
	\label{tab:sd}
	\captionsetup[table]{labelformat=empty,skip=1pt}
\begin{longtable*}{cccccccccccc}
\toprule
$\sigma_{\eta}$ & $\sigma_{\tau}$ & $\rho_{\delta\tau}$ & $\rho_{\delta\varepsilon}$ & $\zeta$ & N/A & OW & J10 & J50 & X-J10 & X-J50 & X-HCF \\ 
\midrule
\multicolumn{10}{l}{DGP 1: Homogeneous} \\ 
\midrule
0.5 & 0 & 0.0 & 0.5 & 0.00 & $0.224$ & $0.128$ & $0.156$ & $0.147$ & $0.160$ & $0.161$ & $0.159$ \\ 
1.0 & 0 & 0.0 & 0.5 & 0.00 & $0.234$ & $0.151$ & $0.188$ & $0.174$ & $0.197$ & $0.219$ & $0.197$ \\ 
2.0 & 0 & 0.0 & 0.5 & 0.00 & $0.245$ & $0.181$ & $0.219$ & $0.195$ & $0.240$ & $0.276$ & $0.233$ \\ 
\midrule
\multicolumn{10}{l}{DGP 2: Heterogeneous TE} \\ 
\midrule
0.5 & 1 & 0.5 & 0.5 & 0.00 & $0.250$ & $0.162$ & $0.188$ & $0.180$ & $0.192$ & $0.199$ & $0.192$ \\ 
1.0 & 1 & 0.5 & 0.5 & 0.00 & $0.263$ & $0.185$ & $0.225$ & $0.209$ & $0.236$ & $0.261$ & $0.234$ \\ 
2.0 & 1 & 0.5 & 0.5 & 0.00 & $0.274$ & $0.208$ & $0.249$ & $0.228$ & $0.274$ & $0.326$ & $0.270$ \\ 
\midrule
\multicolumn{10}{l}{DGP 3: Heteroscedastic +} \\ 
\midrule
0.5 & 0 & 0.0 & 0.5 & 0.25 & $0.231$ & $0.166$ & $0.199$ & $0.184$ & $0.203$ & $0.202$ & $0.203$ \\ 
1.0 & 0 & 0.0 & 0.5 & 0.25 & $0.243$ & $0.195$ & $0.231$ & $0.211$ & $0.243$ & $0.265$ & $0.243$ \\ 
2.0 & 0 & 0.0 & 0.5 & 0.25 & $0.254$ & $0.209$ & $0.248$ & $0.221$ & $0.272$ & $0.313$ & $0.265$ \\ 
\midrule
\multicolumn{10}{l}{DGP 4: Heteroscedastic -} \\ 
\midrule
0.5 & 0 & 0.0 & 0.5 & -0.25 & $0.236$ & $0.093$ & $0.120$ & $0.113$ & $0.122$ & $0.125$ & $0.121$ \\ 
1.0 & 0 & 0.0 & 0.5 & -0.25 & $0.247$ & $0.119$ & $0.156$ & $0.147$ & $0.164$ & $0.188$ & $0.163$ \\ 
2.0 & 0 & 0.0 & 0.5 & -0.25 & $0.258$ & $0.174$ & $0.208$ & $0.185$ & $0.227$ & $0.261$ & $0.220$ \\ 
\bottomrule
\end{longtable*}

	\tabfoot 
\end{table}

\begin{table}
	\caption{Estimator bias}
	\label{tab:bias}
	\captionsetup[table]{labelformat=empty,skip=1pt}
\begin{longtable*}{cccccccccccc}
\toprule
$\sigma_{\eta}$ & $\sigma_{\tau}$ & $\rho_{\delta\tau}$ & $\rho_{\delta\varepsilon}$ & $\zeta$ & N/A & OW & J10 & J50 & X-J10 & X-J50 & X-HCF \\ 
\midrule
\multicolumn{10}{l}{DGP 1: Homogeneous} \\ 
\midrule
0.5 & 0 & 0.0 & 0.5 & 0.00 & $0.002$ & $0.005$ & $0.015$ & $0.044$ & $0.008$ & $0.017$ & $0.005$ \\ 
1.0 & 0 & 0.0 & 0.5 & 0.00 & $0.008$ & $-0.004$ & $0.027$ & $0.110$ & $0.004$ & $0.007$ & $0.001$ \\ 
2.0 & 0 & 0.0 & 0.5 & 0.00 & $0.006$ & $-0.005$ & $0.042$ & $0.181$ & $-0.002$ & $0.007$ & $0.000$ \\ 
\midrule
\multicolumn{10}{l}{DGP 2: Heterogeneous TE} \\ 
\midrule
0.5 & 1 & 0.5 & 0.5 & 0.00 & $0.008$ & $0.035$ & $0.028$ & $0.070$ & $0.017$ & $0.030$ & $0.013$ \\ 
1.0 & 1 & 0.5 & 0.5 & 0.00 & $0.012$ & $0.019$ & $0.043$ & $0.132$ & $0.008$ & $0.011$ & $0.013$ \\ 
2.0 & 1 & 0.5 & 0.5 & 0.00 & $0.010$ & $0.012$ & $0.053$ & $0.243$ & $-0.008$ & $0.031$ & $0.000$ \\ 
\midrule
\multicolumn{10}{l}{DGP 3: Heteroscedastic $+$} \\ 
\midrule
0.5 & 0 & 0.0 & 0.5 & 0.25 & $0.003$ & $0.004$ & $0.014$ & $0.056$ & $0.002$ & $0.017$ & $0.004$ \\ 
1.0 & 0 & 0.0 & 0.5 & 0.25 & $0.011$ & $0.000$ & $0.033$ & $0.142$ & $0.002$ & $0.012$ & $0.001$ \\ 
2.0 & 0 & 0.0 & 0.5 & 0.25 & $0.000$ & $0.008$ & $0.054$ & $0.217$ & $-0.003$ & $-0.002$ & $-0.006$ \\ 
\midrule
\multicolumn{10}{l}{DGP 4: Heteroscedastic $-$} \\ 
\midrule
0.5 & 0 & 0.0 & 0.5 & -0.25 & $-0.005$ & $0.003$ & $0.011$ & $0.029$ & $0.008$ & $0.012$ & $0.009$ \\ 
1.0 & 0 & 0.0 & 0.5 & -0.25 & $0.005$ & $0.000$ & $0.020$ & $0.075$ & $0.007$ & $0.008$ & $0.005$ \\ 
2.0 & 0 & 0.0 & 0.5 & -0.25 & $0.007$ & $-0.002$ & $0.033$ & $0.138$ & $-0.004$ & $0.009$ & $-0.003$ \\ 
\bottomrule
\end{longtable*}

	\tabfoot 
\end{table}

\begin{table}
	\caption{LATE coverage of compliance-weighted estimator}
	\label{tab:coverage_late}
	\captionsetup[table]{labelformat=empty,skip=1pt}
\begin{longtable*}{ccccccccccc}
\toprule
$\sigma_{\eta}$ & $\sigma_{\tau}$ & $\rho_{\delta\tau}$ & $\rho_{\delta\varepsilon}$ & $\zeta$ & OW & J10 & J50 & X-J10 & X-J50 & X-HCF \\ 
\midrule
\multicolumn{10}{l}{DGP 1: Homogeneous} \\ 
\midrule
0.5 & 0 & 0.0 & 0.5 & 0.00 & $0.964$ & $0.948$ & $0.947$ & $0.950$ & $0.952$ & $0.953$ \\ 
1.0 & 0 & 0.0 & 0.5 & 0.00 & $0.952$ & $0.942$ & $0.919$ & $0.952$ & $0.952$ & $0.950$ \\ 
2.0 & 0 & 0.0 & 0.5 & 0.00 & $0.961$ & $0.941$ & $0.846$ & $0.950$ & $0.951$ & $0.951$ \\ 
\midrule
\multicolumn{10}{l}{DGP 2: Heterogeneous TE} \\ 
\midrule
0.5 & 1 & 0.5 & 0.5 & 0.00 & $0.942$ & $0.939$ & $0.935$ & $0.942$ & $0.949$ & $0.944$ \\ 
1.0 & 1 & 0.5 & 0.5 & 0.00 & $0.946$ & $0.942$ & $0.897$ & $0.948$ & $0.947$ & $0.946$ \\ 
2.0 & 1 & 0.5 & 0.5 & 0.00 & $0.951$ & $0.954$ & $0.812$ & $0.955$ & $0.955$ & $0.954$ \\ 
\midrule
\multicolumn{10}{l}{DGP 3: Heteroscedastic $+$} \\ 
\midrule
0.5 & 0 & 0.0 & 0.5 & 0.25 & $0.958$ & $0.948$ & $0.941$ & $0.950$ & $0.954$ & $0.952$ \\ 
1.0 & 0 & 0.0 & 0.5 & 0.25 & $0.954$ & $0.947$ & $0.908$ & $0.950$ & $0.954$ & $0.941$ \\ 
2.0 & 0 & 0.0 & 0.5 & 0.25 & $0.959$ & $0.936$ & $0.809$ & $0.949$ & $0.955$ & $0.956$ \\ 
\midrule
\multicolumn{10}{l}{DGP 4: Heteroscedastic $-$} \\ 
\midrule
0.5 & 0 & 0.0 & 0.5 & -0.25 & $0.954$ & $0.953$ & $0.950$ & $0.951$ & $0.950$ & $0.957$ \\ 
1.0 & 0 & 0.0 & 0.5 & -0.25 & $0.967$ & $0.952$ & $0.935$ & $0.955$ & $0.961$ & $0.952$ \\ 
2.0 & 0 & 0.0 & 0.5 & -0.25 & $0.960$ & $0.949$ & $0.895$ & $0.952$ & $0.956$ & $0.948$ \\ 
\bottomrule
\end{longtable*}

	\tabfoot 
\end{table}

\begin{table}
	\caption{Weighted LATE coverage of compliance-weighted estimator}
	\label{tab:coverage_wlate}
	\captionsetup[table]{labelformat=empty,skip=1pt}
\begin{longtable*}{ccccccccccc}
\toprule
$\sigma_{\eta}$ & $\sigma_{\tau}$ & $\rho_{\delta\tau}$ & $\rho_{\delta\varepsilon}$ & $\zeta$ & OW & J10 & J50 & X-J10 & X-J50 & X-HCF \\ 
\midrule
\multicolumn{10}{l}{DGP 1: Homogeneous} \\ 
\midrule
0.5 & 0 & 0.0 & 0.5 & 0.00 & $0.964$ & $0.948$ & $0.947$ & $0.950$ & $0.952$ & $0.953$ \\ 
1.0 & 0 & 0.0 & 0.5 & 0.00 & $0.952$ & $0.942$ & $0.919$ & $0.952$ & $0.952$ & $0.950$ \\ 
2.0 & 0 & 0.0 & 0.5 & 0.00 & $0.961$ & $0.941$ & $0.846$ & $0.950$ & $0.951$ & $0.951$ \\ 
\midrule
\multicolumn{10}{l}{DGP 2: Heterogeneous TE} \\ 
\midrule
0.5 & 1 & 0.5 & 0.5 & 0.00 & $0.952$ & $0.946$ & $0.947$ & $0.943$ & $0.950$ & $0.948$ \\ 
1.0 & 1 & 0.5 & 0.5 & 0.00 & $0.955$ & $0.950$ & $0.909$ & $0.948$ & $0.948$ & $0.953$ \\ 
2.0 & 1 & 0.5 & 0.5 & 0.00 & $0.950$ & $0.953$ & $0.822$ & $0.957$ & $0.955$ & $0.955$ \\ 
\midrule
\multicolumn{10}{l}{DGP 3: Heteroscedastic $+$} \\ 
\midrule
0.5 & 0 & 0.0 & 0.5 & 0.25 & $0.958$ & $0.948$ & $0.941$ & $0.950$ & $0.954$ & $0.952$ \\ 
1.0 & 0 & 0.0 & 0.5 & 0.25 & $0.954$ & $0.947$ & $0.908$ & $0.950$ & $0.954$ & $0.941$ \\ 
2.0 & 0 & 0.0 & 0.5 & 0.25 & $0.959$ & $0.936$ & $0.809$ & $0.949$ & $0.955$ & $0.956$ \\ 
\midrule
\multicolumn{10}{l}{DGP 4: Heteroscedastic $-$} \\ 
\midrule
0.5 & 0 & 0.0 & 0.5 & -0.25 & $0.954$ & $0.953$ & $0.950$ & $0.951$ & $0.950$ & $0.957$ \\ 
1.0 & 0 & 0.0 & 0.5 & -0.25 & $0.967$ & $0.952$ & $0.935$ & $0.955$ & $0.961$ & $0.952$ \\ 
2.0 & 0 & 0.0 & 0.5 & -0.25 & $0.960$ & $0.949$ & $0.895$ & $0.952$ & $0.956$ & $0.948$ \\ 
\bottomrule
\end{longtable*}

	\tabfoot 
\end{table}

\begin{table}
	\caption{Mean first stage estimates}
	\label{tab:fs}
	\captionsetup[table]{labelformat=empty,skip=1pt}
\begin{longtable*}{cccccccc}
\toprule
$\sigma_{\eta}$ & N/A & OW & J10 & J50 & X-J10 & X-J50 & X-HCF \\ 
\midrule
0.5 & $0.248$ & $0.643$ & $0.539$ & $0.578$ & $0.529$ & $0.534$ & $0.529$ \\ 
1.0 & $0.248$ & $0.431$ & $0.394$ & $0.439$ & $0.378$ & $0.370$ & $0.380$ \\ 
2.0 & $0.248$ & $0.304$ & $0.310$ & $0.368$ & $0.292$ & $0.286$ & $0.293$ \\ 
\bottomrule
\end{longtable*}

	\tabfoot 
\end{table}

\newpage
\clearpage

    \end{graphstables}

    \bibliography{ref}

\begin{thebibliography}{}

\bibitem[Abadie et~al., 2019]{abadie2019instrumental}
Abadie, A., Gu, J., and Shen, S. (2019).
\newblock Instrumental variable estimation with first stage heterogeneity.
\newblock Technical report.

\bibitem[Angrist and Imbens, 1995]{angrist1995two}
Angrist, J.~D. and Imbens, G.~W. (1995).
\newblock Two-stage least squares estimation of average causal effects in
  models with variable treatment intensity.
\newblock {\em Journal of the American Statistical Association},
  90(430):431--442.

\bibitem[Angrist et~al., 1996]{angrist1996}
Angrist, J.~D., Imbens, G.~W., and Rubin, D.~B. (1996).
\newblock Identification of causal effects using instrumental variables.
\newblock {\em Journal of the American Statistical Association},
  91(434):444--455.

\bibitem[Belloni et~al., 2012]{Belloni2012-nu}
Belloni, A., Chen, D., Chernozhukov, V., and Hansen, C. (2012).
\newblock {Sparse Models and Methods for Optimal Instruments With an
  Application to Eminent Domain}.
\newblock {\em Econometrica: journal of the Econometric Society},
  80(6):2369--2429.

\bibitem[Bond et~al., 2007]{Bond2007-fd}
Bond, S.~J., White, I.~R., and Sarah~Walker, A. (2007).
\newblock {Instrumental variables and interactions in the causal analysis of a
  complex clinical trial}.
\newblock {\em Statistics in Medicine}, 26(7):1473--1496.

\bibitem[Borusyak and Hull, 2020]{Borusyak2020-xu}
Borusyak, K. and Hull, P. (2020).
\newblock {{Non-Random Exposure to Exogenous Shocks: Theory and Applications}}.
\newblock Technical Report w27845, National Bureau of Economic Research.

\bibitem[Brinch et~al., 2017]{brinch2017beyond}
Brinch, C.~N., Mogstad, M., and Wiswall, M. (2017).
\newblock Beyond late with a discrete instrument.
\newblock {\em Journal of Political Economy}, 125(4):985--1039.

\bibitem[Chamberlain, 1987]{chamberlain1987asymptotic}
Chamberlain, G. (1987).
\newblock Asymptotic efficiency in estimation with conditional moment
  restrictions.
\newblock {\em Journal of Econometrics}, 34(3):305--334.

\bibitem[Chen et~al., 2020]{Chen2020-xg}
Chen, J., Chen, D.~L., and Lewis, G. (2020).
\newblock {Mostly Harmless Machine Learning: Learning Optimal Instruments in
  Linear IV Models}.

\bibitem[Chernozhukov et~al., 2017]{Chernozhukov2017-ug}
Chernozhukov, V., Demirer, M., Duflo, E., and Fern{\'a}ndez-Val, I. (2017).
\newblock {Generic Machine Learning Inference on Heterogenous Treatment Effects
  in Randomized Experiments}.

\bibitem[Coussens, 2018]{coussens2018}
Coussens, S. (2018).
\newblock {Fight the Power [Calculation]: Incorporating Compliance Prediction
  into RCT Design}.
\newblock In {\em Essays in Health and Behavioral Economics}. PhD thesis,
  Harvard University.

\bibitem[Einav et~al., 2018]{einav2018long}
Einav, L., Finkelstein, A., and Mahoney, N. (2018).
\newblock Long-term care hospitals: A case study in waste.
\newblock Technical report, National Bureau of Economic Research.

\bibitem[Follmann, 2000]{follmann2000effect}
Follmann, D.~A. (2000).
\newblock On the effect of treatment among would-be treatment compliers: An
  analysis of the multiple risk factor intervention trial.
\newblock {\em Journal of the American Statistical Association},
  95(452):1101--1109.

\bibitem[Hansen and Kozbur, 2014]{Hansen2014-wy}
Hansen, C. and Kozbur, D. (2014).
\newblock {Instrumental variables estimation with many weak instruments using
  regularized JIVE}.
\newblock {\em Journal of Rconometrics}, 182(2):290--308.

\bibitem[Heckman et~al., 2006]{heckman2006understanding}
Heckman, J.~J., Urzua, S., and Vytlacil, E. (2006).
\newblock Understanding instrumental variables in models with essential
  heterogeneity.
\newblock {\em The Review of Economics and Statistics}, 88(3):389--432.

\bibitem[Heckman and Vytlacil, 2005]{heckman2005structural}
Heckman, J.~J. and Vytlacil, E. (2005).
\newblock Structural equations, treatment effects, and econometric policy
  evaluation.
\newblock {\em Econometrica}, 73(3):669--738.

\bibitem[Huntington-Klein, 2020]{Huntington-Klein}
Huntington-Klein, N. (2020).
\newblock {Instruments with Heterogeneous Effects: Bias, Monotonicity, and
  Localness}.
\newblock {\em Journal of Causal Inference}, 8(1):182--208.

\bibitem[Imbens and Angrist, 1994]{imbens1994identification}
Imbens, G.~W. and Angrist, J.~D. (1994).
\newblock Identification and estimation of local average treatment effects.
\newblock {\em Econometrica}, pages 467--475.

\bibitem[Joffe and Brensinger, 2003]{Joffe2003-uq}
Joffe, M.~M. and Brensinger, C. (2003).
\newblock {Weighting in instrumental variables and G-estimation}.
\newblock {\em Statistics in Medicine}, 22(8):1285--1303.

\bibitem[Kennedy et~al., 2018]{Kennedy2018-fe}
Kennedy, E.~H., Balakrishnan, S., and G'Sell, M. (2018).
\newblock {Sharp instruments for classifying compliers and generalizing causal
  effects}.

\bibitem[Kowalski, 2016]{kowalski2016doing}
Kowalski, A.~E. (2016).
\newblock Doing more when you're running {LATE}: Applying marginal treatment
  effect methods to examine treatment effect heterogeneity in experiments.
\newblock Technical report, NBER Working Paper No. 22363.

\bibitem[Kowalski, 2018a]{kowalski2018examine}
Kowalski, A.~E. (2018a).
\newblock How to examine external validity within an experiment.
\newblock Technical report, National Bureau of Economic Research.

\bibitem[Kowalski, 2018b]{kowalski2018reconciling}
Kowalski, A.~E. (2018b).
\newblock Reconciling seemingly contradictory results from the oregon health
  insurance experiment and the massachusetts health reform.
\newblock Technical report, National Bureau of Economic Research.

\bibitem[Miller and Startz, 2018]{Miller2018-xt}
Miller, S. and Startz, R. (2018).
\newblock {Feasible Generalized Least Squares Using Machine Learning}.

\bibitem[Mogstad et~al., 2018]{mogstad2018using}
Mogstad, M., Santos, A., and Torgovitsky, A. (2018).
\newblock Using instrumental variables for inference about policy relevant
  treatment parameters.
\newblock {\em Econometrica}, 86(5):1589--1619.

\bibitem[Mogstad et~al., 2019]{Mogstad2019-wa}
Mogstad, M., Torgovitsky, A., and Walters, C.~R. (2019).
\newblock {\em {The Causal Interpretation of Two-Stage Least Squares with
  Multiple Instrumental Variables}}.
\newblock National Bureau of Economic Research.

\bibitem[Mogstad et~al., 2020]{Mogstad_undated-it}
Mogstad, M., Torgovitsky, A., and Walters, C.~R. (2020).
\newblock {Policy Evaluation with Multiple Instrumental Variables}.
\newblock \url{https://a-torgovitsky.github.io/multipleivmte.pdf}.

\bibitem[Rubin, 1977]{rubin1977assignment}
Rubin, D.~B. (1977).
\newblock Assignment to treatment group on the basis of a covariate.
\newblock {\em Journal of Educational Statistics}, 2(1):1--26.

\bibitem[Staiger and Stock, 1997]{staiger1997instrumental}
Staiger, D. and Stock, J.~H. (1997).
\newblock Instrumental variables regression with weak instruments.
\newblock {\em Econometrica: Journal of the Econometric Society}, pages
  557--586.

\bibitem[Wager and Athey, 2018]{wager2018estimation}
Wager, S. and Athey, S. (2018).
\newblock Estimation and inference of heterogeneous treatment effects using
  random forests.
\newblock {\em Journal of the American Statistical Association},
  113(523):1228--1242.

\end{thebibliography}

    \begin{vappendix}
        \newpage
        \appendix
        \counterwithin{proposition}{section}
        \counterwithin{corollary}{section}
        \counterwithin{theorem}{section}
        \counterwithin{conjecture}{section}

        \begin{center}
            \LARGE Appendix
        \end{center}

        \section{Generalization to Non-Binary Treatment and Instrument}

\label{sec:generallinear}

In \autoref{sec:generalbinary}, we considered the general large-sample limit of the $w$-weighted IV estimator $\hat{\tau}_w$ in the case of binary instrument and binary treatment, which we obtained from the instrument $h(Z,X) = w(X) (Z-p)$.
In this section, we expand the theory in two directions.
First, we argue that the main asymptotic results extend to the case of a continuous, rather than only binary, treatment variable in a linear model.
Second, we consider instruments that are not binary, and ask which transformations $h(Z,X)$ generally provide an instrument that remains valid.

\autoref{sec:generalbinary} focused on estimating a weighted average of the conditional LATEs
\begin{align*}
    \tau(x) = \E[Y(1) - Y(0) | D(1) > D(0), X=x].
\end{align*}
However, when treatment $D$ or instrument $Z$ are not binary,
we lose the estimand's intuitive interpretation as the average treatment effect among compliers. Nonetheless, the results above still apply in a natural generalization:

\begin{proposition}[Generalization for continuous treatment]
    \label{prop:generalization}
    The asymptotic results of the previous section generalize to continuous treatment $D$ (and for continuous instruments $Z$) under standard regularity assumptions,
    where we define
    \begin{align*}
        \tau(x) &= \frac{\Cov(Y,Z|X=x)}{\Cov(D,Z|X=x)},
        &
        \alpha(x) &= \frac{\Cov(D,Z|X=x)}{\Var(Z)}
    \end{align*}
    and replace the conditional distribution $\P(\cdot|D(1)>D(0))$ of compliers by the $\alpha$-weighted distribution.
    Specifically,
    \begin{align*}
        \tau &= \frac{\Cov(Y,Z)}{\Cov(D,Z)} = \frac{\E[\alpha(X)\: \tau(X)]}{\E[\alpha(X)]},
        &
        \tau_w &= \frac{\E[\alpha(X)\: w(X) \: \tau(X)]}{\E[\alpha(X) \: w(X)]},
        &
        \tau_\alpha &= \frac{\E[(\alpha(X))^2 \: \tau(X)]}{\E[(\alpha(X))^2]}.
    \end{align*}
\end{proposition}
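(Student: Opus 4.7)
The plan is to reduce the continuous case to the same algebraic identities used for binary $Z$ and $D$ by replacing ``complier averages'' with covariance-based analogues, then invoking the same CLT machinery. Throughout I will use the conditional covariance decomposition together with the independence $Z \perp X$ (or, more generally, $(X,\varepsilon) \perp Z$) that is maintained throughout \autoref{sec:oracle}. Since $Z \perp X$ implies $\E[Z \mid X] = \E[Z] = p$, the conditional covariance formula gives, for any integrable $A$,
\begin{align*}
    \Cov(A, Z) &= \E[\Cov(A, Z \mid X)] + \Cov(\E[A \mid X], \E[Z \mid X]) = \E[\Cov(A, Z \mid X)],
\end{align*}
because the second term vanishes when $\E[Z \mid X]$ is constant. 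Applying this to $A = Y$ and $A = D$ and using the defining identity $\Cov(Y, Z \mid X = x) = \tau(x) \Cov(D, Z \mid X = x) = \tau(x)\alpha(x)\Var(Z)$ yields $\Cov(Y,Z)/\Cov(D,Z) = \E[\alpha(X)\tau(X)]/\E[\alpha(X)]$, which is the first identity.

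Next I would extend this to the $w$-weighted estimand by computing the numerator and denominator of $\hat\tau_w$'s population analogue using $h(Z,X) = w(X)(Z-p)$. Because $Z \perp X$ forces $\E[w(X)(Z-p)] = 0$, we have
\begin{align*}
    \Cov(Y, h(Z,X)) &= \E[w(X)\,Y\,(Z-p)] = \E\!\left[w(X)\,\E[Y(Z-p)\mid X]\right] = \E[w(X)\,\Cov(Y,Z\mid X)],
\end{align*}
and analogously for $D$ in place of $Y$. Substituting the defining identities for $\tau(x)$ and $\alpha(x)$ cancels the common factor $\Var(Z)$ and yields $\tau_w = \E[\alpha(X)w(X)\tau(X)]/\E[\alpha(X)w(X)]$; setting $w = \alpha$ gives the stated form of $\tau_\alpha$. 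The consistency claim of \autoref{prop:weightedlate} then follows by a standard Slutsky argument from the consistency of the sample covariances in numerator and denominator, assuming enough moments for the LLN.

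For the asymptotic-distribution statement of \autoref{prop:asymptotic}, I would apply the delta method to the ratio $\hat\tau_w = \sCov^w(Y,Z)/\sCov^w(D,Z)$ viewed as a smooth function of sample means of the vector $(w(X)YZ, w(X)Y, w(X)DZ, w(X)D, w(X)Z, w(X))$ (or, equivalently, to $\hat\tau^h$ after linearization around $\tau_w$). Writing $Y - \tau_w D = \varepsilon$ as in the original proposition, the influence function of the numerator of $\hat\tau_w - \tau_w$ becomes $w(X)(Z-p)\varepsilon/\E[w(X)\alpha(X)\Var(Z)]$, whose variance reduces, using $Z \perp X$ and the conditional variance formula $\Var(Z)\sigma^2(x) = \E[(Z-p)^2\varepsilon^2 \mid X=x]$, to $V_w$ exactly as in the binary case. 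The remaining regularity is the existence of fourth moments of $(Y,D,w(X)Z)$ so that the relevant CLT and LLN apply.

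The main conceptual obstacle is that without monotonicity the ``compliers'' interpretation is lost: $\alpha(x)$ can be negative or zero on sets of positive measure, so the $\alpha$-weighted distribution is only a signed measure and $\tau_\alpha$ is no longer an average treatment effect among a well-defined subpopulation. The technical fix is simply to read all statements as expressions in covariances, accepting that $\alpha$ and $w$ need not be nonnegative; the algebraic identities in the previous paragraphs go through unchanged, and the variance formula is well defined as long as $\E[\alpha(X)w(X)] \neq 0$. The extension from discrete-but-non-binary to continuous $Z$ similarly requires no new argument beyond replacing $(Z-p)$ by $(Z-\E[Z])$ throughout and assuming $\Var(Z) < \infty$; this is where one should also verify that the $h(Z,X)$ characterization from \autoref{prop:binarycharacterization} gives way to a richer class (discussed in the surrounding text), but that richer class does not affect the weighted estimator analyzed here.
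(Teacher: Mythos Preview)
Your argument is correct. The covariance decomposition $\Cov(A,Z)=\E[\Cov(A,Z\mid X)]$ under $Z\perp X$, the substitution $\Cov(Y,Z\mid X)=\tau(X)\alpha(X)\Var(Z)$, and the delta-method derivation of $V_w$ all go through as you wrote them.

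The paper takes a different, much terser route: it simply observes that the proofs of \autoref{prop:estimatedconsistent} and \autoref{prop:estimatedasymptotic} (which in turn rest on \autoref{lem:weightedclt}) never use the binary structure of $D$ or $Z$, so the continuous case is already covered once one rewrites $\tau(x)$ and $\alpha(x)$ in covariance form. In other words, the paper proves the most general estimated-weights result first and then specializes, whereas you prove the fixed-weights population identities directly via the law of total covariance and then invoke a standard CLT/delta-method argument. Your approach is more self-contained and makes transparent exactly where $Z\perp X$ enters (namely, to kill the $\Cov(\E[A\mid X],\E[Z\mid X])$ term and to give $\E[w(X)(Z-p)]=0$); the paper's approach is more economical because it avoids repeating any asymptotic argument, but it requires the reader to have digested the cross-fitting machinery of \autoref{lem:weightedclt} and the long proof of \autoref{prop:estimatedasymptotic}. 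Your discussion of the loss of the complier interpretation when $\alpha(x)$ may change sign is a useful addition that the paper's one-line proof does not spell out.
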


In \autoref{prop:binarycharacterization} we characterized a class of augmented instruments $h(Z,X)$ that remains valid no matter the distribution of covariates $X$ and error term $\varepsilon$ in the second-stage equation $Y = \psi + D \tau + \varepsilon$ for which $Z$ is independent of $(X,\varepsilon)$, for binary $Z$. While the resulting instruments $h(Z,X) = w(X) \: (Z-p)$ remain valid when $Z$ is not binary, there are additional transformations that do the trick for more general instruments:

\begin{proposition}[Characterization of augmented extensions of a finitely supported instrument]
    \label{prop:generalcharacterization}
    Assume that $Z$ has finite support $\Z$ with $|\Z|=J+1$ and fix the distribution of $Z$.
    Consider some conformal function $h(\cdot,\cdot)$.
    Then $\Cov(\varepsilon,h(Z,X)) = 0$ for all distributions of $(X,\varepsilon)$ with $\Var(h(Z,X)), \Var(\varepsilon) < \infty$ and $(X,\varepsilon) \orth Z$ if and only if
    \begin{align*}
        h(z,x)
        =
        \psi +
        \sum_{j=1}^J g_j(z) \: w_j(x)
    \end{align*}
    with $\E[g_j(Z)] = 0, \E[g_{j}(Z) g_{j'}(Z)] = \Ind_{j = j'}$ and some constant $\psi$, where we can choose a family of $g_j$ that only depends on the distribution of $Z$.
\end{proposition}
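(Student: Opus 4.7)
The strategy is to recognize that, once the distribution of $Z$ is fixed, the space of real-valued functions on the finite support $\Z$ becomes a $(J+1)$-dimensional Hilbert space under the inner product $\langle u,v\rangle = \E[u(Z)v(Z)]$. Inside this space, the mean-zero functions form a $J$-dimensional subspace (the orthogonal complement of the constants). I would apply Gram--Schmidt to any basis of that subspace to obtain an orthonormal system $g_1,\ldots,g_J$ with $\E[g_j(Z)] = 0$ and $\E[g_j(Z)g_{j'}(Z)] = \Ind_{j=j'}$; this family depends only on the distribution of $Z$. Together with the constant function $1$, the $g_j$ form an orthonormal basis.

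For arbitrary conformal $h(\cdot,\cdot)$, I would then decompose $h(\cdot,x)$ in this basis pointwise in $x$,
\begin{align*}
    h(z,x) = c_0(x) + \sum_{j=1}^J c_j(x) \: g_j(z),
    \quad
    c_0(x) = \E[h(Z,x)], \quad c_j(x) = \E[h(Z,x) g_j(Z)],
\end{align*}
so that any candidate $h$ is determined by the functions $c_0, c_1,\ldots, c_J$. Using that $Z \orth (X,\varepsilon)$ and the mean-zero property of the $g_j$, the required covariance simplifies to the clean identity
\begin{align*}
    \Cov(\varepsilon, h(Z,X))
    = \E[\varepsilon \: c_0(X)] + \sum_j \E[g_j(Z)] \E[\varepsilon \: c_j(X)] - \E[\varepsilon] \E[c_0(X)]
    = \Cov(\varepsilon, c_0(X)),
\end{align*}
so only the ``constant-in-$Z$'' component $c_0(x)$ can contribute to correlation with $\varepsilon$.

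For the easy (``if'') direction I would just observe that when $c_0 \equiv \psi$, the right-hand side vanishes for every admissible $(X,\varepsilon)$, giving the claimed representation with $w_j = c_j$. The main obstacle, and the part requiring real argument, is the converse: I need to turn ``$\Cov(\varepsilon, c_0(X)) = 0$ for every admissible $(X,\varepsilon)$'' into ``$c_0$ is constant''. I would argue by contradiction: suppose $c_0(x') \neq c_0(x'')$ for some $x', x''$ in the domain. Then choose $X$ to be a two-point distribution supported on $\{x',x''\}$ (so that $\Var(h(Z,X))$ is automatically finite, since $h(Z,X)$ then takes only finitely many values), and take $\varepsilon = c_0(X) - \E[c_0(X)]$, which is a bounded function of $X$ (hence independent of $Z$, with finite variance). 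This choice is admissible and produces
\begin{align*}
    \Cov(\varepsilon, c_0(X)) = \Var(c_0(X)) > 0,
\end{align*}
contradicting the hypothesis. Therefore $c_0$ must be constant on the whole domain of $X$, which gives the desired form $h(z,x) = \psi + \sum_j g_j(z) w_j(x)$ with $w_j = c_j$, completing the characterization.
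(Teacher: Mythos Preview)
Your proof is correct and shares the paper's core reduction: both arguments establish that the condition $\Cov(\varepsilon,h(Z,X))=0$ for all admissible $(X,\varepsilon)$ is equivalent to $\E[h(Z,X)\mid X]$ being constant, and then represent the mean-zero-in-$Z$ remainder in a basis. The executions differ, however. You set up the orthonormal system $g_1,\ldots,g_J$ abstractly via Gram--Schmidt on the $J$-dimensional subspace of mean-zero functions, decompose $h(\cdot,x)$ in that basis first, and then read off $\Cov(\varepsilon,h(Z,X))=\Cov(\varepsilon,c_0(X))$ in one line; your contradiction uses a two-point law for $X$ and $\varepsilon=c_0(X)-\E[c_0(X)]$, which cleanly dispatches the finite-variance hypotheses. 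The paper instead derives the constraint $\E[h(Z,X)\mid X]\equiv\text{const.}$ first (taking $\varepsilon=\E[h(Z,X)\mid X]$ directly), writes $h$ in the indicator basis $\Ind_{Z=z_j}-\P(Z=z_j)$, and then produces the orthonormal $g_j$ by an explicit recursive construction using conditional probabilities $\P(Z=j\mid Z\leq j)$. Your route is shorter and more structural; the paper's buys a concrete formula for the $g_j$ rather than an existence statement from Gram--Schmidt.
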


We think of the restriction to finite support as an approximation to arbitrary distributions; while there may be additional instruments when the space of random variables is more complex, simple averages of the above form remain valid.
The result is a direct generalization of \autoref{prop:binarycharacterization}, which we can obtain by setting $g(Z) = \frac{Z-p}{\sqrt{p(1-p)}}$.

Defining
\begin{align*}
    \tau_j(x) &= \frac{\Cov(Y,g_{j}(Z)|X=x)}{\Cov(D,g_{j}(Z)|X=x)},
    &
    \alpha_j(x) &= \Cov(D,g_{j}(Z)|X=x)
\end{align*}
for the treatment effect and compliance score associated with the basis instrument $g_j(Z)$,
the instrument in \autoref{prop:generalcharacterization}
under regularity conditions is consistent for the weighted treatment effect
\begin{align*}
    \tau_w = \frac{\sum_{j=1}^J \E\left[ \alpha_j(X) \: w_j(X) \: \tau_j(X) \right]}{\sum_{j=1}^J \E\left[ \alpha_j(X) \: w_j(X) \right]}.
\end{align*}
In the homoscedastic case, the IV estimator with instrument $h(Z,X)$ is also equivalent to the convex combination of weighted IV estimators
\begin{align*}
    \hat{\tau}_w =
    \frac{
        \sum_{j=1}^J \sCov^{w_j}(D,g_j(Z)) \: \frac{\sCov^{w_j}(Y,g_j(Z))}{\sCov^{w_j}(D,g_j(Z))}
    }{
        \sum_{j=1}^J \sCov^{w_j}(D,g_j(Z))
    },
\end{align*}
when we include the weights $w_j(X)$ as controls.

The efficient choice of weights are compliance weights $w_j(x) =\alpha_j(x)$ when errors are homoscedastic and treatment effects homogeneous.
We obtain an estimator of a compliance-weighted average of compliance-weighted treatment effects, 
\begin{align*}
    \tau^* &= \sum_{j=1}^J \overline{\alpha}_j \: \tau^*_j,
    &
    \overline{\alpha}_j &= \frac{\E\left[ (\alpha_j(X))^2\right]}{\sum_{j'=1}^J \E\left[ (\alpha_{j'}(X))^2\right]},
    &
    \tau^*_j
    &=
    \frac{\E\left[ (\alpha_j(X))^2 \: \tau_j(X) \right]}{\E\left[ (\alpha_j(X))^2\right]}.
\end{align*}

While we call $\tau^*,\tau^*_j, \tau_j(x)$ treatment effects here, we note that the causal interpretation of such quantities is not straightforward.
Indeed, \citet{Mogstad2019-wa,Mogstad_undated-it} argue that estimating causal effects with multiple instruments requires more careful statements of assumptions and implied estimands, as well as alternative estimation strategies that go beyond simple (weighted) averages of two-stage least-squares estimates.

\section{A General Result for Weighted Estimation}

In this section, we provide a general result on the large-sample distribution of weighted sums, where weights are fitted by $k$-fold cross-fitting, but may converge slower than at the parametric rate.

\begin{lemma}
    \label{lem:weightedclt}
    Consider weights $W_i = \hat{w}_{-j(i)}(X_i)$, where $\hat{w}_{-j}$ are fitted by $k$-fold cross-fitting excluding the $j$-th fold and $j(i)$ denotes the fold of observation $i$.
    Consider some random variable $A$, and let $a(x) = \E[A|X=x]$.
    We assume that there is some function $w(x)$ with finite variance such that
    $\E[(\hat{w}_{-j}(X) - w(X))^2] \rightarrow 0$, where the expectation is over the training sample for $\hat{w}_{-j}$ as well as an independent draw $X$.
    Also assume that $a(X), \Var(A|X) \leq C$ almost surely for some constant $C$.
    Then
    \begin{align*}
        \frac{1}{n} \sum_{i=1}^n W_i A_i
        \cp
        \E[w(X) \: a(X)]
    \end{align*}
    and
    \begin{align*}
        \frac{1}{\sqrt{n}} \left(\sum_{i=1}^n W_i A_i - \sum_{i=1}^n W_i a(X_i)\right)
        &\cd
        \N(0,\E[w^2(X) \Var(A|X)]),
        \\
        \sqrt{n} \left(\frac{1}{n} \sum_{i=1}^n W_i A_i - \frac{1}{k} \sum_{j=1}^k \E[\hat{w}_{-j}(X) \: a(X)|\hat{w}_{-j}]\right)
        &\cd
        \N(0,\underbrace{\Var(w(X) A)}_{\mathclap{
            = \E[w^2(X) \Var(A|X)] + \Var(w(X) a(X))
        }}).
    \end{align*}
\end{lemma}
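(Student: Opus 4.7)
The plan is to decompose each weighted sum into an iid-like sum driven by the limiting weight $w(\cdot)$ plus a residual, and to show that the residual is asymptotically negligible by exploiting the cross-fitting independence structure. The main lever throughout is that for $i\in I_j$, conditional on $\hat{w}_{-j}$, the observation $(X_i,A_i)$ is independent of the trained function (since $\hat{w}_{-j}$ uses no data from fold $j$); this reduces all fold-wise bounds to the unconditional $L^2$-norm $\E[(\hat{w}_{-j}(X)-w(X))^2]$, which vanishes by assumption.

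\paragraph{Consistency and the conditional CLT.}
For consistency, the conditional weak LLN within each fold gives $\frac{1}{|I_j|}\sum_{i\in I_j}\hat{w}_{-j}(X_i)A_i\cp \E[\hat{w}_{-j}(X)a(X)\mid\hat{w}_{-j}]$, and the latter converges in probability to $\E[w(X)a(X)]$ by Cauchy--Schwarz and the $L^2$-convergence of $\hat{w}_{-j}$ (boundedness of $a$ and $\Var(A|X)$ handles the moments). Averaging over the $k$ fixed folds yields the first claim. For the conditional CLT, split $W_i(A_i-a(X_i)) = w(X_i)(A_i-a(X_i)) + (W_i-w(X_i))(A_i-a(X_i))$. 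The first part is an iid sum with variance $\E[w^2(X)\Var(A|X)]$ and obeys the standard Lindeberg--L\'evy CLT. For the residual in fold $j$, conditioning on $\hat{w}_{-j}$ kills the mean (since $\E[A_i-a(X_i)\mid X_i]=0$) and yields an unconditional second moment $\leq C|I_j|\E[(\hat{w}_{-j}(X)-w(X))^2]=o(n)$. Dividing by $\sqrt n$ and summing over the $k$ fixed folds gives $o_p(1)$, which combined with the iid CLT term delivers the second claim.

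\paragraph{Unconditional CLT.}
With equal fold sizes $|I_j|=n/k$, the centered quantity simplifies algebraically to $\frac{1}{\sqrt n}\sum_{j}T_j$ with $T_j:=\sum_{i\in I_j}(\hat{w}_{-j}(X_i)A_i - \E[\hat{w}_{-j}(X)A\mid\hat{w}_{-j}])$, using the tower identity $\E[\hat{w}_{-j}(X)A\mid\hat{w}_{-j}]=\E[\hat{w}_{-j}(X)a(X)\mid\hat{w}_{-j}]$. Write $T_j=T_j^{(1)}+T_j^{(2)}$, where $T_j^{(1)}$ is the sum from the conditional CLT step and $T_j^{(2)}:=\sum_{i\in I_j}(\hat{w}_{-j}(X_i)a(X_i)-\E[\hat{w}_{-j}(X)a(X)\mid\hat{w}_{-j}])$. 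A parallel fold-wise $L^2$ bound (now using boundedness of $a$ and conditioning on $\hat{w}_{-j}$) shows $T_j^{(2)}$ differs from its iid analogue $\sum_{i\in I_j}(w(X_i)a(X_i)-\E[w(X)a(X)])$ by $o_p(\sqrt n)$, since the subtracted conditional expectation exactly cancels the mean of the weight-residual term. Summing over folds, the quantity of interest equals $\frac{1}{\sqrt n}\sum_i (w(X_i)A_i - \E[w(X)A]) + o_p(1)$, so the iid CLT yields $\N(0,\Var(w(X)A))$. The law of total variance gives $\Var(w(X)A)=\E[w^2(X)\Var(A|X)]+\Var(w(X)a(X))$ (the two components are uncorrelated because $\E[A-a(X)\mid X]=0$), matching the stated asymptotic variance.

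\paragraph{Main obstacle.}
The chief difficulty is that $\hat{w}_{-j}$ and $\hat{w}_{-j'}$ share training data when $j\neq j'$, so naive joint second-moment bounds on cross-fold products would not obviously vanish. I avoid this entirely by never computing cross-fold moments: each fold's residual is bounded by a purely conditional-on-$\hat{w}_{-j}$ $L^2$ argument, and the $k$ (fixed) folds are aggregated via the triangle inequality. A related subtlety is why mere $L^2$-consistency of the weights suffices at the $\sqrt n$ scale: in the conditional CLT the residual $W_i-w(X_i)$ is paired with mean-zero noise $A_i-a(X_i)$, while in the unconditional CLT the sample-dependent centering $\frac{1}{k}\sum_j\E[\hat{w}_{-j}(X)a(X)\mid\hat{w}_{-j}]$ exactly absorbs the drift of $\hat{w}_{-j}$ away from $w$. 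This is precisely why the lemma's estimand must be sample-dependent when no parametric rate on $\hat{w}_{-j}$ is assumed.
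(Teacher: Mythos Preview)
Your proposal is correct and follows essentially the same route as the paper: both decompose each weighted sum into an iid term with the limiting weight $w(\cdot)$ plus a residual involving $\hat w_{-j}-w$, bound each fold's residual in $L^2$ by conditioning on $\hat w_{-j}$ and invoking $\E[(\hat w_{-j}(X)-w(X))^2]\to 0$, and aggregate the $k$ fixed folds by a triangle/Cauchy--Schwarz inequality rather than attempting any cross-fold moment computation. The only cosmetic difference is that for consistency you pass through a conditional LLN within folds while the paper bounds the $L^2$ distance $\E[(\frac{1}{n}\sum_i(W_i-w(X_i))A_i)^2]$ directly; both arguments are valid and rely on the same moment bound $\E[A^2\mid X]\leq a(X)^2+\Var(A\mid X)\leq 2C$.
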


\begin{proof}[Proof of \autoref{lem:weightedclt}]
Write $\varepsilon = A - a(X)$.
For the first asymptotic Normality result, we find that
\begin{align*}
    \frac{1}{\sqrt{n}} \left(\sum_{i=1}^n W_i A_i - \sum_{i=1}^n W_i a(X_i)\right)
    &=
    \frac{1}{\sqrt{n}} \sum_{i=1}^n W_i \varepsilon_i
    \\
    &= \frac{1}{\sqrt{n}} \sum_{i=1}^n w(X_i) \varepsilon_i
    +  \frac{1}{\sqrt{n}} \sum_{i=1}^n (\hat{w}_{-j(i)}(X_i) - w(X_i)) \varepsilon_i
\end{align*}
where
\begin{align*}
    \frac{1}{\sqrt{n}} \sum_{i=1}^n w(X_i) \varepsilon_i
    \cd \N(0,\E[w^2(X) \Var(A|X)])
\end{align*}
and (ignoring rounding issues for fold sizes)
\begin{align}
    \label{eqn:residualsmall}
    \begin{split}
    &\E\left[
        \left(
            \frac{1}{\sqrt{n}} \sum_{i=1}^n (\hat{w}_{-j(i)}(X_i) - w(X_i)) \varepsilon_i
        \right)^2
    \right]
    =
    \Var\left(\frac{1}{\sqrt{n}} \sum_{i=1}^n (\hat{w}_{-j(i)}(X_i) - w(X_i)) \varepsilon_i
    \right)
    \\
    &\leq
    \frac{k^2}{n} \Var\left( \sum_{i; j(i) = 1} (\hat{w}_{-j}(X_i) - w(X_i)) \varepsilon_i
    \right)
    =
    \frac{k^2}{n} \E\left[\Var\left( \sum_{i; j(i) = 1} (\hat{w}_{-j}(X_i) - w(X_i)) \varepsilon_i
    \middle|\hat{w}_{-j}\right)\right]
    \\
    &=
    \frac{k^2}{n} \frac{n}{k} \E\left[(\hat{w}_{-j}(X) - w(X))^2 \Var(A|X)\right]
    = k \E\left[(\hat{w}_{-j}(X) - w(X))^2 \Var(A|X)\right]
    \\
    &\leq k \: C \: \E\left[(\hat{w}_{-j}(X) - w(X))^2\right]
   \rightarrow 0.
    \end{split}
\end{align}
For the second asymptotic Normality result,
\begin{align*}
    &\sqrt{n} \left(\frac{1}{n} \sum_{i=1}^n W_i A_i - \frac{1}{k} \sum_{j=1}^k \E[\hat{w}_{-j}(X) \: a(X)|\hat{w}_{-j}]\right)
    \\
    &=
    \sqrt{n} \left(\frac{1}{n} \sum_{j=1}^k \sum_{i;j(i)=j} (\hat{w}_{-j}(X_i) A_i - \E[\hat{w}_{-j}(X) \: a(X)|\hat{w}_{-j}])\right)
    \\
    &=
    \frac{1}{\sqrt{n}} \sum_{i=1}^n (w(X) A - \E[w(X) a(X)])
    \\
    &\phantom{=}+ \sum_{j=1}^k \frac{1}{\sqrt{n}}  \sum_{i;j(i)=j} ((\hat{w}_{-j}(X_i) {-} w(X_i)) A_i - \E[(\hat{w}_{-j}(X) {-} w(X)) \: a(X)|\hat{w}_{-j}])
\end{align*}
with 
\begin{align*}
    \frac{1}{\sqrt{n}} \sum_{i=1}^n (w(X) A - \E[w(X) a(X)])
    \cd \N(0, \underbrace{\Var(w(X) A)}_{\mathclap{=\Var(w(X) a(X)) + \E[w^2(X) \Var(A|X)]}})
\end{align*}
and, for all $j$ almost surely,
\begin{align*}
    &\E\left[\frac{1}{\sqrt{n}}
    \sum_{i;j(i)=j}
    ((\hat{w}_{-j}(X_i) {-} w(X_i)) A_i - \E[(\hat{w}_{-j}(X) {-} w(X)) \: a(X)|\hat{w}_{-j}])\middle|\hat{w}_{-j}\right]
    = 0,
    \\
    &\Var\left(\frac{1}{\sqrt{n}}
    \sum_{i;j(i)=j}
    ((\hat{w}_{-j}(X_i) {-} w(X_i)) A_i - \E[(\hat{w}_{-j}(X) {-} w(X)) \: a(X)\middle|\hat{w}_{-j}])|\hat{w}_{-j}\right)
    \\
    &=
    \Var((\hat{w}_{-j}(X) {-} w(X)) A|\hat{w}_{-j}) / k
    \\
    &= \Var((\hat{w}_{-j}(X) {-} w(X)) a(X)|\hat{w}_{-j}) / k + \E[(\hat{w}_{-j}(X) {-} w(X))^2 \Var(A|X)|\hat{w}_{-j}] / k
    \\
    &\leq
    2 C \E[(\hat{w}_{-j}(X) {-} w(X))^2|\hat{w}_{-j}] / k.
\end{align*}
Hence,
\begin{align*}
    \E\left[\left(
        \sum_{j=1}^k \frac{1}{\sqrt{n}}  \sum_{i;j(i)=j} ((\hat{w}_{-j}(X_i) {-} w(X_i)) A_i - \E[(\hat{w}_{-j}(X) {-} w(X)) \: a(X)|\hat{w}_{-j}]
        \right)^2
    \right]
    \rightarrow 0.
\end{align*}

For consistency,
\begin{align*}
    &\E\left[\left(\frac{1}{n} \sum_{i=1}^n (W_i A_i - w(X_i) A_i)\right)^2\right]
    \leq 
    \frac{1}{n} \sum_{i=1}^n  \E[(W_i - w(X_i))^2 A^2_i]
    \\
    &=
    \E[(\hat{w}_{-j}(X) - w(X))^2 A^2]
    \leq 2 C \E[(\hat{w}_{-j}(X) - w(X))^2]
    \rightarrow 0,
\end{align*}
so $\frac{1}{n} \sum_{i=1}^n W_i A_i = \frac{1}{n}\sum_{i=1}^n w(X_i) A_i +  \frac{1}{n} \sum_{i=1}^n (W_i A_i - w(X_i) A_i) \stackrel{\mathcal{L}_2}{\rightarrow} \E[w(X) A] = \E[w(X) a(X)]$.
\end{proof}

\section{Proofs}

\begin{proof}[Proof of \autoref{prop:binarycharacterization}]
    Follows from the general result in \autoref{prop:generalcharacterization} with $g(Z) = \frac{Z-p}{\sqrt{p(1-p)}}$.
\end{proof}

\begin{proof}[Proof of \autoref{prop:optimalweight}]
    Writing $\alpha(X) = \E[D|Z=1,X] - \E[D|Z=0,X]$, we have that
    \begin{align*}
        &\E[(D - w(X) \: (Z-p))^2]
        \\
        &= \E^2[D] + \Var(D - w(X) \: (Z-p))
        \\
        &= \E^2[D] + \E[\Var(D|Z,X)] + \Var(\E[D|X,Z] - w(X) \: (Z-p))]
        \\
        &= 
        \text{const.}
        + \Var(\E[D|X])
        + \E[\Var(\E[D|X,Z] - w(X) \: (Z-p)|X)]
        \\
        &=
        \text{const.}
        + p (1-p) \E[((\E[D|X,Z=1] - w(X) \: (1-p)) - (\E[D|X,Z=0] + w(X) \: p))^2]
        \\
        &
        = 
        \text{const.} + p(1-p) \E\left[(\alpha(X) - w(X))^2 \right],
    \end{align*}
    which is minimal for $w(X) = \alpha(X)$.
\end{proof}

\begin{proof}[Proof of \autoref{prop:controlequivalence}]
    This follows from equivalent in the case of estimated weights in \autoref{prop:estimatedconsistent}.
\end{proof}

\begin{proof}[Proofs of \autoref{prop:weightedlate} and \autoref{prop:asymptotic}]
    Follow from the general result in \autoref{prop:estimatedconsistent} and \autoref{prop:estimatedasymptotic} that allow weights to be estimated.
\end{proof}

\begin{proof}[Proof of \autoref{prop:asymptoticlocal}]
    Along this asymptotic sequence,
    by \autoref{prop:asymptotic}
    \begin{align*}
        \sqrt{n} (\hat{\tau}_w - \tau)
        =
        \sqrt{n} (\hat{\tau}_w - \tau_w)
        +
        \underbrace{\sqrt{n} (\tau_w - \tau)}_{\mathclap{
            \substack{
            = 
            \frac{\E[\alpha(X) \: w(X) \: \sqrt{n}\tau(X)]}{\E[\alpha(X) \: w(X)]}
            -
            \frac{\E[\alpha(X) \: \sqrt{n}\tau(X)]}{\E[\alpha(X)]}
            \\
            =
            \frac{\E[\alpha(X) \: w(X) \: \mu(X)] \E[\alpha(X)] - \E[\alpha(X) \: \mu(X)] \E[\alpha(X) \: w(X)]}{\E[\alpha(X)] \E[\alpha(X) \: w(X)]}
            \\
            =
            \frac{\frac{\E[\alpha(X) \: w(X) \: \mu(X)]}{\E[\alpha(X)]} - \frac{\E[\alpha(X) \: \mu(X)]}{\E[\alpha(X)]} \frac{\E[\alpha(X) \: w(X)]}{\E[\alpha(X)]}}{\frac{\E[\alpha(X) \: w(X)]}{\E[\alpha(X)]}}
            }
        }}
        = \sqrt{n} (\hat{\tau}_w - \tau_w) + B_w
        \cd \N(B_w,V_w).
    \end{align*}
    Futhermore,
    considering residuals based on
    $Y - \tau_w D$ or $Y - \tau(X) D$ for the variance is both asymptotically equivalent since $\tau(X)$ and $\tau_w$ are local to zero, so the resulting variance $V_w$ does not depend on treatment effects.
\end{proof}

\begin{proof}[Proof of \autoref{cor:dominance}]
    Plugging into \autoref{eqn:asymptloss} (which is immediate from \autoref{prop:asymptoticlocal}),
    the estimator $\tau(\lambda)$
    has asymptotic loss
    \begin{align*}
        L(\lambda)
        =
        \frac{\lambda^2 \Cov^2(\mu(X), \alpha(X)|D(1)>D(0)) + \frac{\E[((1-\lambda) \: \E[\alpha(X)|D(1)>D(0)]  + \lambda \: \alpha(X))^2 \sigma^2(X)]}{p(1-p) \P^2(D(1)>D(0))}}{\E^2[\alpha(X)|D(1)>D(0)]}
    \end{align*}
    with
    \begin{align*}
        \frac{\partial}{\partial \lambda} L(\lambda) |_{\lambda = 0}
        =
        2 \frac{\E[(\alpha(X) - \E[\alpha(X)|D(1)>D(0)])\E[\alpha(X)|D(1)>D(0)] \sigma^2(X)]}{\E^2[\alpha(X)|D(1)>D(0)] p(1-p) \P^2(D(1)>D(0))}
    \end{align*}
    which is negative whenver
    \begin{align*}
        &
        &
        &\E[\alpha(X) \sigma^2(X)] - \E[\alpha(X)|D(1)>D(0)] \E[\sigma^2(X)] < 0
        \\
        &\Leftrightarrow
        &
        &\E\left[\frac{\alpha(X)}{\E[\alpha(X)]} \sigma^2(X)\right] - \E\left[\left(\frac{\alpha(X)}{\E[\alpha(X)]}\right)^2\right] \E[\sigma^2(X)] < 0
        \\
        &\Leftrightarrow
        &
        &\Cov\left(\frac{\alpha(X)}{\E[\alpha(X)]},\sigma^2(X)\right) + \E[\sigma^2(X)] - \Var\left(\frac{\alpha(X)}{\E[\alpha(X)]}\right) \E[\sigma^2(X)] - \E[\sigma^2(X)] < 0,
    \end{align*}
    in which case loss is lower for small $\lambda > 0$ relative to $\lambda=0$.
\end{proof}

\begin{proof}[Proof of \autoref{prop:generalization}]
    Follow from the general result in \autoref{prop:estimatedconsistent} and \autoref{prop:estimatedasymptotic} that allow weights to be estimated and does not assume binary instruments.
\end{proof}

\begin{proof}[Proof of \autoref{prop:generalcharacterization}]
    Since $\E[\varepsilon]=0$, $\Cov(h(Z,X), \varepsilon)=\E[h(Z,X) \varepsilon]$.
    Since also $Z$ independent of $(X,\varepsilon)$,
    $\E[h(Z,X) \varepsilon] = \E[\E[h(Z,X)|X,\varepsilon] \varepsilon] = \E[\E[h(Z,X)|X] \varepsilon]$.
    For this equation to hold for all conformal distributions of $(X,\varepsilon)$, including $\varepsilon = \E[h(Z,X)|X]$, we must have
    \begin{align}
        \label{eqn:conditionalmoment}
        \E[h(Z,X)|X] \equiv 0.
    \end{align}
    For $\Z = \{z_0,\ldots,z_J\}$,
    we can express
    \begin{align*}
        h(Z,X)
        =
        \sum_{j=0}^J \Ind_{Z=z_j} h_j(X).
    \end{align*}
    
    By \autoref{eqn:conditionalmoment} and independence of $Z$,
    \begin{align*}
        0
        =
        \E[h(Z,X)|X] 
        =
        \sum_{j=0}^J \P(Z=z_j) \: h_j(X).
    \end{align*}
    Thus,
    \begin{align*}
        h(Z,X)
        &=
        h(Z,X) - \E[h(Z,X)|X] \\
        &=
        \sum_{z \in \Z } (\Ind_{Z=z} - \P(Z=z)) h_z(X) \\
        &=
        \sum_{z \in \Z } (\Ind_{Z=z} - \P(Z=z)) (h_z(X) - h_{z_0}(X)) +  \underbrace{\sum_{z \in \Z } (\Ind_{Z=z} - \P(Z=z)) }_{=1-1=0} h_{z_0}(X)\\
        &=
        \sum_{z \in \Z \setminus \{z_0\}} (\Ind_{Z=z} - \P(Z=z)) (h_z(X) - h_{z_0}(X)).
    \end{align*}
    For the specific, orthonormal choice of basis, assume wlog that $z_j = j$ for all $j$, and note that
    \begin{align*}
        &\Ind_{z=j} - \P(Z=j)
        \\
        &=
        \Ind_{z=j} - \P(Z=j | Z \leq j) \: \Ind_{z \leq j}
        + \P(Z=j | Z \leq j) \: (1 - \Ind_{z > j}) - \P(Z=j)
        \\
        &=
        \Ind_{z=j} - \P(Z=j | Z \leq j) \: \Ind_{z \leq j}
        +  \P(Z=j) \frac{1 - \Ind_{z > j} - \P(Z \leq j)}{\P(Z \leq j)}
        \\
        &=
        \Ind_{z=j} - \P(Z=j | Z \leq j) \: \Ind_{z \leq j}
        -  \frac{\P(Z=j)}{\P(Z \leq j)} (\Ind_{z > j} - \P(Z > j))
        \\
        &= \Ind_{z=j} - \P(Z=j | Z \leq j) \: \Ind_{z \leq j} - \sum_{j' = j+1}^J (\Ind_{z = j'} - \P(Z  = j')).
    \end{align*}
    By induction over $j$ from $j=J$ to $1$, we can express all $\Ind_{z=j} - \P(Z=j)$ by linear combinations of $\Ind_{z=j} - \P(Z=j | Z \leq j)$ (with $j > 0$).
    Rescaling appropriately, we thus can find $w_j(X)$ with
    \begin{align*}
        h(Z,X)
        =
        \sum_{z \in \Z \setminus \{z_0\}} (\Ind_{Z=z} - \P(Z=z)) (h_z(X) - h_{z_0}(X))
        =
        \sum_{j=1}^J g_j(Z) w_j(X)
    \end{align*}
    with the above choice of $g_j$.
    Finally, note that independence of $Z$ and $(X,\varepsilon)$ and $\E[\Ind_{Z=z} - \P(Z=z)] = 0$ implies that $\Cov(h(Z,X), \varepsilon)=\E[h(Z,X) \varepsilon] = 0$ for all conformal distributions.
\end{proof}

\begin{proof}[Proof of \autoref{prop:estimatedconsistent} and \autoref{prop:estimatedasymptotic}]

    For equivalence, we first consider the case of controlling just for the weight $W = \hat{w}(X)$.
    Let
    \begin{align*}
        \dot{Y} &= Y - \sE[Y] - \frac{\sCov(Y,W)}{\sVar(W)} (W - \sE[W])
        \\
        \dot{D} &= D - \sE[D] - \frac{\sCov(D,W)}{\sVar(W)} (W - \sE[W])
    \end{align*}
    denote the residuals of $Y$ and $D$ after linearly controlling for $w(X)$ (and an intercept).
    Then
    \begin{align}
        \label{eqn:doubleexp}
        \sE[\dot{Y}] = \sE[W \: \dot{Y}] = \sE[\dot{D}] = \sE[W \: \dot{D}] = 0
    \end{align}
    and thus
    \begin{align*}
        \hat{\tau}_{\hat{w}} = \frac{\sCov^{\hat{w}}(\dot{Y},Z)}{\sCov^{\hat{w}}(\dot{D},Z)}
        =
        \frac{\sE[W \dot{Y} Z]}{\sE[W \dot{D} Z]}
        =
        \frac{\sCov(\dot{Y},W Z)}{\sCov(\dot{D},W Z)}
        =
        \hat{\tau}^{\hat{h}}.
    \end{align*}
    This yields equivalence without additional controls.
    With additional controls, the same argument extends by Frisch--Waugh--Lovell; indeed, we can first residualize the outcome and all other control variables $C_1,\ldots,C_k$ with respect to $W$  (and a constant) to obtain $\dot{Y},\dot{D}, \dot{C}_1,\ldots,\dot{C}_k$; then, residualized outcomes are
    \begin{align*}
        \ddot{Y} &= \dot{Y} - \hat{\phi}_2' \dot{C},
        &
        \ddot{D} &= \dot{D} - \hat{\phi}_1' \dot{C},
    \end{align*}
    and the above results extend, using linearity and \autoref{eqn:doubleexp} for $\dot{Y},\dot{D}, \dot{C}_1,\ldots,\dot{C}_k$.

    For the asymptotic distribution,
    given equivalence we can focus on the representation in terms of simple averages, for which, writing $\alpha'(x) = p (1-p) \alpha(x)$
    \begin{align}
        \label{eqn:bigdiff}
        \begin{split}
        \frac{\sE[W \ddot{Y} Z]}{\sE[W \ddot{D} Z]}
        -
        \tau^{\text{CF}}_{\hat{w}}
        &=
        \frac{\sum_{j=1}^k \sE[W \ddot{Y} Z|j(i) {=} j]}{\sum_{j=1}^k \sE[W \ddot{D} Z|j(i) {=} j]} - \frac{\sum_{j=1}^k \E[\alpha'(X) \: \hat{w}_{-j}(X) \: \tau(X) |\hat{w}_{-j}]}{\sum_{j=1}^k \E[\alpha'(X) \: \hat{w}_{-j}(X)  |\hat{w}_{-j}]}
        \\
        &=
        \frac{\sum_{j=1}^k (\sE[W \ddot{Y} Z|j(i) {=} j] - \E[\alpha'(X) \: \hat{w}_{-j}(X) \: \tau(X) |\hat{w}_{-j}])}{\sum_{j=1}^k \sE[W \ddot{D} Z|j(i) {=} j]}
        \\
        &\peq   
        - 
        \frac{\sum_{j=1}^k (\sE[W \ddot{D} Z|j(i) {=} j] - \E[\alpha'(X) \: \hat{w}_{-j}(X)|\hat{w}_{-j}])}{\sum_{j=1}^k \sE[W \ddot{D} Z|j(i) {=} j]}
        \tau^{\text{CF}}_{\hat{w}}
        \\
        &=
        \frac{\sum_{j=1}^k (\sE[W (\ddot{Y} {-} \ddot{D} \tau_w) Z|j(i) {=} j] - \E[\alpha'(X) \: \hat{w}_{-j}(X) \: (\tau(X) {-} \tau_w) |\hat{w}_{-j}])}{\sum_{j=1}^k \sE[W \ddot{D} Z|j(i) {=} j]}
        \\
        &\peq   
        - 
        \frac{\sum_{j=1}^k (\sE[W \ddot{D} Z|j(i) {=} j] - \E[\alpha'(X) \: \hat{w}_{-j}(X)|\hat{w}_{-j}])}{\sum_{j=1}^k \sE[W \ddot{D} Z|j(i) {=} j]}
        (\tau^{\text{CF}}_{\hat{w}} - \tau_w)
        \end{split}
    \end{align}
    Write
    \begin{align*}
        Y^* &= Y - \psi_2 - w(X) \: \omega_2 - C'\varphi_2,
        &
        D^* &= D - \psi_1 - w(X) \: \omega_1 - C'\varphi_1
    \end{align*}
    for the residuals in the population regression of $Y$ and $D$ on a constant, the weight $w(X)$, and an additional, fixed-dimension control vector $C=c(X)$ of controls,
    and 
    \begin{align*}
        \ddot{Y} &= Y - \hat{\psi}_2 - W \: \hat{\omega}_2 - C'\hat{\varphi}_2,
        &
        \ddot{D} &= D - \hat{\psi}_1 - W \: \hat{\omega}_1 - C'\hat{\varphi}_1
    \end{align*}
    for the empirical analogue (using estimated weights).
    By \autoref{lem:weightedclt} applied for $W_i$ and $W_i'=W_i^2$ (for the covariances involving $w(X)$) and the law of large numbers (for other covariances),
    where we assume for simplicity that the $W_i$ are bounded and that the population regression is well conditioned,
    \begin{align}
        \label{eqn:olsconsistent}
        \begin{pmatrix}
            \hat{\psi}_2 \\
            \hat{\omega}_2 \\
            \hat{\varphi}_2
        \end{pmatrix}
        & \cp
        \begin{pmatrix}
            \psi_2 \\
            \omega_2 \\
            \varphi_2
        \end{pmatrix},
        &
        \begin{pmatrix}
            \hat{\psi}_1 \\
            \hat{\omega}_1 \\
            \hat{\varphi}_1
        \end{pmatrix}
        & \cp
        \begin{pmatrix}
            \psi_1 \\
            \omega_1 \\
            \varphi_1
        \end{pmatrix}.
    \end{align}
    As a preliminary calculation (for $p = \E[Z]$),
    \begin{align*}
        &\sqrt{n} (\overbrace{\sE[W \ddot{Y} Z]}^{\mathclap{=\sE[W \ddot{Y} (Z-p)]}}
        - \sE[W Y^* (Z-p)])
        \\
        &=
        - \sqrt{n} \sE[W ((\hat{\psi}_2 - \psi_2) + C'(\hat{\varphi}_2 - \varphi_2)
        + W (\hat{\omega}_2 - \omega_2)
        + (W - w(X)) \omega_2) (Z-p) ]
        \\
        &=
        - \sqrt{n} ((\hat{\psi}_2 - \psi_2) \sE[W (Z-p)] + (\hat{\varphi}_2 - \varphi_2)'\sE[c(X) W (Z-p)])
        \\
        &\peq
        -
        (\hat{\omega}_2 - \omega_2)
        \sqrt{n} \sE[W^2 (Z-p)]
        -  \omega_2 \sqrt{n} \underbrace{\sE[W (W - w(X)) (Z-p) ]}_{\mathclap{
            =
            \sE[(W^2 - w^2(X)) (Z-p) ]
            -
            \sE[(W - w(X)) (Z-p) w(X) ]
        }}.
    \end{align*}
    By independence of $Z$ and $X$ and \autoref{lem:weightedclt} applied to $W$ and $W'=W^2$,
    \begin{align*}
        &\sqrt{n} \sE[W (Z-p)],
        &
        &\sqrt{n}\sE[W c(X) (Z-p)],
        &
        &\sqrt{n} \sE[W^2 (Z-p)]
    \end{align*}
    are all asymptotically Normal with zero mean and finite variance.
    By \autoref{eqn:residualsmall} in the proof of \autoref{lem:weightedclt}, applied for $W$ and $W'=W^2$ as well as mean-zero residuals $Z-p$ and $(Z-p)w(X)$, respectively,
    \begin{align*}
        \sqrt{n} \sE[(W^2 - w^2(X)) (Z-p) ]
        &\cp 0,
        &
        \sqrt{n} \sE[(W - w(X)) (Z-p) w(X) ]
        &\cp 0.
    \end{align*}
    With \autoref{eqn:olsconsistent} if follows that $\sqrt{n} (\sE[W \ddot{Y} Z]
    - \sE[W Y^* (Z-p)]) \cp 0$; for $D$ instead of $Y$ similarly $\sqrt{n} (\sE[W \ddot{D} Z]
    - \sE[W D^* (Z-p)]) \cp 0$.
    Using this asymptotic equivalence, $\hat{\tau}_{\hat{w}}=\frac{\sE[W \ddot{Y} Z]}{\sE[W \ddot{D} Z]}$ is consistent by \autoref{lem:weightedclt} (applied to numerator and to denominator separately) for both $\tau_w$ and for $\tau^{\text{CF}}_{\hat{w}}$,
    so we also have that
    \begin{align*}
        \tau^{\text{CF}}_{\hat{w}} \cp \tau_w.
    \end{align*}
    Hence,
    writing
    \begin{align*}
        \varepsilon
        =
        Y^* - D^* \tau_w
    \end{align*}
    and noting that
    $\E[\varepsilon (Z-p)|X=x] = \alpha'(X) (\tau(X) - \tau_w)]$ and $\E[D^* (Z-p)|X=x] = \alpha'(X)$,
    by \autoref{lem:weightedclt} we find that
    \begin{align*}
        \sqrt{n} \sum_{j=1}^k {(\sE[W (\ddot{Y} {-} \ddot{D} \tau_w) Z|j(i) {=} j]
        \atop
        - \E[\alpha'(X) \: \hat{w}_{-j}(X) \: (\tau(X) {-} \tau_w) |\hat{w}_{-j}])}
        &\cd \N(0,\Var(w(X) \varepsilon(Z-p)))
        \\
        \sum_{j=1}^k \sE[W \ddot{D} Z|j(i) {=} j]
        &\cp \E[w(X) \alpha'(X)]
        \\
        \sqrt{n} \frac{\sum_{j=1}^k (\sE[W \ddot{D} Z|j(i) {=} j] - \E[\alpha'(X) \: \hat{w}_{-j}(X)|\hat{w}_{-j}])}{\sum_{j=1}^k \sE[W \ddot{D} Z|j(i) {=} j]}
        (\tau^{\text{CF}}_{\hat{w}} - \tau_w)
        &\cp 0.
    \end{align*}
    By \autoref{eqn:bigdiff},
    it follows that
    \begin{align*}
        \sqrt{n} (\hat{\tau}_{\hat{w}} - \tau_{\hat{w}}^{\text{CF}})
        \cd \N\left(0,\frac{\Var(w(X) \varepsilon(Z-p))}{p^2 (1-p)^2 \E^2[w(X) \alpha(X)]}\right)
    \end{align*}
    where
    \begin{align*}
        \Var(w(X) \varepsilon (Z-p))
        &= \E[\Var(w(X) \varepsilon (Z-p)|Z)]
        \\
        &=p(1-p) ((1-p)\E[w^2(X) \varepsilon^2|Z=1] + p \E[w^2(X) \varepsilon^2|Z=0])\\
        &=p(1-p)
            \E[w^2(X) ((1-p)\E[\varepsilon^2|X,Z=1]+ p\E[\varepsilon^2|X,Z=0]) ]
        .\qedhere
    \end{align*}
\end{proof}

\begin{proof}[Proof of \autoref{prop:robustse}]
    Consistency follows as in the proof of \autoref{prop:estimatedasymptotic} from consistency of the OLS coefficients for the regressions of $Y$ and $D$ on a constant, the weight, and controls, from \autoref{lem:weightedclt} applied for powers of the weight (where we assume that the weight is bounded), and from the conventional law of large numbers (for the empirical variance of the instrument).
\end{proof}

\section{Auxiliary Derivations for Simulation}
\label{sec:sim_deriv}
For the simulation, we assume that
\begin{align*}
    &\begin{pmatrix}
        \delta \\
        \varepsilon \\
        \tau
    \end{pmatrix}
    \sim
    \N\left(
        \0,
        \begin{pmatrix} 
            1 & \rho_{\delta\varepsilon} & \rho_{\delta\tau}\sigma_{\tau} \\ 
            \rho_{\delta\varepsilon} & 1  & \rho_{\tau\varepsilon}\sigma_{\tau}\\ \rho_{\delta\tau}\sigma_{\tau}  & \rho_{\tau\varepsilon}\sigma_{\tau} & \sigma_{\tau}^2
        \end{pmatrix}
    \right)
\end{align*}
and
\begin{align*}
    D(1) &= \Ind(\Phi^{-1}(\delta) > S_{NT}),
    &
    D(0) &= \Ind(\Phi^{-1}(\delta) > 1 - S_{AT}),
    \\
    Y(d) &= d \: \tau + (1+\zeta\delta) \varepsilon,
    &
    X &= \delta + \eta
\end{align*}
where $\eta \sim \N(0,\sigma_{\eta}^2)$ independently of $\delta,\varepsilon,\tau$, and $\Phi(\cdot) = \P(\N(0,1) \leq \cdot)$.
Further, since $X | \delta \sim \N(\delta,\sigma^2_\eta)$ and $\delta \sim \N(0,1)$, we have that
\begin{align*}
    \delta | X \sim \N\left(
        \frac{X}{1 + \sigma^2_\eta},
        \frac{\sigma^2_\eta}{1+ \sigma^2_\eta}
    \right).
\end{align*}
Also, $\E[\tau|\delta] = \rho_{\delta \tau} \sigma_\tau \delta$, so
\begin{align*}
    \tau(x)
    &=
    \E[Y(1) - Y(0) |X=x]
    =
    \E[\tau |X=x]
    =
    \E[\E[\tau|X,\delta] |X=x] \\
    &= \E[\E[\tau|\delta,\eta] |X=x]
    = \E[\E[\tau|\delta] |X=x]
    = \frac{\rho_{\delta \tau} \sigma_\tau x}{1 + \sigma^2_\eta},
    \\
    \alpha(x)
    &=
    \P(D(1) > D(0) | X=x)
    =
    \P(\Phi(S_{NT}) < \delta \leq \Phi(1-S_{AT})|X=x)
    \\
    &=
    \P\left(\Phi(S_{NT}) < 
    \N\left(
        \frac{x}{1 + \sigma^2_\eta},
        \frac{\sigma^2_\eta}{1+ \sigma^2_\eta}
    \right)
    \leq \Phi(1-S_{AT})\right)
    \\
    &=
    \Phi\left( 
        \sqrt{1+ \sigma^{-2}_\eta} \: \Phi^{-1}(1-S_{AT}) - \frac{x}{\sqrt{\sigma_\eta^2 + \sigma^4_\eta}}
        \right)
    -
    \Phi\left( 
        \sqrt{1+ \sigma^{-2}_\eta} \: \Phi^{-1}(S_{NT}) - \frac{x}{\sqrt{\sigma_\eta^2 + \sigma^4_\eta}}
        \right).
\end{align*}
    \end{vappendix}

\end{document}